\newcommand{\mathfunc}[1]{\ensuremath{\mathit{#1}}\xspace}
\newcommand{\mathapply}[2]{\ensuremath{\mathit{#1}(#2)}\xspace}
\newcommand{\afunc}[1]{\textsc{#1}}
\newcommand{\acall}[2]{\ensuremath{\afunc{#1}(#2)}}
\newcommand{\mkmcal}[1]{\ensuremath{\mathcal{#1}}\xspace}
\newcounter{tmpthm}
\newcommand{\G}{\mkmcal{G}}
\newcommand{\Lst}{\mkmcal{L}}
\newcommand{\T}{\mkmcal{T}}
\newcommand{\C}{\mkmcal{C}}
\newcommand{\X}{\mkmcal{X}}
\newcommand{\D}{\mkmcal{D}}
\newcommand{\E}{\mkmcal{E}}
\newcommand{\A}{\mkmcal{A}}
\renewcommand{\S}{\mkmcal{S}}
\renewcommand{\H}{\mkmcal{H}}
\newcommand{\RG}{\mkmcal{R}}
\renewcommand{\P}{\mkmcal{P}}
\newcommand{\taun}{n\tau}
\newcommand{\mkmbb}[1]{\ensuremath{\mathbb{#1}}\xspace}
\newcommand{\R}{\mkmbb{R}}
\newcommand{\PP}{\mkmbb{P}}
\newcommand{\eps}{\varepsilon}
\newcommand{\etal}{et al.\xspace}
\newcommand{\codots}{,\hspace{-1pt}.\hspace{.5pt}.\hspace{.5pt},}
\newcommand{\duration}{\ensuremath{\mathit{duration}}\xspace}
\newcommand{\Time}{\mkmbb{T}}
\newcommand{\below}[1]{\mathapply{below}{#1}}
\newcommand{\gstart}[1]{\mathapply{start}{#1}}
\DeclareMathOperator{\symd}{\Delta}
\DeclareMathOperator{\polylog}{polylog}
\theoremstyle{plain}
\newtheorem{theorem} {Theorem}
\newtheorem{lemma}[theorem] {Lemma}
\newtheorem{remark} {Remark}
\newtheorem{observation}[theorem]{Observation}
\newcommand{\myparNS}[1]{\noindent{\sffamily\bfseries #1.}}
\newcommand{\mypar}[1]{\medskip\myparNS{#1}}
\title{Grouping Time-varying Data\protect\\ for Interactive Exploration}
\author{
  Arthur van Goethem\thanks{
    Department of Mathematics and Computer Science, Eindhoven University of Technology
    \texttt{\{a.i.v.goethem,b.speckmann\}@tue.nl}
  }
  \and
  Marc van Kreveld\thanks{
    Department of Information and Computing Sciences, Utrecht University,
    \texttt{\{m.j.vankreveld, m.loffler\}@uu.nl}
  }
  \and
  Maarten L\"offler\footnotemark[2]
  \and
  Bettina Speckmann\footnotemark[1]
  \and
  Frank Staals\thanks{
    MADALGO, Aarhus University,
    \texttt{f.staals@cs.au.dk}
  }
}
\begin{document}

\maketitle

\begin{abstract}
%
We present algorithms and data structures that support the interactive
analysis of the grouping structure of one-, two-, or higher-dimensional time-varying data
while varying all defining parameters. Grouping structures characterise important patterns in the temporal evaluation of sets of time-varying data. We follow Buchin~\etal~\cite{grouping2015} who define groups using three parameters: group-size, group-duration, and
inter-entity distance.  We give upper and lower bounds on the number of maximal groups over all
parameter values, and show how to compute them efficiently.  Furthermore, we
describe data structures that can report changes in the set of
maximal groups in an output-sensitive manner.  Our results hold in $\R^d$ for fixed $d$.
 \end{abstract}


\section{Introduction}
\label{sec:Introduction}

Time-varying phenomena are ubiquitous and hence the rapid increase in available tracking, recording, and storing technologies has led to an explosive growth in time-varying data. Such data comes in various forms: time-series (tracking a one-dimensional variable such as stock prices), two- or higher-dimensional trajectories (tracking moving objects such as animals, cars, or sport players), or ensembles (sets of model runs under varying initial conditions for one-dimensional variables such as temperature or rain fall), to name a few. Efficient tools to extract information from time-varying data are needed in a variety of applications, such as predicting traffic flow~\cite{lltx-dftf-10},
understanding animal movement~\cite{BovetB88}, coaching sports teams~\cite{fujimura2005dominating}, or forecasting the weather~\cite{Stohl1998947}. Consequently, recent years have seen a flurry of algorithmic methods to analyse time-varying data which can, for example, identify important geographical locations from a set of trajectories~\cite{BenkertDGW10,hotspots2013}, determine good average representations~\cite{bbklsww-mt-12}, or find patterns, such as groups traveling together~\cite{grouping2015,gudmundsson2007efficient,geogrouping2015}.

Most, if not all, of these algorithms use several parameters to model the applied problem at hand. The assumption is that the domain scientists, who are the users of the algorithm, know from years of experience which parameter values to use in their analysis. However, in many cases this assumption is not valid. Domain scientists do \emph{not} always know the correct parameter settings and in fact need algorithmic support to interactively explore their data in, for example, a visual analytics system~\cite{andrienko2007visualanalytics,keim2008}.

We present algorithms and data structures that support the interactive analysis of the grouping structure of one-, two-, or higher-dimensional time-varying data while varying all defining parameters. Grouping structures (which track the formation and dissolution of groups) characterise important patterns in the temporal evaluation of sets of time-varying data. Classic examples are herds of animals or groups of people. But also for one-dimensional ensembles grouping is meaningful, for example, when detecting trends in weather models~\cite{mirgazar2014boxplotensembles}.

Buchin~\etal~\cite{grouping2015} proposed a grouping structure for sets of moving entities. Their definition was later extended by Kostitsyna~\etal~\cite{geogrouping2015} to geodesic distances. In this paper we use the same trajectory grouping structure. Our contributions are data structures and query algorithms that allow the parameters of the grouping structure to vary interactively and hence make it suitable for explorative analysis of sets of time-varying data. Below we first briefly review the definitions of Buchin~\etal~\cite{grouping2015} and then state our contributions in detail.

\mypar{Trajectory grouping structure~\cite{grouping2015}} Let \X be a set of
$n$ entities moving in $\R^d$ and let \Time denote time. The entities trace
trajectories in $\Time \times \R^d$. We assume that each individual trajectory
is piecewise linear and consists of at most $\tau$ vertices.  Two entities $a$
and $b$ are \emph{$\eps$-connected} if there is a chain of entities
$a=c_1\codots c_k=b$ such that for any pair of consecutive entities $c_i$ and
$c_{i+1}$ the distance is at most $\eps$. A set $G$ is $\eps$-connected, if for
any pair $a,b \in G$, the entities are $\eps$-connected (possibly using
entities not in $G$). Given parameters $m$, $\eps$, and $\delta$, a set of
entities $G$ is an \emph{$(m,\eps,\delta)$-group} during time interval $I$ if
(and only if) (i) $G$ has size at least $m$, (ii) $\duration(I) \geq \delta$,
and (iii) $G$ is \emph{$\eps$-connected} at any time $t \in I$.  An
$(m,\eps,\delta)$-group $(G,I)$ is \emph{maximal} if $G$ is maximal in size or
$I$ is maximal in duration, that is, if there is no group $H \supset G$ that is
also $\eps$-connected during $I$, and no interval $J \supset I$ such that $G$
is $\eps$-connected during $J$.

\mypar{Results and Organization} We want to create a data structure \D that represents the grouping structure, that is, its maximal groups, while allowing us to efficiently change the parameters. 
As we show below, the complexity of the problem is already fully apparent for one-dimensional time-varying data. Hence we restrict our description to $\R^1$ in Sections 2--4 and then explain in Section~\ref{sec:Higher_Dimensions} how to extend our results to higher dimensions.

If all three parameters $m$, $\eps$, and $\delta$ can vary independently the question arises what constitutes a meaningful maximal group. Consider a maximal $(m,\eps,\delta)$-group $(G,I)$.  If we slightly increase
$\eps$ to $\eps'$, and consider a slightly longer time interval $I'
\supseteq I$ then $(G,I')$ is a maximal $(m,\eps',\delta)$-group.  Intuitively,
 these groups $(G,I)$ and $(G,I')$ are the same. Thus, we
are interested only in (maximal) groups that are ``combinatorially
different''. Note that the set of entities $G$ may also be a maximal
$(m,\eps,\delta)$-group during a time interval $J$ completely different from
$I$, we also wish to consider $(G,I)$ and $(G,J)$ to be combinatorially
different groups. In Section~\ref{sec:Groups} we formally define when two
(maximal) $(m,\eps,\delta)$-groups are (combinatorially) different. We prove that there are at most $O(|\A|n^2)$ such groups, where \A is the arrangement of the trajectories in
$\Time \times \R^1$, and $|\A|$ is its complexity. We also argue that the number of maximal groups may be as large as $\Omega(\tau n^3)$, even for fixed parameters $m$, $\eps$, and
$\delta$ and in $\R^1$. This significantly strengthens the lower bound of Buchin~\etal~\cite{grouping2015}.

In Section~\ref{sec:algo} we
present an $O(|\A|n^2 \log^2 n)$ time algorithm to compute all combinatorially
different maximal groups. In Section~\ref{sec:data_structures} we describe a
data structure that allows us to efficiently obtain all groups for a given set
of parameter values. Furthermore we also describe data structures for the
interactive exploration of the data. Specifically, given the set of maximal
$(m,\eps,\delta)$-groups we want to change one or more of the parameters and
efficiently report only those maximal groups which either ceased to be a
maximal group or became a maximal group. That is, our data structures can
answer so-called \emph{symmetric-difference queries} which are gaining in
importance as part of interactive analysis
systems~\cite{eppstein2013setdifference}. As mentioned above, in
Section~\ref{sec:Higher_Dimensions} we extend our data structures and
algorithms to $\R^d$, for fixed~$d$. 

\section{Combinatorially Different Maximal Groups}
\label{sec:Groups}

We consider entities moving in $\R^1$, hence the trajectories form an
arrangement \A in $\Time \times \R^1$. We assume that
no three pairs of entities have equal distance at the same time.
Consider the four-dimensional \emph{parameter space} \PP with axes time, size,
distance, and duration. A set of entities $G$ defines a region $A_G$ in this
space in which it is \emph{alive}: a point $p=(p_t,p_m,p_\eps,p_\delta)=(t,m,\eps,\delta)$ lies in $A_G$ if
and only if $G$ is a $(m,\eps,\delta)$-group at time $t$. We use these
regions to define when groups are combinatorially different. First (Section~\ref{sub:eps_Groups}) we fix $m=1$ and $\delta = 0$ and define and count the
number of combinatorially different maximal $(1,\eps,0)$-groups, over all
choices of parameter $\eps$. We then extend our results to include other values
of $\delta$ and $m$ in Section~\ref{sub:The_Number_of_Distinct_Maximal_Groups,_over_all_Parameters}.

\subsection{The Number of Distinct Maximal $(1,\eps,0)$-Groups, over
  all $\eps$}
\label{sub:eps_Groups}


Consider the $(t,\eps)$-plane in \PP through $\delta = 0$ and $m=1$. The
intersection of all regions $A_G$ with this plane give us the points $(t,\eps)$
for which $G$ is a $(1,\eps,0)$-group. Note that $G$ is a
$(1,\eps,0)$-group at time $t$ if and only if the set $G$ is $\eps$-connected
at time $t$. Hence the region $A_G$, restricted to this plane, corresponds to
the set of points $(t,\eps)$ for which $G$ is $\eps$-connected. $A_G$, restricted to this plane, is simply connected. Furthermore, as
the distance between any pair of entities moving in $\R^1$ varies linearly,
$A_G$ is bounded from below by a $t$-monotone polyline $f_G$. The region is
unbounded from above: if $G$ is $\eps$-connected (at time $t$) for some value
$\eps$, then it is also $\eps'$-connected for any $\eps' \geq \eps$ (see
Fig.~\ref{fig:example_polygons}).
Every maximal length segment in the intersection between (the restricted) $A_G$ and the
horizontal line $\ell_\eps$ at height $\eps$
corresponds to a (maximal) time
interval $I$ during which $(G,I)$ is a $(1,\eps,0)$-group, or an
\emph{$\eps$-group} for short. Every such a segment corresponds to an
\emph{instance} of $\eps$-group $G$.

\begin{figure}[tb]
  \centering
  \includegraphics{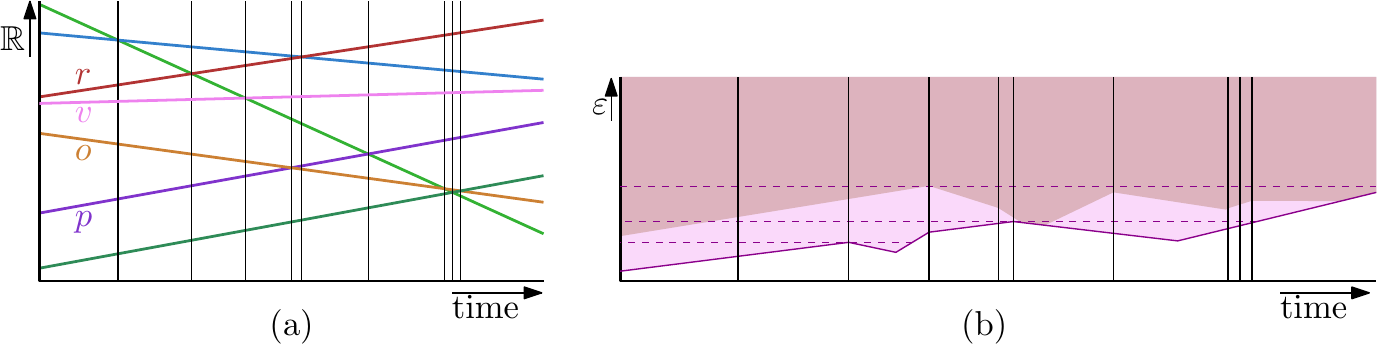}
  \caption{(a) A set of trajectories for a set of entities moving in $\R^1$ (b)
    The region $A_{\{r,v\}}$ during which $\{r,v\}$ is alive, and its
    decomposition into polygons, each corresponding to a distinct instance. In all
    such regions, except the top one $\{r,v\}$ is a maximal group: in the top
    region $\{r,v\}$ is dominated by $\{r,v,o\}$ (darker region).}
  \label{fig:example_polygons}
\end{figure}

\begin{observation}
  \label{obs:maximal_combinatorial_group}
  Set $G$ is a maximal $\eps$-group on $I$, iff the line segment
  $s_{\eps,I}=\{(t,\eps) \mid t \in I\}$ is a maximal length segment in $A_G$,
  and is not contained in $A_H$, for a supergroup $H \supset G$.
\end{observation}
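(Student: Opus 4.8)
The plan is to prove the biconditional by unfolding the definition of a maximal $\eps$-group into its two independent clauses---that $G$ cannot be enlarged in size on $I$, and that $I$ cannot be enlarged in duration for $G$---and translating each clause into a geometric statement about the segment $s_{\eps,I}$ inside the restricted region $A_G$. The key preliminary observation is that, since we have fixed $m=1$ and $\delta=0$, the size and duration constraints in the definition of an $(m,\eps,\delta)$-group are vacuous, so for any set $H$ and interval $J$ the pair $(H,J)$ is an $\eps$-group precisely when $H$ is $\eps$-connected at every $t \in J$. Together with the fact, established just before the observation, that $A_G \cap \ell_\eps$ is exactly the set of $(t,\eps)$ at which $G$ is $\eps$-connected, this reduces everything to reasoning about $\eps$-connectivity.

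First I would treat the duration clause. The segment $s_{\eps,I}$ is contained in $A_G$ iff $G$ is $\eps$-connected at every $t \in I$, i.e.\ iff $(G,I)$ is an $\eps$-group; and $s_{\eps,I}$ is a \emph{maximal length} (locally inextensible) segment of $A_G \cap \ell_\eps$ iff there is no interval $J \supset I$ on which $G$ is still $\eps$-connected. This is exactly the statement that $I$ is maximal in duration. Next I would treat the size clause. For a supergroup $H \supset G$, the containment $s_{\eps,I} \subseteq A_H$ says that every $(t,\eps)$ with $t \in I$ lies in $A_H$, i.e.\ that $H$ is $\eps$-connected throughout $I$, which by the preliminary observation is the same as $(H,I)$ being an $\eps$-group that dominates $(G,I)$ in size. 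Hence ``$s_{\eps,I}$ is not contained in $A_H$ for any supergroup $H \supset G$'' is precisely the statement that no $H \supset G$ is $\eps$-connected during $I$, i.e.\ that $G$ is maximal in size.

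Combining the two equivalences, $(G,I)$ satisfies both clauses of the definition of a maximal $\eps$-group iff $s_{\eps,I}$ is simultaneously a maximal length segment of $A_G$ and not contained in $A_H$ for any supergroup $H \supset G$, which is exactly the claimed biconditional. I do not anticipate a genuine technical obstacle, as the argument is a translation of definitions; the only points requiring care are (i) reading ``maximal length segment'' as \emph{locally inextensible}, so that each connected component of $A_G \cap \ell_\eps$ is handled separately rather than comparing global lengths, and (ii) using that maximality of a group is the \emph{conjunction} of the size and duration clauses, so that the two geometric conditions must both hold; the closedness of $A_G \cap \ell_\eps$ (distances being at most $\eps$ and varying piecewise-linearly) ensures these maximal segments are well-defined closed intervals.
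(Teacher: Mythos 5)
Your proof is correct and matches the paper's approach: the paper states this as an \emph{observation} with no proof precisely because it is the direct definitional unpacking you carry out, using that for $m=1$, $\delta=0$ membership in $A_G \cap \ell_\eps$ is exactly $\eps$-connectedness, so that ``maximal length segment'' encodes duration-maximality and ``not contained in any $A_H$, $H \supset G$'' encodes size-maximality. Your two points of care (reading maximality as the conjunction given in the paper's ``that is'' clause, and interpreting maximal length segments as connected components of $A_G \cap \ell_\eps$) are exactly the right readings.
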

Two instances of $\eps$-group $G$ may \emph{merge}.
Let $v$ be a local maximum of $f_G$ and $I_1 = [t_1,v_t]$ and $I_2 = [v_t,t_2]$ be two instances of group $G$ meeting at $v$.
At $v_\eps$, the two instances $G$ that are alive during $[t_1,v_t]$ and $[v_t,t_2]$ merge and we now
have a single time interval $I = [t_1,t_2]$ on which $G$ is a group.
We say that $I$ is a new instance of $G$, different from $I_1$ and $I_2$.
We can thus decompose $A_G$ into maximally-connected regions, each corresponding to a distinct instance of group $G$, using horizontal segments through the local maxima of $f_G$.
We further split each region at the values $\eps$ where $G$ changes between being maximal and being dominated.
Let $\P_G$ denote the obtained set of regions in which $G$ is maximal.
Each such a region $P$ corresponds to a \emph{combinatorially distinct} instance on which $G$ is a maximal group (with at least one member and duration at least zero).
The region $P$ is bounded by at most two horizontal line segments and two $\eps$-monotone chains (see Fig.~\ref{fig:example_polygons}(b)).


\mypar{Counting maximal $\eps$-groups}
To bound the number of distinct maximal $\eps$-groups, over all values of $\eps$, we have to count the number of polygons in $\P_G$ over all sets $G$.
While there are possibly exponentially many sets, there is structure in the regions $A_G$ which we can exploit.

Consider a set of entities $G$ and a region $P \in \P_G$ corresponding to a distinct instance of the maximal $\eps$-group $G$.
We observe that all vertices of $P$ lie on the polyline $f_G$: they are either directly vertices of $f_G$, or they are points $(t,\eps)$ on the edges of $f_G$ where $G$ starts or stops being maximal.
For the latter case there must be a polyline $f_H$, for some subgroup or supergroup of $G$, that intersects $f_G$ at such a point.
Furthermore, observe that any vertex (of either type) is used by at most a constant number of regions from $\P_G$.

Below we show that the complexity of the arrangement \H, of all polylines $f_G$ over all $G$, is bounded by $O(|\A|n)$.
Furthermore, we show that each vertex of \H can be incident to at most $O(n)$ regions.
It follows that the complexity of all polygons $P \in \P_G$, over all groups (sets) $G$, and thus also the number of such sets, is at most $O(|\A|n^2)$.

\begin{figure}[t]
  \centering
  \includegraphics{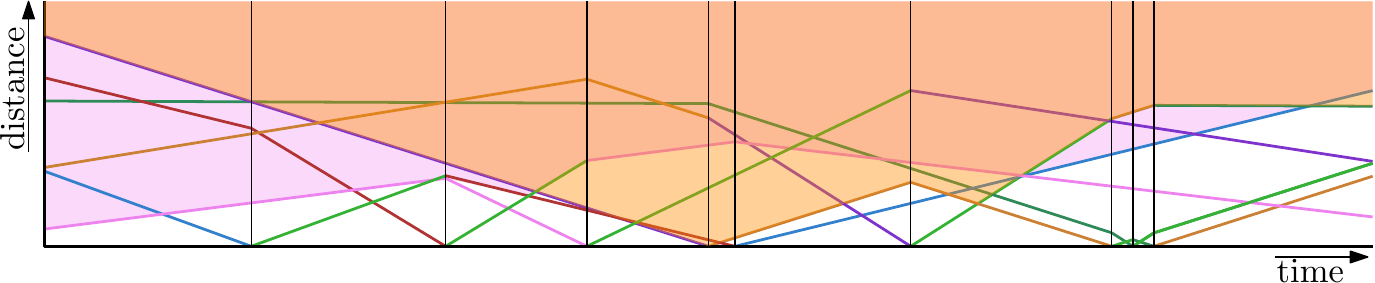}
  \caption{The arrangement \H and the regions $A_{\{r,v\}}$ (purple) and
    $A_{\{p,o\}}$ (orange) for the trajectories shown in
    Fig.~\ref{fig:example_polygons}(a). The arrangement \H corresponds to the
    arrangement of functions $h_a(t)$ that represent the distance from $a$ to
    the entity directly above $a$ at time $t$.}
  \label{fig:arrangement_H}
  \vspace{-\baselineskip}
\end{figure}

\mypar{The complexity of \H} The \emph{span} $S_G(t) = \{ a \mid a \in \X\
\land\ a(t) \in [\min_{b \in G} b(t), \max_{b \in G} b(t))\}$ of a set of
entities $G$ at time $t$ is the set of entities between the lowest and highest
entity of $G$ at time $t$ (for technical reasons, we include the lowest entity
of $G$ in the span, but not the highest). Let $h_a(t)$ denote the distance from
entity $a$ to the entity directly above $a$ at time $t$, that is, $h_a(t)$ is
the height of the face in \A that has $a$ on its lower boundary at time $t$.
\begin{observation}
  \label{obs:connected}
  A set $G$ is $\eps$-connected at time $t$, if and only if the largest
  distance among consecutive entities in $S_G(t)$ is at most $\eps$. That is,
  \vspace{-0.5\baselineskip}
  \[
    f_G(t) = \max_{a \in S_G(t)} h_a(t)
  \]
\end{observation}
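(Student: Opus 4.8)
The plan is to prove both implications of the equivalence and then read off the claimed formula for $f_G$. I would begin by fixing the time $t$ and sorting all $n$ entities of $\X$ by their position $a(t)$ on the line; by the standing no-ties assumption this order is strict. The half-open convention in the definition of the span is the crucial bookkeeping device here: $S_G(t)$ consists of exactly those entities whose position lies in $[\min_{b\in G} b(t), \max_{b\in G} b(t))$, that is, the lowest member of $G$ together with every entity of $\X$ strictly below the highest member of $G$. Consequently the multiset of values $\{h_a(t) \mid a \in S_G(t)\}$ is precisely the set of gaps between consecutive entities (in the global left-to-right order) that lie between the bottom and the top of $G$. This identification is what lets me translate ``$\eps$-connectedness of $G$'' into a statement purely about these gaps.

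For the easy direction, assume every gap $h_a(t)$ with $a\in S_G(t)$ is at most $\eps$. Take any two $a,b\in G$ with $a(t)\le b(t)$; the sequence of all entities of $\X$ lying in $[a(t),b(t)]$, listed in position order, is a chain from $a$ to $b$ whose every consecutive distance is one of these gaps (each intermediate entity lies in $[\min_{b\in G}b(t),\max_{b\in G}b(t))$ and hence in $S_G(t)$), so each such distance is at most $\eps$. Thus every pair of $G$ is $\eps$-connected and $G$ is an $\eps$-group at time $t$.

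For the converse I would argue by contraposition: suppose some $a\in S_G(t)$ has $h_a(t) > \eps$. Because $a\in S_G(t)$, the lowest entity of $G$ sits at or below $a(t)$, while the highest entity of $G$ sits strictly above $a(t)$, hence (as nothing lies strictly between $a$ and its immediate successor) at or above $a(t)+h_a(t)$; so $G$ has members on both sides of this gap. The key point, and the step that needs the most care, is that $h_a(t)$ is defined relative to the \emph{entire} entity set $\X$, so no entity whatsoever lies strictly inside this gap. Therefore any chain joining a member of $G$ below the gap to a member above it must contain one consecutive pair straddling the gap, and that pair is at distance at least $h_a(t) > \eps$; no $\eps$-chain can cross, so these two members of $G$ are not $\eps$-connected and $G$ is not an $\eps$-group.

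Combining the two directions, $G$ is $\eps$-connected at time $t$ exactly when $\eps \ge \max_{a\in S_G(t)} h_a(t)$, so the smallest $\eps$ for which $G$ is alive at time $t$ (which is, by definition, the lower boundary $f_G(t)$ of $A_G$ restricted to the $(t,\eps)$-plane) equals $\max_{a\in S_G(t)} h_a(t)$. I would close by noting the degenerate case $|G|=1$ (more generally, when the lowest and highest members of $G$ coincide), where $S_G(t)=\emptyset$ and the empty maximum is read as $0$, consistent with a singleton being $\eps$-connected for every $\eps\ge 0$.
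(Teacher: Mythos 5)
Your proof is correct. The paper states this result as an Observation with no accompanying proof, and your argument --- chaining through all entities of \X lying between two members of $G$ for the easy direction, and the contrapositive empty-gap argument (using that $h_a(t)$ is measured with respect to all of \X, so no chain can cross a gap wider than $\eps$) for the converse, together with careful handling of the half-open span convention and the degenerate case $S_G(t)=\emptyset$ --- is precisely the elementary reasoning the paper leaves implicit.
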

It follows that \H is a subset of the
arrangement of the $n$ functions $h_a$, for $a \in \X$ (see
Fig.~\ref{fig:arrangement_H}). We use this fact to show that \H
has complexity at most $O(|\A|n)$:

\begin{lemma}
  \label{lem:upperbound_triplets}
  Let \A be an arrangement of $n$ line segments, and let $k$ be the maximum
  number of line segments intersected by a vertical line. The number of
  triplets $(F,F',x)$ such that the faces $F \in \A$ and $F' \in \A$ have equal
  height $h$ at $x$-coordinate $x$ is at most $O(|\A|k) \subseteq O(|\A|n)
  \subseteq O(n^3)$.
\end{lemma}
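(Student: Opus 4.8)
The plan is to read each triplet as a crossing between two piecewise-linear height functions, and then to bound the number of such crossings by a charging argument that exploits the fact that only $O(k)$ faces are ever alive simultaneously. To set this up I would associate with every face $F$ of \A its \emph{height function} $h_F$, defined on the $x$-interval spanned by $F$, where $h_F(x)$ is the vertical distance between the upper and lower boundary of $F$ at abscissa $x$. Since the two boundaries of $F$ are chains of straight edges, $h_F$ is $x$-monotone and piecewise linear, and its breakpoints are exactly the vertices on $\partial F$. As every vertex of \A lies on only a constant number of faces, the total number of breakpoints satisfies $\sum_F (\text{breakpoints of } h_F) = O(|\A|)$. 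A triplet $(F,F',x)$ of equal height at $x$ is precisely a crossing of the graphs of $h_F$ and $h_{F'}$, so it suffices to bound the total number of such crossings over all pairs of faces.

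The crucial structural fact is a \emph{liveness bound}: at every abscissa $x$ a vertical line meets at most $k$ segments and hence passes through at most $k+1$ faces, all but the two unbounded ones having finite height; thus at most $O(k)$ height functions are defined (``alive'') at any fixed $x$. I would then charge crossings to breakpoints. On the common domain of a pair $(F,F')$ both functions are linear between consecutive breakpoints of $h_F$ and $h_{F'}$, and two line segments cross at most once, so each such elementary interval contains at most one crossing. I charge that crossing to the left endpoint of its elementary interval, which is either a breakpoint of $h_F$ or of $h_{F'}$, or the abscissa at which one of the two faces first becomes alive.

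It then remains to bound the charge received by a single event $p$. If $p$ is a breakpoint of $h_F$, any pair $(F,F')$ charging a crossing to $p$ has an elementary interval whose left endpoint is $p$; consequently $F'$ is alive \emph{at} $p$, and the pair contributes at most one crossing there. By the liveness bound only $O(k)$ faces $F'$ are alive at $p$, so $p$ receives $O(k)$ charges, and the same holds for each of the $O(|\A|)$ ``becomes alive'' events. Summing over the $O(|\A|)$ events gives $O(|\A|k)$ crossings in total. Finally, since a vertical line meets at most $n$ segments we have $k \le n$, and since an arrangement of $n$ segments has complexity $|\A| = O(n^2)$, the inclusions $O(|\A|k) \subseteq O(|\A|n) \subseteq O(n^3)$ follow.

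I expect the main obstacle to be obtaining the factor $k$ rather than $k^2$. Decomposing the plane into the $O(|\A|)$ vertical slabs through the vertices of \A and simply counting all pairwise crossings inside each slab only yields $O(|\A|k^2)$, since a slab with $\Theta(k)$ alive faces can a priori carry $\Theta(k^2)$ crossings. The saving comes from the observation that a crossing charged to an event at abscissa $p$ forces the second face to be alive exactly at $p$; it is this pointwise locality of the liveness bound, rather than its weaker per-slab version, that caps the charge per event at $O(k)$.
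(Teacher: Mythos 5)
Your proposal is correct and is essentially the paper's own argument in different clothing: the paper charges each triplet $(F,F',x)$ to the bounding edge whose left endpoint is rightmost among those lying left of the vertical line $\ell_x$ --- which is exactly your ``left endpoint of the elementary interval'' --- and bounds the charges per edge by $2k$ via the same observation you call the liveness bound (the other face's two bounding edges are both crossed by the vertical line through that endpoint, so there are at most $k$ choices for it, and at most one equal-height abscissa per pair on an interval of joint linearity). The only difference is bookkeeping, charge targets indexed by edges of \A rather than by breakpoints of height functions, so both arguments yield $O(|\A|k)$ by the same mechanism.
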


\begin{proof}
  Let $\ell_x$ be the vertical line through point $(x,0)$. Now consider a
  triplet $(F,F',x)$, and let $e_F$ and $f_F$ ($e_{F'}$ and $f_{F'}$) be the
  two edges of $F$ ($F'$) intersected by $\ell_x$. We charge $(F,F',x)$ to edge
  $e \in \{e_F, f_F, e_{F'}, f_{F'}\}$ if (and only if) its left endpoint, say
  $u$, is the rightmost endpoint that lies to the left of $\ell_x$ (i.e.~$u$ is
  the rightmost among the left endpoints). We now show that each edge can be
  charged at most $2k$ times.

  Consider an edge $e=\overline{uv}$ of \A, with $u_x \leq v_x$. Edge $e$ is
  charged by a triplet $(F,F',x)$, only if one of the faces, say $F$, is
  incident to $e$, and the left-endpoints of the three other edges that are
  intersected by $\ell_x$ and bounding $F$ or $F'$ lie to the left of $u$. It
  now follows that there are only $k$ choices for face $F'$, as both the edges
  bounding $F'$ are intersected (consecutively) by the vertical line through
  $u_x$, and any vertical line intersects at most $k$ edges. Clearly, $e$ is
  incident to at most two faces, and thus there are also only two choices for
  $F$. Finally, observe that for each such pair of faces $F$ and $F'$ there is
  at most one value $x \in [u_x,r]$, where $r$ is the $x$-coordinate of the
  leftmost right endpoint among $e_F$, $f_F$, $e_{F'}$ and $f_{F'}$, at which
  $F$ and $F'$ have equal height (as the height of faces $F$ and $F'$ varies
  linearly in such an interval). It follows that every edge $e \in \A$ is
  charged at most $2k$ times.
\end{proof}

\begin{remark}
  Interestingly, this bound is tight in the worst case. In
  Appendix~\ref{app:Lowerbound_Face_Heights} we give a construction where there
  are $\Omega(n^3)$ triplets $(F,F',x)$ such that $F$ and $F'$ have equal
  height at $x$, even if we use lines instead of line segments.
\end{remark}

\begin{lemma}
  \label{lem:complexity_H}
  The arrangement \H has complexity $O(|\A|n)$.
\end{lemma}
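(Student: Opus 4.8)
The plan is to show that $\H$ is a subarrangement of the arrangement $\mathcal{A}_h$ of the $n$ distance functions $h_a$, $a \in \X$, and then bound the complexity of $\mathcal{A}_h$ directly. By Observation~\ref{obs:connected} every polyline $f_G$ is the pointwise maximum $\max_{a \in S_G(t)} h_a(t)$ of a subset of the $h_a$, so each edge of $f_G$ lies on some $h_a$, and every vertex of $f_G$ is either a breakpoint of a single $h_a$ or a crossing of two functions $h_a, h_b$. Consequently every vertex of $\H$ is already a vertex of $\mathcal{A}_h$, and it suffices to prove $|\mathcal{A}_h| = O(|\A|n)$. Since $\mathcal{A}_h$ is an arrangement of $n$ $t$-monotone polylines, its total complexity is proportional to its number of vertices (up to an additive $O(n)$ term), so I only need to count vertices, which come in two kinds: breakpoints of the individual $h_a$, and intersection points of pairs $h_a, h_b$.

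For the breakpoints I would charge each one to a vertex of \A. The value $h_a(t)$ equals the position of the entity directly above $a$ minus the position of $a$, so its slope can change only when $a$'s trajectory bends, when the trajectory of the entity directly above $a$ bends, or when the entity directly above $a$ changes (a crossing of two trajectories immediately above $a$). Each such event is a vertex of \A. Conversely, a single vertex of \A --- a trajectory bend or a crossing of two adjacent entities --- triggers a slope change in only a constant number of the functions $h_a$: namely for $a$ itself, for the entity directly below it, and, at a crossing of adjacent entities $p$ (below) and $q$ (above), for $h_p$, $h_q$, and $h_c$ where $c$ is the entity directly below the crossing. Hence the total number of breakpoints over all $h_a$ is $O(|\A|)$.

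The intersection points are handled by Lemma~\ref{lem:upperbound_triplets}, which is the crucial ingredient. At any time $t$ the value $h_a(t)$ is by definition the height of the face $F$ of \A having $a$ on its lower boundary at $t$, and likewise $h_b(t)$ is the height of the face $F'$ with $b$ on its lower boundary. Thus every intersection $h_a(t) = h_b(t)$ yields a triplet $(F,F',t)$ of two faces of \A of equal height at $x$-coordinate $t$; since at a fixed $t$ the lower-boundary edge of a face belongs to a unique entity, distinct intersection points give distinct triplets. By Lemma~\ref{lem:upperbound_triplets} there are at most $O(|\A|k) \subseteq O(|\A|n)$ such triplets, so the $h_a$ have $O(|\A|n)$ pairwise intersections. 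Adding the two bounds gives $|\mathcal{A}_h| = O(|\A|) + O(|\A|n) + O(n) = O(|\A|n)$, and therefore $|\H| = O(|\A|n)$ as well.

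The step I expect to be most delicate is the breakpoint charging: one must verify that a breakpoint of $h_a$ cannot arise for any reason other than a vertex of \A --- in particular that the ``entity directly above $a$'' changes only at a crossing that is a genuine vertex of \A --- so that the charging is sound and each vertex of \A is charged only $O(1)$ times. The intersection bound, by contrast, is immediate once the correspondence with equal-height face triplets is made explicit and Lemma~\ref{lem:upperbound_triplets} is invoked.
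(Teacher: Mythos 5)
Your proposal is correct and follows essentially the same route as the paper: it observes (via Observation~\ref{obs:connected}) that $\H$ is a subarrangement of the arrangement of the $n$ functions $h_a$, bounds the breakpoints of the individual $h_a$ by $O(|\A|)$ through charging to vertices of $\A$, and bounds pairwise intersections by $O(|\A|n)$ via the correspondence with equal-height face triplets and Lemma~\ref{lem:upperbound_triplets}. The only difference is that you spell out the $O(1)$-charging argument for breakpoints, which the paper asserts without detail.
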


\begin{proof}
  Vertices in \H are either (i) vertices of individual functions $h_a$, or (ii)
  intersections between two such functions, say $h_a$ and $h_c$. The total
  complexity of the individual functions is $O(|\A|)$, hence there are also
  only $O(|\A|)$ vertices of the type (i). Vertices of the type (ii) correspond
  to a triplet $(F,F',x)$ in which $F$ and $F'$ are faces of \A that have equal
  height at $x$-coordinate $x$. By Lemma~\ref{lem:upperbound_triplets} there
  are at most $O(|\A|n)$ such triplets. Thus, the number of vertices of type
  (ii) is also at most $O(|\A|n)$.
\end{proof}

What remains to show is that each vertex $v$ of \H can be incident to at most
$O(n)$ polygons from different sets. We use Lemma~\ref{lem:per_vertex}, which follows from
Buchin \etal~\cite{grouping2015}:

\begin{lemma}
  \label{lem:per_vertex}
  Let \RG be the Reeb graph for a fixed value $\eps$ capturing the movement of
  a set of $n$ entities moving along piecewise-linear trajectories in $\R^d$
  (for some constant $d$), and let $v$ be a vertex of \RG. There are at most
  $O(n)$ maximal groups that start or end at $v$.
\end{lemma}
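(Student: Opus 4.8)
The plan is to derive Lemma~\ref{lem:per_vertex} from the analysis of Buchin~\etal~\cite{grouping2015} by interpreting a vertex $v$ of the Reeb graph $\RG$ as an event at which the connectivity structure of the entities changes, and bounding how many maximal groups can have $v$ as an endpoint of their defining time interval. Recall that $\RG$ tracks the $\eps$-connected components of the $n$ entities over time for the fixed value $\eps$: edges of $\RG$ correspond to maximal time intervals during which a component persists, and vertices correspond to the discrete moments when two components merge or one component splits into two. A maximal group $(G,I)$ can begin or cease to be maximal only at such a combinatorial event, because between consecutive events the set of $\eps$-connected components is unchanged; hence every maximal group that starts or ends at a time $t_v$ must do so at the vertex $v$ corresponding to that event.

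First I would fix $v$ and distinguish the two kinds of events. At a \emph{merge} vertex, two components $C_1$ and $C_2$ coming in from the past fuse into a single component $C = C_1 \cup C_2$ for $t \geq t_v$; symmetrically, a \emph{split} vertex is a merge read backwards in time. I would argue that any maximal group \emph{ending} at $v$ must correspond to one of the two incoming components, and any maximal group \emph{starting} at $v$ must correspond to the single outgoing component (or, at a split, to one of the two outgoing components). The crucial structural fact is that a maximal group is a set that is maximal in size for its interval: just before a merge, the only sets that can lose maximality (because they are about to be absorbed into a strictly larger $\eps$-connected set) are subsets of the merging components, and among these only specific ones are themselves maximal up to $t_v$. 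Buchin~\etal\ show that the maximal groups form a tree-like hierarchy per Reeb edge, so that each incoming or outgoing edge carries only a bounded-per-level family of maximal groups that can terminate or originate exactly at $v$.

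Next I would count these families. The key quantitative step is that the number of maximal groups associated with a single Reeb-graph edge, and in particular the number whose interval has $v$ as an endpoint, is $O(n)$: each such group is determined by choosing, among the at most $n$ entities involved in the component at $v$, a threshold in the natural nesting of subgroups, and distinct maximal groups correspond to distinct levels of this nesting. Since there are only $O(1)$ edges incident to $v$ (at most two incoming and two outgoing for a binary merge/split event), summing over the incident edges still yields $O(n)$ maximal groups that start or end at $v$. This is precisely the statement of the lemma, and the argument goes through verbatim in $\R^d$ for constant $d$ because the Reeb graph and its merge/split structure are defined identically there; only the complexity of $\RG$ itself depends on $d$, not the per-vertex count.

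The main obstacle I expect is making the claim ``each Reeb-edge/vertex carries only $O(n)$ maximal groups that terminate there'' fully rigorous, i.e.\ ruling out that many combinatorially different maximal sets could all expire or be born at the same event. The danger is a single merge event simultaneously destroying the maximality of a large, unstructured collection of subsets. I would close this by appealing directly to the maximality-hierarchy result of Buchin~\etal: the maximal groups alive on a fixed component are linearly ordered by inclusion (a maximal group of size $k$ contains the unique maximal group of the next smaller relevant size), so at most $O(n)$ of them exist on any component at any instant, and therefore at most $O(n)$ can have $v$ as a shared endpoint. Since this nesting lemma is exactly what~\cite{grouping2015} establishes, the present lemma follows as a direct corollary, which is why we state it as following from their work rather than reproving the hierarchy from scratch.
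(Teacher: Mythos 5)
Your high-level route is in fact the paper's own: the paper gives no proof of Lemma~\ref{lem:per_vertex}, it simply states that the lemma follows from Buchin~\etal~\cite{grouping2015}, and your framing (maximal groups can only start or end at Reeb-graph events, so it suffices to count per vertex, per incident component) is the right reduction. The problem is the structural fact you use to close the count. You claim that the maximal groups \emph{alive on a fixed component at a single instant} are linearly ordered by inclusion, hence there are $O(n)$ of them at any instant, hence $O(n)$ can share the endpoint $v$. Both parts of that claim are false. Two maximal groups alive at the same instant can intersect without nesting: take $\{a,b\}$ maximal on $[0,10]$ and $\{b,c\}$ maximal on $[5,15]$; at $t=7$ both are alive, they share $b$, and neither contains the other. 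Worse, the paper's own lower-bound construction (Lemma~\ref{lem:lowerbound_fixed_eps}) places, inside a single connected component, $\Omega(n^2)$ maximal groups $G_{ij\ell}$ whose intervals all overlap a common time, and these groups pairwise cross. So ``$O(n)$ maximal groups alive at any instant'' is not a lemma you can inherit from \cite{grouping2015}; it does not hold, and an argument resting on it fails.

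The property that is actually true, and that the machinery of \cite{grouping2015} (reproduced here as Lemma~\ref{lem:nested_sizes}) provides, is restricted to groups that \emph{share an endpoint time}: if two nonempty $\eps$-groups both end at the same time $t$ and intersect, then one contains the other. This is exactly what you need at a vertex $v$: all maximal groups ending at $v$ end at the common time $t_v$, so any two of them are either disjoint or nested, i.e.\ they form a laminar family over at most $n$ entities, which has at most $2n-1$ members; the time-reversed statement bounds the groups starting at $v$, and summing gives $O(n)$. Note also that even with the shared endpoint the family is laminar, not a chain: at a split vertex in $\R^d$, $d\geq 2$, two disjoint groups can both span the opening gap and both die there, so ``linearly ordered, with a unique maximal group of each size'' is too strong, although the linear bound survives. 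Your conclusion is correct, but as written the decisive step invokes a property the groups do not have; replacing ``alive at the same instant'' by ``sharing the same start (respectively end) time'' is the small but essential repair.
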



\begin{lemma}
  \label{lem:v_incident_to_groups}
  Let $v$ be a vertex of \H. Vertex $v$ is incident to at most $O(n)$ polygons
  from $\P = \bigcup_{G \subseteq \X} \P_G$.
\end{lemma}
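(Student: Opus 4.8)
The plan is to reduce the count to maximal groups that start or end at a bounded number of Reeb-graph vertices, and then invoke Lemma~\ref{lem:per_vertex}. As observed just before the statement, for a fixed set $G$ at most a constant number of regions of $\P_G$ have $v$ on their boundary; hence it suffices to bound by $O(n)$ the number of \emph{distinct} sets $G$ for which some $P \in \P_G$ has $v$ as a vertex. Since every vertex of every region of $\P_G$ lies on $f_G$, such a set must satisfy $\eps_v = f_G(t_v) = \max_{a \in S_G(t_v)} h_a(t_v)$, so the entity $a^\ast$ attaining this maximum has $h_{a^\ast}(t_v) = \eps_v$, i.e.\ the graph of $h_{a^\ast}$ passes through $v$. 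Because $v$ is a vertex of \H\ and, by the assumed general position, at most two of the functions $h_a$ pass through any point, the binding entity $a^\ast$ is one of the (at most two) entities whose $h$-function defines $v$. Crucially, $a^\ast$ is therefore determined up to a constant number of choices, independently of $G$.

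First I would set up the correspondence with the Reeb graph $\RG_{\eps_v}$ at level $\eps_v$ (perturbing to $\eps_v \pm \xi$ to break degeneracies where several events coincide at time $t_v$). At level $\eps_v$ the gap $h_{a^\ast}(t_v) = \eps_v$ is exactly the threshold across which the two components adjacent at $a^\ast$ merge or split; this is a vertex $w$ of $\RG_{\eps_v}$ located at time $t_v$, and there are only $O(1)$ such vertices $w$ (at most one per binding entity $a^\ast$ and per side of $\eps_v$). I would then check, case by case on the role of $v$ in $\partial P$, that the maximal $\eps$-group $(G,I)$ obtained at the appropriate level $\eps_v^\pm$ starts or ends at $w$: if $v$ is the local minimum of $f_G$ bounding the pocket $P$ from below, the instance is born at $w$ (both endpoints of $I$ converge to $t_v$); if $v$ is a bend on the left or right wall of $P$, then the wall is exactly the locus of the corresponding endpoint of $I$ as $\eps$ varies, so that endpoint sits at $w$; if $v$ is the local maximum capping $P$, two instances of $G$ end and start at $w$; and if $v = f_G \cap f_H$ marks a change of maximality, then $G$ ceases (or begins) to be maximal there while its dominating (or dominated) group appears, again an event at $w$.

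Finally I would apply Lemma~\ref{lem:per_vertex}: each Reeb vertex $w$ has at most $O(n)$ maximal groups starting or ending at it. Since the map sending a qualifying set $G$ to such a maximal group at one of the $O(1)$ vertices $w$ is injective up to constant multiplicity (distinct sets give distinct groups, and a single $G$ is charged $O(1)$ times), there are $O(n)$ qualifying sets $G$. Multiplying by the $O(1)$ regions per set $G$ yields that $v$ is a vertex of at most $O(n)$ regions of $\P$.

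The hard part will be the case analysis that makes the correspondence precise---particularly verifying that a bend on the wall of a pocket, and a maximality-change intersection $f_G \cap f_H$, really do correspond to a group \emph{starting or ending} (in the sense of Lemma~\ref{lem:per_vertex}) at the Reeb vertex $w$, rather than to some event not counted by that lemma. For the wall case the point is that the pocket wall coincides with the $\eps$-parametrised endpoint of the time interval of $G$, so at $\eps_v$ that interval endpoint is exactly the split/merge time $w$; for the maximality case one charges $G$ to the group that supersedes it (or that it supersedes), which starts (resp.\ ends) at $w$. Carefully handling simultaneous events at time $t_v$ by an infinitesimal perturbation of $\eps$ is the remaining technical point.
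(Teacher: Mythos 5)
Your proposal takes essentially the same route as the paper: reduce the count to maximal $v_\eps$-groups that start or end at the Reeb-graph vertex (or, after your perturbation, the $O(1)$ vertices) corresponding to $v$, bound these by $O(n)$ via Lemma~\ref{lem:per_vertex}, and use disjointness of the regions in $\P_G$ to conclude that each set contributes only $O(1)$ regions. The paper's own proof is simply a terser version of this argument---it identifies $v$ with a single Reeb vertex $u$ for parameter $v_\eps$ and applies Lemma~\ref{lem:per_vertex} directly---so your binding-entity identification, perturbation, and case analysis (including the unnecessary alternative of charging $G$ to its superseding group, since charging $G$ to its own instance already works) are elaborations of the same idea.
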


\begin{proof}
  Let $P \in \P_G$ be a region that uses $v$. Thus, $G$ either starts or
  ends as a maximal $v_\eps$-group at time $v_t$. This means, $v$ correspond to
  a single vertex $u$ in the Reeb graph, built with parameter $v_\eps$. By
  Lemma~\ref{lem:per_vertex}, there are at most $O(n)$ maximal $v_\eps$-groups
  that start or end at $u$. Hence, $v$ can occur in regions of at most $O(n)$
  different sets $G$. For a fixed set $G$, the regions in $\P_G$
  are disjoint, so there are only $O(1)$ regions from $\P_G$, that contain
  $v$. 
\end{proof}
%
%
\begin{lemma}
  \label{lem:number_and_complexity_eps-grs}
  The number of distinct $\eps$-groups, over all values $\eps$, and the total
  complexity of all regions $\P = \bigcup_{G \subseteq \X} \P_G$, are both at most
  $O(|\H|n) = O(|\A|n^2)$.
\end{lemma}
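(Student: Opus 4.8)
The plan is to prove both bounds at once with a double-counting argument over the vertices of $\H$, combining the two facts already established: $|\H| = O(|\A|n)$ (Lemma~\ref{lem:complexity_H}) and that each vertex of $\H$ is incident to at most $O(n)$ regions of $\P$ (Lemma~\ref{lem:v_incident_to_groups}). The structural input I would invoke is the observation made just before the lemma, namely that every vertex of every region $P \in \P_G$ is a vertex of $\H$. Indeed, such a vertex is either a vertex of $f_G$, which lies on $\H$ because $f_G$ is assembled from pieces of the functions $h_a$ (Observation~\ref{obs:connected}), or a point where $f_G$ meets $f_H$ for a sub- or supergroup $H$, which is an intersection vertex of $\H$.

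First I would bound the total complexity of $\P$. Since each region of $\P$ is a simple cell bounded by at most two horizontal segments and two $\eps$-monotone chains, its complexity (edges plus vertices) is $\Theta(\#\{\text{vertices of } P\})$, so the total complexity is $\Theta\!\left(\sum_{P \in \P} \#\{\text{vertices of } P\}\right)$. I would re-index this sum over the vertices of $\H$: because every region vertex lies on $\H$, the sum equals $\sum_{v \in \H} \#\{P \in \P : v \text{ is a vertex of } P\}$. By Lemma~\ref{lem:v_incident_to_groups} each inner term is $O(n)$, and with $|\H| = O(|\A|n)$ from Lemma~\ref{lem:complexity_H} this yields a total of $O(|\H|n) = O(|\A|n^2)$.

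Next I would bound the number of distinct $\eps$-groups. Each maximal $\eps$-group instance corresponds to exactly one region $P \in \P$, so it suffices to bound $|\P|$. Every region has at least one vertex on $\H$ --- for instance an endpoint of a bounding horizontal segment or the local minimum of $f_G$ at which the instance is born --- so $|\P|$ is at most the number of region-vertex incidences counted above, and hence also $O(|\H|n) = O(|\A|n^2)$.

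The step I expect to need the most care is the re-indexing of the complexity sum, where I must be sure the charging is simultaneously complete and lossless: complete, in that no vertex of a region lies off $\H$ (guaranteed by the observation above); and lossless, in that a single vertex of $\H$ is reused by only $O(1)$ regions of any fixed $\P_G$ (the regions of $\P_G$ are pairwise disjoint) and by only $O(n)$ sets $G$ in total, which is precisely the content of Lemma~\ref{lem:v_incident_to_groups}. Once these two properties are in hand, the remaining work is purely the arithmetic above.
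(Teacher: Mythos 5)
Your proposal is correct and follows essentially the same route as the paper: the paper derives this lemma directly from the observation that every vertex of every region $P \in \P_G$ lies on $\H$, combined with Lemma~\ref{lem:complexity_H} ($|\H| = O(|\A|n)$) and Lemma~\ref{lem:v_incident_to_groups} (each vertex of $\H$ is incident to $O(n)$ regions), exactly the double-counting over vertices of $\H$ that you spell out. Your explicit attention to completeness and losslessness of the charging is a welcome elaboration of what the paper leaves implicit, but it is not a different argument.
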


\subsection{The Number of Distinct Maximal Groups, over all Parameters}
\label{sub:The_Number_of_Distinct_Maximal_Groups,_over_all_Parameters}

Maximal groups are
monotonic in $m$ and $\delta$ (see Buchin~\etal~\cite{grouping2015}); hence a maximal
$(m,\eps,\delta)$-group is also a maximal $(m',\eps,\delta')$-group for any
parameters $m' \leq m$ and $\delta' \leq \delta$. It follows that the number of
combinatorially different maximal groups is still at most $O(|\A|n^2)$.

\begin{figure}[tb]
  \centering
  \includegraphics{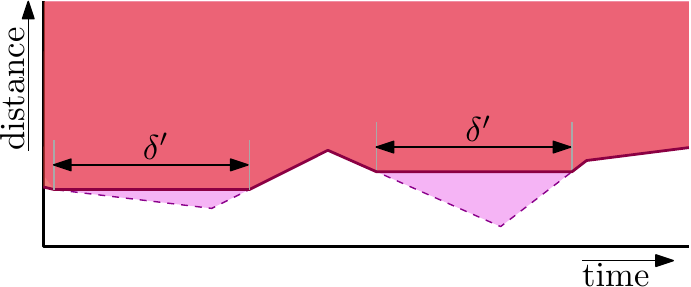}
  \caption{A cross section of the region $A_{\{r,v\}}$ with the plane through
    $\delta=\delta'$. The boundary of the original region (i.e. the cross
    section with the plane through $\delta=0$) is dashed.}
  \label{fig:blunted_group}
\end{figure}

For the complexity of the regions in $\bigcup \P_G$: fix $m = 0$, and consider
the remaining subspace of \PP with axes time, distance, and duration, and the
restriction of $A_G$, for any set $G$, into this space. In the $\delta = 0$
plane we simply have the regions $A_G$, that are bounded from below by a
$t$-monotone polyline $f_G$, as described in Section~\ref{sub:eps_Groups}. As
we increase $\delta$ we observe that the local minima in the boundary $f_G$ get
replaced by a horizontal line segment of width $\delta$ (see
Fig.~\ref{fig:blunted_group}). For arbitrarily small values
of $\delta > 0$, the total complexity of this boundary is still
$O(|\A|n^2)$. Further increasing $\delta$, monotonically decreases the number
of vertices on the functions $f_G$. It follows that the regions $A_G$,
restricted to the time, distance, duration space also have total complexity
$O(|\A|n^2)$. Finally, consider the regions $A_G$ in the full four dimensional
space. Clearly, $A_G \cap \{p \mid p \in \PP \land p_m < |G|\} = \emptyset$. For
values $m \geq |G|$, the boundary of $A_G$ is constant in $m$. We conclude:

\begin{theorem}
  \label{thm:num_max_distinct_groups}
  Let \X be a set of $n$ entities, in which each entity travels along a
  piecewise-linear trajectory of $\tau$ edges in $\R^1$, and let \A be the
  resulting trajectory arrangement. The number of distinct maximal groups is at
  most $O(|\A|n^2) = O(\tau n^4)$, and the total complexity of all regions in
  the parameter space corresponding to these groups is also $O(|\A|n^2) =
  O(\tau n^4)$.
\end{theorem}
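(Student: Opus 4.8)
The plan is to prove the two claims of Theorem~\ref{thm:num_max_distinct_groups} by reducing the general three-parameter counting problem to the restricted $(1,\eps,0)$ case already handled by Lemma~\ref{lem:number_and_complexity_eps-grs}, which gives the bound $O(|\A|n^2)$. The strategy is to peel off the parameters $m$ and $\delta$ one at a time, showing in each case that enlarging the parameter space does not increase the combinatorial complexity beyond what we have already counted. Throughout I would carry along the identity $O(|\A|n^2) = O(\tau n^4)$, which follows because a trajectory arrangement of $n$ piecewise-linear curves with $\tau$ edges each has complexity $|\A| = O(\tau n^2)$.

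First I would dispose of the \emph{number} of distinct maximal groups using the monotonicity fact cited from Buchin~\etal: a maximal $(m,\eps,\delta)$-group is also maximal for all $m' \le m$ and $\delta' \le \delta$. The consequence I would spell out is that every combinatorially distinct maximal group, for \emph{any} parameter triple, already appears among the maximal $(1,\eps,0)$-groups; relaxing $m$ down to $1$ and $\delta$ down to $0$ can only merge or preserve maximal groups, never create genuinely new combinatorial instances of a set $G$ over a given time interval. Hence the count is dominated by the $(1,\eps,0)$ count, which is $O(|\A|n^2)$ by Lemma~\ref{lem:number_and_complexity_eps-grs}.

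Next I would bound the \emph{total complexity} of the associated regions $A_G$ in the full four-dimensional space \PP, following the dimension-by-dimension argument sketched before the theorem. Fix $m$ and first work in the $(t,\eps,\delta)$-subspace. In the plane $\delta = 0$ the boundary of $A_G$ is the $t$-monotone polyline $f_G$ of total complexity $O(|\A|n^2)$. As $\delta$ grows, each local minimum of $f_G$ is replaced by a horizontal segment of width $\delta$ (Fig.~\ref{fig:blunted_group}); for arbitrarily small $\delta > 0$ this at most doubles the vertex count, keeping the complexity $O(|\A|n^2)$, and further increasing $\delta$ only removes vertices. So the regions restricted to $(t,\eps,\delta)$-space have total complexity $O(|\A|n^2)$. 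Finally, reintroducing the $m$-axis costs nothing: $A_G$ is empty for $m < |G|$ and its boundary is constant in $m$ for $m \ge |G|$, so the four-dimensional region is a product of the three-dimensional region with a ray, carrying the same combinatorial complexity.

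The step I expect to require the most care is the transition from small $\delta$ to the claim that the complexity stays $O(|\A|n^2)$ uniformly. The subtle point is that the horizontal ``blunting'' segments introduced at the local minima are indexed by those minima, and one must verify that the number of local minima of all the $f_G$ together is itself $O(|\A|n^2)$ — this is exactly what Lemma~\ref{lem:number_and_complexity_eps-grs} provides for the total complexity of $\bigcup_G \P_G$, so the argument closes, but it is worth stating explicitly that we are reusing that bound rather than introducing a fresh source of vertices. Everything else is a routine packaging of the earlier lemmas together with the substitution $|\A| = O(\tau n^2)$.
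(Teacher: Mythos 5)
Your proposal is correct and follows essentially the same route as the paper: the count is reduced to the $(1,\eps,0)$ case via the monotonicity of maximal groups in $m$ and $\delta$, and the region complexity is handled by the same dimension-by-dimension argument (blunting local minima of $f_G$ as $\delta$ grows, then observing the boundary is empty for $m<|G|$ and constant in $m$ for $m\geq|G|$), both resting on Lemma~\ref{lem:number_and_complexity_eps-grs}. Your explicit remark that the blunting segments are charged to local minima already counted by that lemma is a point the paper leaves implicit, but it is the same argument.
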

In Section~\ref{sec:lower_bound} in the appendix we prove
Lemma~\ref{lem:lowerbound_fixed_eps}: even for fixed parameters $\eps$, $m$,
and $\delta$, the number of maximal $(m,\eps,\delta)$-groups, for entities
moving in $\R^1$, may be as large as
$\Omega(\tau n^3)$. This strengthens the result of
Buchin~\etal~\cite{grouping2015}, who established this bound for entities in $\R^2$.

\begin{lemma}
  \label{lem:lowerbound_fixed_eps}
  For a set \X of $n$ entities, in which each entity travels along a
  piecewise-linear trajectory of $\tau$ edges in $\R^1$, there can be
  $\Omega(\tau n^3)$ maximal $\eps$-groups.
\end{lemma}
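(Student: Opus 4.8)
The plan is to realize the bound by a two-level construction: a \emph{base gadget} that uses only $O(1)$ edges per trajectory yet already forces $\Omega(n^3)$ maximal $\eps$-groups, which I then replicate. Since a maximal group is a (set, interval) pair localised in time, I place $\Theta(\tau)$ time-disjoint, horizontally translated copies of the base gadget along the time axis; the copies reuse the same $n$ trajectories but raise their edge count to $\Theta(\tau)$, the groups of different copies live in disjoint time slabs and are therefore pairwise distinct, so the total is $\Theta(\tau)\cdot\Omega(n^3)=\Omega(\tau n^3)$. It thus suffices to build the base gadget.

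The engine of the base gadget is the following \emph{graded-join} mechanism, which also shows that the $O(n)$ bound of Lemma~\ref{lem:per_vertex} is tight. Suppose a single connected component grows by absorbing entities $e_1,\dots,e_k$ at strictly increasing times $\gamma_1<\dots<\gamma_k$ (a monotone join, each realised by one trajectory bend, hence $O(1)$ edges), and that no entity leaves in between. For an interval $[\gamma_i,t)$ the largest set that is $\eps$-connected throughout --- that is, the intersection of the component over the interval --- is exactly the prefix $\{e_1,\dots,e_i\}$, and extending the interval to the left past $\gamma_i$ drops $e_i$; hence each prefix is a maximal $\eps$-group whose left endpoint is $\gamma_i$. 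If now a \emph{single} interior gap opens at time $t^\ast$, splitting the component into two nonempty parts, then every prefix with members on both sides of the split ceases to be $\eps$-connected at $t^\ast$, so its interval ends there. Choosing the split position so that $\Theta(k)$ prefixes straddle it, a single event (one gap crossing $\eps$, consistent with the general-position assumption) terminates $\Theta(k)$ distinct maximal groups; taking $k=\Theta(n)$ gives $\Theta(n)$ maximal groups per such \emph{rich} event.

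To reach the cube I must trigger $\Theta(n^2)$ rich events with only $O(1)$ edges per trajectory. The plan is to split \X into three equal families and give them straight-line motions so that two of the families undergo all $\Theta(n^2)$ pairwise crossings, while the third family maintains a persistent, graded ``reservoir'' component. Each crossing is arranged to momentarily open one interior gap of the current reservoir component and then close it again, so that it acts as one rich split of the graded-join type above; between consecutive crossings the reservoir re-merges and presents a fresh graded history --- a fresh set of $\Theta(n)$ distinct left endpoints --- to the next crossing. Summing $\Theta(n)$ terminated groups over $\Theta(n^2)$ crossings yields $\Omega(n^3)$ maximal groups in the base gadget. Producing the explicit coordinate assignment that realises this spatial order, the reservoir, and the timed gap openings with straight segments is routine and would be deferred to the appendix.

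The main obstacle is precisely this saturation step: guaranteeing that \emph{each} of the $\Theta(n^2)$ crossing events is genuinely rich. The tension is that $\eps$-connectivity in $\R^1$ is constrained by the linear order --- a maximal group must live inside a single connected component, and a split at one position restricts which prefixes can still straddle later splits --- so I must schedule the openings and closings of interior gaps so that re-merging restores $\Theta(n)$ fresh, distinct left endpoints before every crossing without exceeding $O(1)$ trajectory bends, and simultaneously certify that the $\Theta(n^3)$ counted groups are pairwise distinct (no group recounted at two events) and each maximal in both size and duration. This is exactly the point where the construction of Buchin~\etal in $\R^2$ exploits two-dimensional freedom to place independent proximity events; reproducing that freedom within the one-dimensional order, while respecting the no-three-equal-distances assumption, is the crux of the argument.
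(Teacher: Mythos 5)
Your high-level frame matches the paper's (a base gadget of straight/$O(1)$-bend trajectories yielding $\Omega(n^3)$ groups, replicated $\Theta(\tau)$ times in disjoint time slabs, with a grid of $\Theta(n^2)$ pairwise crossings as the event supply), but your counting engine has a fatal accounting problem, and it sits exactly at the point you yourself flag as ``the crux.'' In your graded-join scheme every counted maximal group is a prefix $\{e_1,\dots,e_i\}$ whose interval starts at its own join event $\gamma_i$ and ends at the shared split event; a prefix dies at the first split it straddles, so each group consumes one \emph{fresh} join event. Hence your plan requires $\Theta(n)$ fresh join events before each of $\Theta(n^2)$ splits, i.e.\ $\Theta(n^3)$ pairwise distance-$\eps$ crossings. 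But in the base gadget every trajectory has $O(1)$ pieces, so every pair of entities has a piecewise-linear distance function with $O(1)$ pieces that crosses $\eps$ only $O(1)$ times: the whole gadget supplies only $O(n^2)$ such events. Your mechanism therefore caps at $O(n^2)$ groups per gadget no matter how the reservoir is scheduled; the gap is not a deferrable coordinate computation but a structural flaw in the way groups are charged to events.

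The paper's construction resolves precisely this by making maximal groups correspond to \emph{pairs} of events rather than to single fresh events. It uses a rotated $R \times B$ grid plus a third family $H$ of nearly horizontal ``rail'' lines spaced more than $\eps$ apart, scaled so that each grid crossing lies within $\eps$ of the rails directly above and below it. Within one grid ``column,'' the crossings $(r_j,b_{ij})$ occur at staggered times, and for every consecutive index range $[j..z]$ the set $X_{ijz}=\bigcup_{\ell\in[j..z]}\{r_\ell,b_{i\ell},h_{i\ell}^-,h_{i\ell}^+\}$ is $\eps$-connected exactly on $[t_{iz}-\Delta,\,t_{ij}+\Delta]$, i.e.\ from the connect event of pair $z$ to the disconnect event of pair $j$. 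Thus each connect event starts $\Theta(n)$ distinct maximal groups and each disconnect event ends $\Theta(n)$ of them, and a column with $\Theta(n)$ crossings yields $\Theta(n^2)$ maximal groups (one per range), so $\Theta(n)$ columns give $\Omega(n^3)$ groups out of only $\Theta(n^2)$ events. To repair your proof you would need to replace the prefix structure by such a range structure (or otherwise arrange that $\Theta(n)$ groups share each \emph{start} event as well as each end event), which is exactly the idea your sketch is missing.
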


\section{Algorithm}
\label{sec:algo}

In the following we refer to combinatorially different maximal groups simply as groups.
Our algorithm computes a representation (of size $O(|\A|n^2)$) of all groups, which we can use to list all groups and, given a pointer to a group $G$, list all its members and the polygon $Q_G \in \P_G$. We assume $\delta = 0$
and $m=1$, since the sets of maximal groups for $\delta > 0$ and $m > 1$ are a subset of the set for $\delta = 0$ and $m=1$.


\subsection{Overview}

Our algorithm uses the arrangement $\H$ located in the ($t,\eps$)-plane.
Line segments in \H correspond to the height function of the faces in \A.
Let $a,b \in S_G(t)$ be the pair of consecutive entities in the span of a group $G$ with maximum vertical distance at time $t$.
We refer to $(a,b)$ as the \emph{critical pair} of $G$ at time $t$.
The pair $(a,b)$ determines the minimal value of $\eps$ that is required for the group $G$ to be $\eps$-connected at time $t$.
The distance between a critical pair $(a,b)$ defines an edge of the polygon bounding $G$ in $\H$.

Our representation will consist of the arrangement \H in which each edge $e$ is
annotated with a data structure $\T_e$, a list \Lst (or array) with the top
edge in each \emph{group polygon} $Q_G \in \P_G$, and an additional data structure \S to support reconstructing the grouping polygons.
We start by computing the arrangement \H.
This takes $O(|\H|) = O(\tau n^3)$ time~\cite{amato2000computing}.
The arrangement is built from the set of height-functions of the faces of \A.
With each edge we store the pair of edges in \A responsible for it.

Given arrangement \H we use a sweep line algorithm to construct the rest of the representation.
A horizontal line $\ell(\eps)$ is swept at height $\eps$ upwards, and all
groups $G$ whose group polygon $Q_G$ currently intersects $\ell$ are maintained.
To achieve this we maintain a two-part status structure.
First, a set \S with for each group $G$ the time interval $I(G,\eps) = Q_G \cap \ell(\eps)$.
Second, for each edge $e \in \H$ intersected by $\ell(\eps)$ a data structure $\T_e$ with the sets of entities whose time interval starts or ends at $e$, that is, $G \in \T_e$ if and only if $I(G,\eps) = [s,t]$ with $s = e\cap \ell(\eps)$ or $t = e\cap \ell(\eps)$.
We postpone the implementation of $\T$ to Section \ref{sub:datastructure}.
The data structures support the following operations:
\\\\
\noindent
\begin{tabularx}{\textwidth}{p{0.22\textwidth}XX}
    \toprule
    Operation & Input        & Action \\
    \midrule
    \acall{Filter}{\T_e, X} & A data structure $\T_e$ \newline A set of entities $X$ & Create a data structure $\T' = \{ G \cap X \mid G \in \T_e\}$\\
    \acall{Insert}{\T_e, G} &  A data structure $\T_e$ \newline A pointer to a representation of $G$ & Create a data structure $\T' = \T_e \cup \{G\}$.\\
    \acall{Delete}{\T_e, G} &  A data structure $\T_e$ \newline A pointer to a representation of $G$ & Create a data structure $\T' = \T_e \setminus \{G\}$.\\
    \acall{Merge}{\T_e, \T_f} & Two data structures $\T_e$, $\T_f$, belonging to two edges $e,f$ having the same starting or ending vertex & Create a data structure $\T' = \T_e \cup \T_f$.\\
    \acall{Contains}{\T_e, G} &  A data structure $\T_e$ \newline A pointer to a representation of $G$ ending or starting on edge $e$ & Test if $\T_e$ contains set $G$.\\
    \acall{HasSuperSet}{\T_e, G} &  A data structure $\T_e$ \newline A pointer to a representation of $G$ ending or starting on edge $e$ & Test if $\T_e$ contains a set $H \supseteq G$, and return the smallest such set if so.\\
    \bottomrule
  \end{tabularx}
\def\arraystretch{1.0}


The end points of the time interval $I(G,\eps) = [\mathapply{start}{G,\eps}, \mathapply{end}{G,\eps}]$ vary non-stop along the sweep.
For each group $G$, the set \S instead stores the edges $e$ and $f$ of \H that contain the starting time \mathapply{start}{G,\eps} and ending time \mathapply{end}{G,\eps}, respectively, and pointers to the representation
of $G$ in $\T_e$ and $\T_f$.
We refer to $e$ and $f$ as the \emph{starting edge} and \emph{ending edge} of $G$.
In addition, we store with each interval $I(G,\eps)$ a pointer to the previous version of the interval $I(G,\eps')$ if (and only if) the starting time (ending time) of $G$ changed to edge $e$ (edge $f$) at $\eps'$.
Note that updates for both \S and \T occur only when a vertex is hit by the sweep line $\ell(\eps)$.
For all unbounded groups we add $I(G,\infty)$ to \Lst after the sweep line algorithm.

\begin{figure}[tb]
  \centering
  \includegraphics{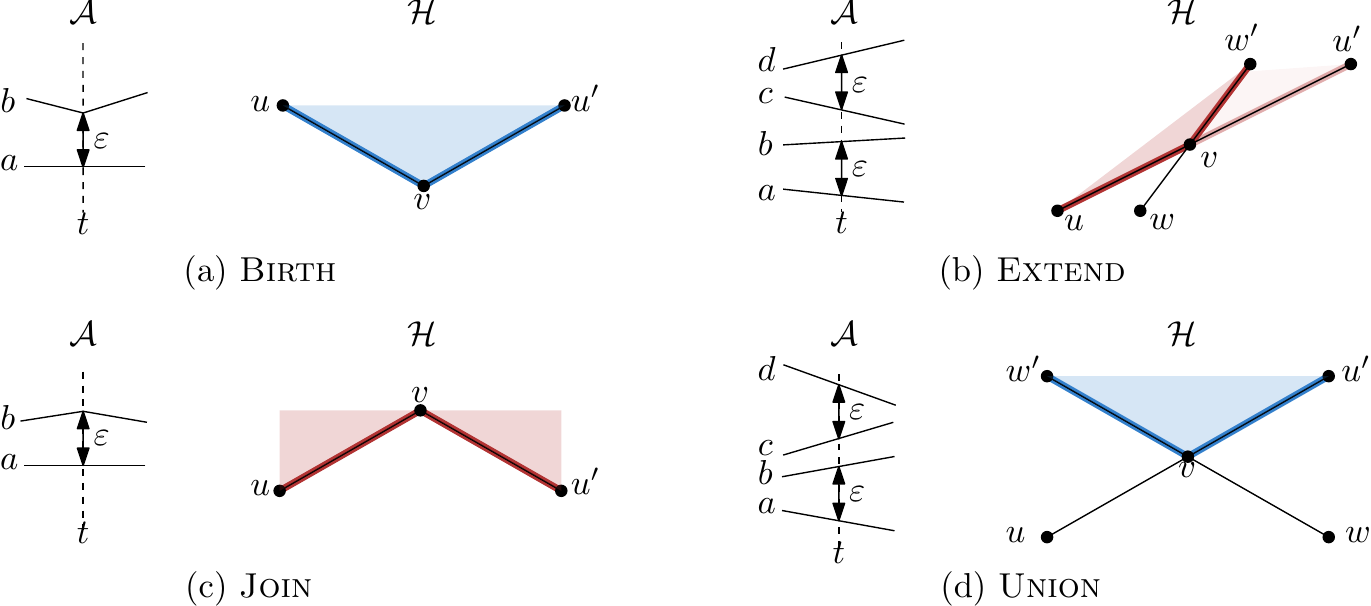}
  \caption{The different types of vertex events shown both in the arrangement \A and in \H. The {\sc Extend} event has a horizontally symmetric case.}
  \label{fig:vertex_roles}
\end{figure}


\subsection{Sweepline Events}
The sweep line algorithm results in four different vertex events (see
Fig.~\ref{fig:vertex_roles}). The \textsc{Extend}-event has a symmetrical
version in which $\overline{uu'}$ and $\overline{ww'}$ both have a negative
incline. We describe how to update our hypothetical data structures in all
cases.

%


\mypar{Case I - \textsc{Birth}} Vertex $v$ is a local minimum of one of the
functions $h_a$, with $a \in \X$ (see Fig.~\ref{fig:vertex_roles}(a)).
When the sweep line intersects $v$ a new maximal group $G$ is born.
We can find the maximal group spawned in $O(|G|)$ time by checking which trajectories are $\eps$-connected for this value of $t$ and $\eps$.
To this end we traverse the (vertical decomposition of) \A starting at the entities defining $v$.

\mypar{Case II - \textsc{Extend}} Vertex $v$ is the intersection of two
line segments $s_{ab} = \overline{uu'}$ and $s_{cd} = \overline{ww'}$, both
with a positive incline (see Fig.~\ref{fig:vertex_roles}(b)).
The case in which $s_{ab}$ and $s_{cd}$ have negative incline can be handled symmetrically.
Assume without loss of generality that $s_{cd}$ is steeper than $s_{ab}$.
We start with the following observation:

\begin{observation}
  \label{obs:no_bend}
  None of the groups arriving on edge $(w,v)$ continue on edge $(v,u')$.
\end{observation}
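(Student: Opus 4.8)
The plan is to exploit that $s_{cd}$ is the \emph{steeper} of the two positive-incline segments: once the gap it encodes determines connectivity, that gap stays strictly larger than the gap encoded by $s_{ab}$ as time increases past $v_t$, so the critical pair cannot switch from $(c,d)$ to $(a,b)$ at $v$.

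First I would fix the local picture. Writing $s_{ab}$ as a linear piece of $h_a$ and $s_{cd}$ as a linear piece of $h_c$ (so $a\neq c$, since $v$ is a genuine crossing of two distinct functions, i.e.\ a type-(ii) vertex of \H), both with positive slope and $s_{cd}$ steeper, I get the elementary inequality
\[
 h_a(t) > h_c(t) \text{ for } t < v_t, \qquad h_a(t) < h_c(t) \text{ for } t > v_t,
\]
with equality at $v_t$. Because the edges have positive incline, they carry the right endpoints $\eTime(G,\eps)$ of the intervals $I(G,\eps)$. Hence ``arriving on edge $(w,v)$'' means that for $\eps$ just below $v_\eps$ the right endpoint $t^\ast=\eTime(G,\eps)$ lies on the lower part $(w,v)$ of $s_{cd}$, so $t^\ast<v_t$ and the critical pair of $G$ at $t^\ast$ is $(c,d)$; in particular $f_G(t^\ast)=h_c(t^\ast)=\eps$.

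Next I would record the consequence of $f_G$ being an upper envelope. By Observation~\ref{obs:connected}, $f_G(t^\ast)=\max_{x\in S_G(t^\ast)}h_x(t^\ast)$. If $a$ lay in the span $S_G(t^\ast)$, then $h_a(t^\ast)$ would be one of the terms of this maximum, giving $h_a(t^\ast)\le f_G(t^\ast)=h_c(t^\ast)$ — contradicting the inequality above, which yields $h_a(t^\ast)>h_c(t^\ast)$ since $t^\ast<v_t$. Thus $a\notin S_G(t^\ast)$ at every right endpoint just below $v$. Now suppose for contradiction that $G$ continues on $(v,u')$: for $\eps$ just above $v_\eps$ its right endpoint $t^{\ast\ast}=\eTime(G,\eps)$ lies on the upper part of $s_{ab}$ with $t^{\ast\ast}>v_t$ and critical pair $(a,b)$, so in particular $a\in S_G(t^{\ast\ast})$. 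But $a\notin S_G$ just below $v_t$ and $a\in S_G$ just above $v_t$ would force $a$ to enter the range $[\min_{x\in G}x,\max_{x\in G}x)$ exactly at $t=v_t$, i.e.\ a trajectory crossing (a vertex of \A) at that instant; under general position the time $v_t$ of the \H-vertex $v$ is distinct from every vertex of \A, a contradiction. Hence $a\notin S_G(t^{\ast\ast})$, the critical pair cannot be $(a,b)$, and no group arriving on $(w,v)$ continues on $(v,u')$.

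The main obstacle I anticipate is this bookkeeping of the span across $v_t$: I must justify that the combinatorial make-up of $S_G$ (which pairs are consecutive, and whether $a$ belongs to it) does not change precisely at the \H-event $v_t$, which I plan to settle with the paper's general-position assumption, taken to separate the event times of \A and \H. I would also note a second, envelope-based derivation that is robust to this point: since $(c,d)$ was the critical pair just before $v_t$, the pair $(c,d)$ remains consecutive in $S_G$ immediately past $v_t$, so $h_c(t^{\ast\ast})$ appears in the maximum of Observation~\ref{obs:connected}; as $t^{\ast\ast}>v_t$ gives $h_c(t^{\ast\ast})>h_a(t^{\ast\ast})=f_G(t^{\ast\ast})$, this again contradicts $f_G(t^{\ast\ast})$ being that maximum. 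Either way the conclusion follows, the remainder being only the short slope computation together with Observation~\ref{obs:connected}.
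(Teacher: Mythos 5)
Your setup is sound: positive-incline edges of \H indeed carry the ending times of the intervals $I(G,\eps)$, the slope comparison gives $h_a > h_c$ before $v_t$ and $h_a < h_c$ after, and the deduction via Observation~\ref{obs:connected} that $a \notin S_G(t^\ast)$ at ending times just below $v$ is correct; your second sketch is in fact the skeleton of the paper's own argument. The genuine gap is the one you yourself flag: both of your derivations rest on the claim that the combinatorial structure of the span $S_G$ (membership of $a$, respectively consecutiveness of $(c,d)$) cannot change at exactly $t = v_t$, and you discharge this by citing ``the paper's general-position assumption, taken to separate the event times of \A and \H.'' No such assumption exists. The paper assumes only that no three pairs of entities have equal distance at the same time; this does not prevent a vertex of \A from occurring at time $v_t$ (for instance, two entities of $G$ unrelated to $a$, $c$, $d$ may cross exactly then --- that is one pair at distance $0$ and two pairs at distance $v_\eps$, not three pairs at equal distance). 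So, as written, the decisive step of both derivations is unsupported, and your second derivation simply asserts the persistence of $(c,d)$ as a consecutive pair, which is precisely what has to be proved.

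What closes the hole --- and what the paper's proof actually does --- is the observation that $v_\eps$ is strictly positive (the two edges rise into $v$ from $u$ and $w$, and heights are nonnegative). Since the face $F$ of \A defined by the critical pair $(c,d)$ has height $v_\eps > 0$ at $v_t$, the members of $G$ lying below and above $F$ cannot swap sides at $v_t$: they would have to pass through the empty interior of $F$, which is only possible where $c$ and $d$ coincide, i.e., where $h_c = 0$. Hence $(c,d)$ remains a gap spanned by $G$ just after $v_t$, $s_{cd}$ remains a lower bound for $f_G$ there, and since $s_{cd}$ lies above $s_{ab}$ after $v_t$, the group cannot be alive on $(v,u')$ and must continue on $(v,w')$. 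The same positivity also repairs your first derivation without any extra assumption: if $a$ crossed the envelope of $G$ (or any trajectory) at time $v_t$, then $h_a$ would tend to $0$ on one side of $v_t$, contradicting that $v$ lies in the interior of a linear piece of $h_a$ at height $v_\eps > 0$. So the fact you need is provable from the material at hand, but it must be proved rather than assumed; supplying this positivity argument is exactly the missing idea.
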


\begin{proof}
  Let $G$ be a group that arrives at $v$ using edge $(w,v)$.
  As $G$ uses $(w,v)$, it must contain entities both above and below the face $F$ defined by critical pair $(c,d)$.
  We know that $u_\eps$ and $w_\eps$ are strictly smaller than $v_\eps$ and $\eps$ is never smaller than zero.
  Thus, $v_\eps$ is strictly positive and $F$ has a strictly positive height at $t$.
  Therefore, $G$ still contains entities above and below $F$ after time $t$.
  But then the critical pair $(c,d)$ is still part of $G$ and $s_{cd}$ is a lower bound for the group.
  It follows that $G$ must use edge $(v,w')$.
\end{proof}

We first compute the groups on outgoing edge $(v,u')$.  By
Observation~\ref{obs:no_bend} all these groups arrive on edge $(u,v)$.  In
particular, they are the maximal size subsets from $\T_{(u,v)}$ for which all
entities lie below entity $d$ at time $t$, that is, $\T_{(v,u')} =
\acall{Filter}{\T_{(u,v)}, \below{d,t}}$, where $\below{y,t} = \{ x \mid x \in
\X \land x(t) < y(t)\}$.  For each group $G$ in $\T_{(v,u')}$ we update the
time-interval in \S.  If $G$ was dominated by a maximal group $H \supset G$ on
incoming edge $(u,v)$, we insert a new time interval with starting edge $f =
\gstart{H,\eps}$ and ending edge $(v,u')$ into \S, and insert $G$ into
$\T_f$. Note that $G$ and $H$ indeed have the same starting time: $G$ is a
subset of $H$, and is thus $\eps$-connected at any time where $H$ is
$\eps$-connected. Since $G$ was not maximal before, it did not start
earlier than $H$ either.

The groups from $\T_{(u,v)}$ that contain entities on both sides of critical pair $(c,d)$, continue onto edge $(v,w')$.
Let $\T'$
denote these groups.
We update the interval $I(G)$ in \S for each group $G \in \T'$ by setting the ending edge to $(v,w')$.

Next, we determine which groups from $\T_{(w,v)}$ die at $v$.
A maximal group $G \in \T_{(w,v)}$ dies at $v$ if there is a group $H$ on $(v,w')$ that dominates $G$.
Any such group $H$ must arrive at $v$ by edge $(u,v)$.
Hence, for each group $G \in \T_{(w,v)}$ we check if there is a group $H \in \T'$
with $H \supset G$ and $I(H) \supseteq I(G)$.
For each of these groups we remove the interval $I(G,\eps)$ from \S, add $I(G,\eps)$ to \Lst, and delete the set $G$ from the data structure $\T_f$, where $f$ is the starting edge of $G$ (at height $\eps$).

The remaining (not dominated) groups from $\T_{(w,v)}$ continue onto edge $(v,w')$.
Let $\T''$ denote this set.
We obtain $\T_{(v,w')}$ by merging $\T'$ and $\T''$, that is, $\T_{(v,w')} = \acall{Merge}{\T',\T''}$.
Since we now have the data structures $\T_{(v,u')}$ and $\T_{(v,w')}$, and we updated \S accordingly, our status structure again reflects the maximal groups currently intersected by the sweep line.





\mypar{Case III - \textsc{Join}} Vertex $v$ is a local maximum of one of
the functions $h_a$, with $a \in \X$ (see Fig.~\ref{fig:vertex_roles}(c)).
Two combinatorially different maximal groups $G_u$ and $G_w$ with the same set
of entities die at $v$ (and get replaced by a new maximal  group $G^*$) if and
only if $G_u$ is a maximal group in $\T_{(u,v)}$ and $G_w$ is a maximal group
in $\T_{(w,v)}$. We test this with a call to \acall{Contains}{\T_{(w,v)},G_u}
for each group $G_u \in \T_{(u,v)}$. Let $G$ be a group in $\T_{(u,v)}$, and
let $H \in \T_{(w,v)}$ be the smallest supergroup of $G$, if such a group exists.
At $v$ the group $G$ will immediately extend to the ending edge of $H$.
We can find $H$ by using a \acall{HasSuperSet}{\T_{(w,v)}, G} call.
If $H$ exists we insert $G$ into $\T_e$, and update $I(G,\eps)$ in \S accordingly.
We process the groups $G$ in $\T_{(w,v)}$ that have a group $H \in \T_{(u,v)}$ whose starting time jumps at $v$ analogously.


\mypar{Case IV - \textsc{Union}} Vertex $v$ is the intersection of a line segment $s_{ab} = \overline{uu'}$ with positive incline and a line segment $s_{cd} = \overline{ww'}$, with negative incline (see Fig.~\ref{fig:vertex_roles}(d)).
The \textsc{Union} event is a special case of the \textsc{Birth} event.
Incoming groups on edge $(u,v)$ are below the line segment $s_{cd}$ and, hence, can not contain any elements that are above $c$.
As a consequence the line segment $s_{cd}$ does not limit these groups and for a group $G \in \T_{(u,v)}$ we can safely add it to $\T_{(v,u')}$.
We also update the interval $I(G)$ in $\S$ by setting the ending edge to $(v,u')$.
An analogous argument can be made for groups arriving on edge $(w,v)$.

Furthermore a new maximal group is formed.
Let $H$ be the set of all entities $\eps$-connected to entity $a$ at time $t$.
We insert $H$ into $\T_{(v,u')}$ and $\T_{(v,w')}$ and we insert a time interval $I(H)$ into $\S$ with starting edge $(v,w')$ and ending edge $(v,u')$.

\subsection{Data Structure}\label{sub:datastructure}
We can implement \S
using any standard balanced binary search tree, the only requirement is that,
given a (representation of) set $G$ in a data structure $\T_e$, we can
efficiently find its corresponding interval in \S.

\subparagraph{The data structure $\T_e$.} We need a data structure $\T
= \T_e$ that supports \afunc{Filter}, \afunc{Insert}, \afunc{Delete},
\afunc{Merge}, \afunc{Contains}, and \afunc{HasSuperSet} efficiently.
%
We describe a structure of size $O(n)$, that supports \afunc{Contains} and
\afunc{HasSuperSet} in $O(\log n)$ time, \afunc{Filter} in $O(n)$ time, and
\afunc{Insert} and \afunc{Delete} in amortized $O(\log^2 n)$ time. In
general, answering \afunc{Contains} and \afunc{HasSuperSet} queries in a
dynamic setting is hard and may require $O(n^2)$ space~\cite{yellin1992subset}.

\begin{lemma}
  \label{lem:nested_sizes}
  Let $G$ and $H$ be two non-empty $\eps$-groups that both end at time $t$. We
  have:%
  \[
    (G \cap H \neq \emptyset \land |G| \leq |H|) \Longleftrightarrow G
    \subseteq H \land G \neq \emptyset.
  \]
\end{lemma}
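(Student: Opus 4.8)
The reverse implication is immediate: if $G \subseteq H$ and $G \neq \emptyset$, then $G \cap H = G \neq \emptyset$ and trivially $|G| \leq |H|$. All the content is in the forward direction, so the plan is to assume $G \cap H \neq \emptyset$ and $|G| \leq |H|$ (with $G,H$ non-empty maximal $\eps$-groups both ending at $t$) and to derive $G \subseteq H$.

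The key ingredient, special to $\R^1$, is that the union of two overlapping $\eps$-connected sets is again $\eps$-connected. Concretely, I would first establish the following claim: if $G$ and $H$ share an entity $c$ and both are $\eps$-connected at a time $t'$, then $G \cup H$ is $\eps$-connected at $t'$. By Observation~\ref{obs:connected}, $S_G(t')$ and $S_H(t')$ are each contiguous blocks of entities (ordered by position on the line) whose consecutive gaps are at most $\eps$, and both blocks contain $c$ (since $c \in G \subseteq S_G(t')$ and likewise for $H$). Two such position-intervals sharing $c$ overlap, so their union is a single contiguous block, which is exactly $S_{G \cup H}(t')$. Any two entities consecutive inside this union lie entirely within $S_G(t')$ or entirely within $S_H(t')$ — otherwise $c$ would sit strictly between them, contradicting consecutiveness — so every gap is at most $\eps$ and $G \cup H$ is $\eps$-connected at $t'$.

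Next I would exploit that both instances end at the \emph{same} time $t$. Writing $I_G = [s_G, t]$ and $I_H = [s_H, t]$, the two intervals share their right endpoint and are therefore nested; say $I_G \subseteq I_H$ (the other case is symmetric). Since $c \in G \cap H$ is a fixed member of both sets, the span-union claim applies at every $t' \in I_G \subseteq I_H$, so $G \cup H$ is $\eps$-connected throughout all of $I_G$. But $(G, I_G)$ is a maximal instance, meaning no proper superset of $G$ is $\eps$-connected during $I_G$; hence $G \cup H = G$, i.e. $H \subseteq G$. Combined with $|G| \leq |H|$ this forces $G = H$, so $G \subseteq H$. In the symmetric case $I_H \subseteq I_G$, applying the same maximality argument to the instance $(H, I_H)$ gives $G \cup H = H$, i.e. $G \subseteq H$ directly.

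I expect the main obstacle to be the first step: rigorously justifying the span-union connectivity, and — relatedly — pinning down the meaning of ``ends at $t$'' so that $G$ and $H$ are genuinely $\eps$-connected and maximal on the \emph{closed} interval used in the maximality step (in particular when an interval ends by domination rather than by loss of connectivity). The one-dimensional structure, namely that pairwise distances vary linearly and spans are contiguous blocks via Observation~\ref{obs:connected}, is precisely what makes ``union of overlapping connected sets is connected'' true; once that is in hand, the remainder is a short maximality argument over the two nested intervals.
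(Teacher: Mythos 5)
Your proof is correct---granted the assumption, which you state explicitly and which the paper's own proof also needs, that $G$ and $H$ are \emph{maximal} $\eps$-groups connected on closed intervals ending at $t$---and it follows essentially the same route as the paper: a shared entity lets connectivity propagate between $G$ and $H$ on the shorter of the two nested intervals, and maximality of the group owning that interval forces the inclusion, with the same case split on which group starts first (your ``union'' phrasing and the paper's contradiction via an element $b \in G \setminus H$ are cosmetic variants of one argument). The only substantive difference is that the paper gets the connectivity-merging step for free by concatenating chains through the common entity---an argument that works in any dimension---whereas you derive it from the span structure of Observation~\ref{obs:connected}; this is valid (modulo the harmless technicality that the half-open span may exclude the topmost entity, so one should argue with closed position ranges), but it is an unnecessarily $\R^1$-specific detour, and your claim that the key ingredient is ``special to $\R^1$'' is therefore not quite right.
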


\begin{proof}
  The if-direction is easy: $G \subseteq H$ immediately implies that $|G| \leq
  |H|$, and since $G$ is non-empty we then also have $G \cap H = G \neq
  \emptyset$.

  We prove the only-if direction by contradiction: assume by contradiction that
  $G \not\subseteq H$, and thus there is an element $b \in G \setminus
  H$. Furthermore, let $a \in G\cap H$, and let $s_G$ and $s_H$ denote the
  starting times of group $G$ and $H$, respectively. We distinguish two cases:
  $s_G \leq s_H$ and $s_H < s_G$.

  Case $s_G \leq s_H$. Since $a \in G$ and $s_G \leq s_H$, we have that at any
  time in $[s_H,z]$, the entities in $G$ are $\eps$-connected to $a$. So, in
  particular, entity $b$ is $\eps$-connected to $a$. However, during $[s_H,z]$,
  the entities in $H$ are also $\eps$-connected to $a$. Thus, it follows that
  during $[s_H,z]$, entity $b$ is also $\eps$-connected to $H$, and thus $b \in
  H$. Contradiction.

  Case $s_H < s_G$. Analogous to the previous case we get that both $H$ and $G$
  are $\eps$-connected to entity $a$ during $[s_G,z]$. It then follows that $H
  \subseteq G$. However, as $b \in G \setminus H$, this relation is strict,
  that is, $H \subset G$. This contradicts that $|G| \leq |H|$.
\end{proof}

%
%
We implement \T with a tree similar to the \emph{grouping-tree} used by Buchin~\etal~\cite{grouping2015}.
Let $\{G_1 \codots G_k\}$ denote the groups stored
in \T, and let $\X' = \bigcup_{i \in [1 \codots k]} G_i$ denote the entities in these
groups.
Our tree \T has a leaf for every entity in $\X'$.
Each group $G_i$ is represented by an internal node $v_i$.
For each internal node $v_i$ the set of leaves in the subtree rooted at $v_i$ corresponds exactly to the entities in $G_i$.
By Lemma~\ref{lem:nested_sizes} these sets indeed form a tree.
With each node $v_i$, we store the size of the group $G_i$, and (a pointer to) an arbitrary entity in $G_i$.
Next to the tree we store an array containing for each entity a pointer to the leaf in the tree that represents it (or \afunc{Nil} if the entity does not occur in any group).
We preprocess \T in $O(n)$ time to support level-ancestor (LA) queries as well as
lowest common ancestor (LCA) queries, using the methods of Bender and
Farach-Colton~\cite{bender2000lca, Bender20045}.
Both methods work only for \emph{static} trees, whereas we need to allow updates to \T as well.
However, as we need to query $\T_e$ only when processing the upper end vertex of $e$, we can be lazy in updating $\T_e$.
More specifically, we delay all updates, and simply rebuild $\T_e$ when we handle its upper end vertex.

\mypar{\afunc{HasSuperSet} and \afunc{Contains} queries}
Using LA queries we can do a binary search on the ancestors of a given node.
This allows us to implement both \acall{HasSuperSet}{\T_e,G} queries and \acall{Contains}{\T_e,G} in $O(\log n)$ time for a group $G$ ending or starting on edge $e$.
Let $a$ be an arbitrary element from group $G$.
If the datastructure $\T_e$ contains a node matching the elements in $G$ then it must be an ancestor of the leaf containing $a$ in $\T$.
That is, it is the ancestor that has exactly $|G|$ elements. By
Lemma~\ref{lem:nested_sizes} there is at most one such node.
As ancestors only get more elements as we move up the tree, we find this node in $O(\log n)$ time by binary search.
Similarly, we can implement the \afunc{HasSuperSet} function in $O(\log n)$ time.

\mypar{\afunc{Insert}, \afunc{Delete}, and \afunc{Merge} queries} The
\afunc{Insert}, \afunc{Delete}, and \afunc{Merge} operations on $\T_e$ are
performed lazily; We execute them only when we get to the upper vertex of edge
$e$. At such a time we may have to process a batch of $O(n)$ such
operations. We now show that we can handle such a batch in $O(n \log^2 n)$ time.


\begin{lemma}
  \label{lem:nested_large_groups}
  Let $G_1 \codots G_m$ be maximal $\eps$-groups, ordered by decreasing size, such
  that: (i) all groups end at time $t$, (ii) $G_1 \supseteq G_i$, for all $i$,
  (iii) the largest group $G_1$ has size $s$, and (iv) the smallest group has
  size $|G_m| > s/2$. We then have that $G_i \supseteq G_{i+1}$ for all $i \in
  [1 \codots m-1]$.
\end{lemma}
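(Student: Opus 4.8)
The plan is to establish the chain of inclusions one link at a time, reducing each step to Lemma~\ref{lem:nested_sizes}. Fix an index $i \in [1 \codots m-1]$; since the groups are ordered by decreasing size we have $|G_{i+1}| \le |G_i|$, and by (i) both $G_i$ and $G_{i+1}$ end at time $t$. Thus, to conclude $G_{i+1} \subseteq G_i$ (equivalently $G_i \supseteq G_{i+1}$) via Lemma~\ref{lem:nested_sizes}, it suffices to show that $G_i \cap G_{i+1} \neq \emptyset$: the lemma then upgrades a mere nonempty intersection of the smaller group with the larger one into full containment.

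The key step is therefore to prove that any two of the groups must overlap, and this is exactly where hypotheses (ii)--(iv) enter. By (ii) both $G_i$ and $G_{i+1}$ are subsets of $G_1$, so $G_i \cup G_{i+1} \subseteq G_1$ and hence $|G_i \cup G_{i+1}| \le |G_1| = s$ by (iii). Combining this with inclusion--exclusion and the size bound from (iv), I get
\[
  |G_i \cap G_{i+1}| = |G_i| + |G_{i+1}| - |G_i \cup G_{i+1}| \;\ge\; |G_i| + |G_{i+1}| - s \;>\; \frac{s}{2} + \frac{s}{2} - s = 0.
\]
So $G_i \cap G_{i+1}$ is nonempty (and both groups are themselves nonempty, as their sizes exceed $s/2 \ge 0$; the degenerate case $s = 0$ is vacuous). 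Feeding $G = G_{i+1}$ and $H = G_i$ into Lemma~\ref{lem:nested_sizes} now yields $G_{i+1} \subseteq G_i$, which is the desired link.

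I do not expect a genuine obstacle here: once the right pairing is set up, the argument is a two-line pigeonhole estimate followed by a single invocation of Lemma~\ref{lem:nested_sizes}. The only point requiring a moment's thought is recognising \emph{why} hypothesis (iv) is stated the way it is --- the strict bound $|G_m| > s/2$ is precisely what forces every two groups, all living inside $G_1$ of size $s$, to share at least one entity, which is the one ingredient Lemma~\ref{lem:nested_sizes} cannot supply on its own. With that observation the containments $G_i \supseteq G_{i+1}$ follow for every $i \in [1 \codots m-1]$, and the lemma is proved.
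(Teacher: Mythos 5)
Your proof is correct and follows essentially the same route as the paper: both arguments observe that any two of the groups, being subsets of $G_1$ of size $s$ while each having size strictly greater than $s/2$, must intersect, and then invoke Lemma~\ref{lem:nested_sizes} to upgrade the nonempty intersection to containment. Your write-up merely makes the pigeonhole step explicit via inclusion--exclusion, which the paper leaves implicit.
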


\begin{proof}
  All groups $G_i$ are subsets of $G_1$ and have size at least $s/2$. Thus, any
  two subsets $G_i$ and $G_{i+j}$ have a non-empty intersection, i.e. $G_i \cap
  G_{i+j} \neq \emptyset$. The result then follows directly from Lemma~\ref{lem:nested_sizes}.
\end{proof}

\begin{lemma}
  \label{lem:subset_test}
  Given two nodes $v_G \in \T$ and $v_H \in \T'$, representing the set $G$ respectively $H$, both ending at time $t$, we can test if $G \subseteq H$ in $O(1)$ time.
\end{lemma}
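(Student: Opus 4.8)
The plan is to exploit the fact that both trees $\T$ and $\T'$ have been preprocessed (or can be preprocessed after the lazy rebuild) to support $O(1)$-time lowest-common-ancestor (LCA) and level-ancestor (LA) queries, together with the structural property established in Lemma~\ref{lem:nested_sizes}. The key idea is that a subset test $G \subseteq H$ between sets living in \emph{different} trees can be reduced to checking a single representative element of $G$ against the candidate node $v_H$, because the nested structure of groups ending at the same time $t$ forces any intersecting pair into a containment relation.

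First I would recall what data is attached to each node: with $v_G$ we store the size $|G|$ and a pointer to an arbitrary entity $a \in G$, and the global array lets us locate, in $O(1)$ time, the leaf $\ell_a$ representing $a$ in \emph{either} tree. So from $v_H \in \T'$ I can ask in $O(1)$ time for the leaf of $H$'s tree corresponding to $a$. The crucial observation is that $G \subseteq H$ holds if and only if (i) $a \in H$ and (ii) $|G| \le |H|$. The forward direction is immediate. For the reverse direction, note that $a \in G \cap H$ means $G \cap H \neq \emptyset$, and both $G$ and $H$ end at time $t$; since $|G| \le |H|$, Lemma~\ref{lem:nested_sizes} yields exactly $G \subseteq H$. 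Thus the set-level containment is certified by a single membership plus a size comparison.

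The remaining step is to perform the two checks in $O(1)$ time. The size comparison is trivial since sizes are stored at the nodes. For the membership test $a \in H$, I would use the leaf $\ell_a$ in tree $\T'$ (found via the array) and verify that $v_H$ is an ancestor of $\ell_a$ whose subtree has exactly $|H|$ leaves; by Lemma~\ref{lem:nested_sizes} there is at most one ancestor of $\ell_a$ of any given size, so checking $a \in H$ amounts to testing whether $\ell_a$ lies in the subtree rooted at $v_H$. This ancestor test is a standard $O(1)$ operation once LCA is available: $\ell_a$ is in the subtree of $v_H$ precisely when $\mathrm{LCA}(\ell_a, v_H) = v_H$, which the Bender--Farach-Colton preprocessing answers in constant time. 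Combining, both conditions are verified in $O(1)$.

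The main obstacle I anticipate is \textbf{not} the core argument but the bookkeeping across two distinct trees: the entity $a$ is identified in $\T$, but the membership test happens in $\T'$, and I must be careful that the per-tree arrays and the static-tree preprocessing are both current for the two trees at the moment of the query. Since the lemma's hypothesis fixes a common ending time $t$ and the lazy-rebuild strategy guarantees that $\T_e$ is fully rebuilt and re-preprocessed before it is queried at its upper vertex, I can assume both trees are static and LCA/LA-ready. The only genuinely delicate point is justifying that an $O(1)$ \emph{ancestor} test suffices in place of a full subtree-membership walk; this is exactly where the uniqueness-of-size-ancestor consequence of Lemma~\ref{lem:nested_sizes} does the work, collapsing what could be a traversal into a single constant-time LCA comparison.
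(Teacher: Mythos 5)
Your proposal is correct and follows essentially the same approach as the paper's proof: pick the representative entity $a$ stored at $v_G$, use the array of $\T'$ to locate the leaf for $a$, test $a \in H$ via a single LCA query ($\mathrm{LCA}(\ell_a, v_H) = v_H$), and conclude via Lemma~\ref{lem:nested_sizes} together with the stored sizes. Your additional care about both trees being static and LCA-ready at query time matches the paper's lazy-rebuild guarantee, so there is no gap.
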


\begin{proof}
  Let $a$ be the entity from $G$ stored with $v_G$.
  We use the array of $\T'$ to find the leaf $\ell$ in $\T'$ that represents $a$, and perform a LCA query on $\ell$ and $v_H$ in $\T'$.
  If the result is $v_H$ then $a \in H$ and Lemma~\ref{lem:nested_sizes} states that $G \subseteq H$ if and only if $|G| < |H|$.
  If the result is not $v_H$ then $a \not\in H$, and trivially $G \not\subseteq H$.
  Finding $\ell$, and performing the LCA query takes $O(1)$ time.
  As we store the group size with each node, we can also test if $|G| < |H|$ in constant time.
\end{proof}

\begin{lemma}
  \label{lem:mass_merge_in_nlog2n}
  Given $m = O(n)$ nodes representing maximal $\eps$-groups $G_1 \codots G_m$,
  possibly in different data structures $\T_1 \codots \T_m$, that all share ending
  time $t$, we can construct a new data structure \T representing $G_1 \codots G_m$
  in $O(n \log^2 n)$ time.
\end{lemma}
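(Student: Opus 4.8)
The plan is to exploit that the input groups all share the ending time $t$, so by Lemma~\ref{lem:nested_sizes} the sets $G_1 \codots G_m$ form a \emph{laminar} family: any two of them are either nested or disjoint. Hence the structure \T we must output is exactly the Hasse tree of this family, with the $m = O(n)$ groups as internal nodes and the (at most $n$) entities of $\X' = \bigcup_i G_i$ as leaves: a leaf $a$ hangs below the innermost group containing it, and a group $G_i$ hangs below the innermost group properly containing it. Building \T therefore reduces to fixing a single parent pointer for each of the $O(n)$ nodes, inside a budget of $O(\log^2 n)$ per node on average. The essential constraint is that we may \emph{not} read the groups element by element: since $\sum_i |G_i|$ can be $\Theta(n^2)$, any method inspecting the membership lists is too slow. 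Instead I would work purely with the $O(n)$ group-nodes and the constant-time containment test of Lemma~\ref{lem:subset_test} (together with its degenerate case, testing $a \in G_i$ for a single entity $a$).

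First I would sort $G_1 \codots G_m$ by size in $O(n\log n)$ time and build the containment forest by inserting the groups in \emph{increasing} order. When $G_i$ (the largest so far) is inserted, every earlier group has size at most $|G_i|$, so by Lemma~\ref{lem:nested_sizes} an earlier group $R$ lies inside $G_i$ exactly when its stored representative satisfies $a_R \in G_i$; the new children of $G_i$ are then precisely the maximal such groups, i.e.\ the current roots that lie inside $G_i$, after which $G_i$ becomes a root. Lemma~\ref{lem:nested_large_groups} keeps this clean: grouping the insertions into size classes differing by a factor two, the nested groups inside a fixed class form a single chain, so within a class the parent relation is a total order that is read off directly instead of searched for. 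Once the group tree is assembled I would attach the at most $n$ entity-leaves, each below the innermost group containing it, by point-locating the entity against the nested family (descending the group tree with membership tests, binary-searched along heavy paths, in $O(\log^2 n)$ per entity). A final $O(n)$ pass sets the stored sizes and representatives and runs the Bender--Farach-Colton preprocessing for LA and LCA queries~\cite{bender2000lca, Bender20045}.

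The main obstacle is exactly the one flagged above: locating, for each inserted node, the set of current roots it absorbs (and, for each leaf, its innermost container) \emph{without} ever touching the $\Theta(n^2)$ possible membership pairs. I would maintain the inserted nodes in a balanced search structure under a laminar-consistent order, so that the roots lying inside $G_i$ occupy a contiguous block that can be spliced out in $O(\log n)$ tree operations. The constant-time test of Lemma~\ref{lem:subset_test} decides whether a probed node is inside $G_i$, but a binary search for the block boundaries additionally needs to resolve on \emph{which side} an outside probe falls and to keep positional information current as the structure changes under the splices; this is where the extra logarithmic factor enters, raising the per-node cost to $O(\log^2 n)$ and the total to $O(n\log^2 n)$. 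Once this localization is in place, correctness follows from the laminar property, and the time bound follows because each of the $O(n)$ splices and leaf-attachments charges to a distinct edge of the final tree.
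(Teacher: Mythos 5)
Your setup is sound (the groups form a laminar family by Lemma~\ref{lem:nested_sizes}, the target structure is its Hasse tree, and the only affordable primitives are the $O(1)$ containment tests of Lemma~\ref{lem:subset_test}), but the step your whole algorithm hinges on does not work: you cannot maintain a ``laminar-consistent order'' on the current roots \emph{online} while inserting groups in increasing size order, because the correct relative order of two currently unrelated roots depends on groups that have not been inserted yet. Concretely, suppose the family is $\{1\},\{2\},\{3\},\{4\}$, then $\{1,2\}$, $\{3,4\}$, then $\{1,2,3,4\}$. After the four singletons are inserted, nothing seen so far distinguishes the order $\{1\},\{2\},\{3\},\{4\}$ from $\{1\},\{3\},\{2\},\{4\}$; if your structure holds the latter, the children of the next group $\{1,2\}$ do not form a contiguous block, and the binary search for ``block boundaries'' has nothing to search for. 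The deeper reason is that containment against an arbitrary linear order of pairwise-disjoint roots is not a monotone predicate, so binary search is not justified; monotonicity holds only along a \emph{containment chain}. (Your leaf-attachment phase has the same flaw in miniature: descending the tree by heavy paths, a binary search correctly resolves the heavy path, but at a branching node you must decide which of possibly many pairwise-disjoint \emph{light} children contains the entity, and there is no way to do this short of testing them all.)

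This is exactly what the paper's proof is engineered to avoid, and it goes in the opposite direction: top-down, by \emph{decreasing} size. After adding a dummy root, Lemma~\ref{lem:nested_large_groups} guarantees that within a cluster all groups larger than half the cluster's maximum size form a single containment chain $P$; containment along $P$ is monotone, so each remaining small group (and, similarly, each entity) can be assigned to its innermost container on $P$ by a valid $O(\log n)$-time binary search using Lemma~\ref{lem:subset_test}. Each cluster is then processed recursively, and since the maximum group size halves per round, there are $O(\log n)$ rounds of $O(n \log n)$ work each, giving $O(n \log^2 n)$. Note also that your invocation of Lemma~\ref{lem:nested_large_groups} for ``size classes'' drops one of its hypotheses: groups within a factor-two size band form a chain only when they share a common superset (which the paper's cluster recursion guarantees); without it, such groups can be pairwise disjoint. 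As written, your proof has a genuine gap at its central step, and repairing it essentially leads back to the paper's top-down chain-searching argument.
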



\noindent
{\color{darkgray}\sffamily\bfseries Proof.}
  Sort the groups $G_1 \codots G_m$ on decreasing group size.
  Let $G_1 \in \T_1$ denote the largest group and let it have size $s$.
  We assume for now that $G_1$ is a superset  of all other groups.
  If this is not the case we add a dummy group $G_0$ containing all elements.
  We process the groups in order of decreasing size.
  By Lemma~\ref{lem:nested_large_groups} it follows that all groups $G_1 \codots G_k$ that are larger than $s/2$ form a path $P$ in \T, rooted at $G$.

\begin{wrapfigure}{r}{0pt}
    \raggedleft
    \includegraphics{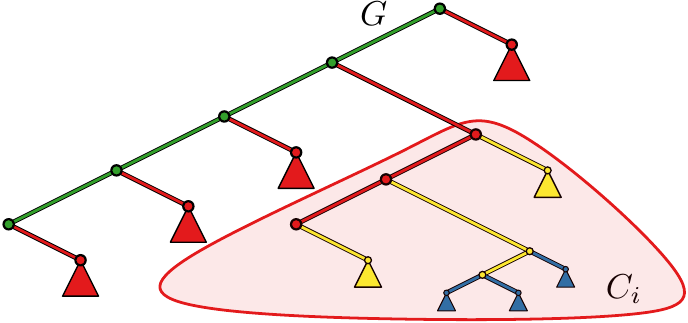}
    \caption{$\T$ is built top-down in several rounds.
    Edges and nodes are colored by round.}
    \label{fig:mass_merge}
\end{wrapfigure}
  For all remaining (small) groups $G_i$ we then find the smallest group in $P$ that is a super set of $G_i$.
  By Lemma~\ref{lem:subset_test}, we can test in $O(1)$ time if a group $H \in
  P$ is a supergroup of $G_i$ by performing a LCA query in the tree $H$
  originated from. We can then find the smallest super set of $G_i$ in $O(\log
  n)$ time using a binary search.
  Once all groups are partitioned into clusters with the same ancestor $G_i$, we process the clusters recursively.
  When the largest group in a cluster has size one we are done (see Fig.~\ref{fig:mass_merge}).

  The algorithm goes through a series of rounds.
  In each round the remaining clusters are handled recursively.
  Because all (unhandled) clusters jointly contain no more than $O(n)$ groups, each round takes only $O(n \log n)$ time in total.
  As in each round the size of the largest group left is reduced by half, it follows that after $O(\log n)$ rounds the algorithm must has constructed the complete tree.
  Updating the array with pointers to the leaves takes $O(n)$ time, as does rebuilding the tree for future LA and LCA queries.
\qed
\smallskip

\noindent
The final function \afunc{Filter} can easily be implemented in linear time by pruning the tree from the bottom up.
We thus conclude:

\begin{lemma}
  \label{lem:handle_event}
  We can handle each event in $O(n \log^2 n)$ time.
\end{lemma}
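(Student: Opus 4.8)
The plan is to prove the bound case by case over the four vertex events (\textsc{Birth}, \textsc{Extend}, \textsc{Join}, \textsc{Union}), showing that in each case bringing both halves of the status structure---the set \S of intervals and the edge structures $\T_e$---up to date costs $O(n\log^2 n)$. Two facts drive the whole argument. First, by Lemma~\ref{lem:v_incident_to_groups} the vertex $v$ being processed is incident to only $O(n)$ regions, and at any fixed height the maximal groups crossing an edge form a nested family, so each $\T_e$ near $v$ stores $O(n)$ groups and every event touches only $O(n)$ intervals in \S. Second, the individual primitives are cheap: \afunc{Contains} and \afunc{HasSuperSet} cost $O(\log n)$, \afunc{Filter} costs $O(n)$, and a single interval update in \S costs $O(\log n)$. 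The only operations whose direct realisation could exceed the target are \afunc{Insert}, \afunc{Delete}, and \afunc{Merge}, and these are exactly the ones we defer.

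I would then tally the non-deferred work in each case. For \textsc{Extend}, forming $\T_{(v,u')}=\acall{Filter}{\T_{(u,v)},\below{d,t}}$ costs $O(n)$; deciding which groups of $\T_{(w,v)}$ die (using Observation~\ref{obs:no_bend} to restrict attention to the groups arriving on $(u,v)$) spends one \afunc{HasSuperSet} call per group, for $O(n\log n)$ total; and each affected group causes $O(1)$ interval updates in \S at $O(\log n)$ each, again $O(n\log n)$. The \textsc{Join} and \textsc{Union} events are analogous bookkeeping: at most $O(n)$ \afunc{Contains}/\afunc{HasSuperSet} queries plus $O(n)$ updates to \S, while \textsc{Birth} merely traces the new maximal group in the vertical decomposition of \A in $O(|G|)=O(n)$ time. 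Thus in all four cases the work that is performed immediately is $O(n\log n)$.

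It remains to charge the deferred \afunc{Insert}, \afunc{Delete}, and \afunc{Merge} operations, which is the crux. Under the lazy scheme these are never executed in isolation: when the sweep reaches the upper end vertex of an edge $e$ we rebuild $\T_e$ from scratch out of the (at most $O(n)$) groups that ought to be stored on it. Because all of these groups share the endpoint of their interval at the point $e\cap\ell$, Lemma~\ref{lem:nested_sizes} (and its start-time analogue) makes them a nested family, so Lemma~\ref{lem:mass_merge_in_nlog2n} assembles the rebuilt tree---together with its array and its static LA/LCA auxiliary structures---in $O(n\log^2 n)$ time. Charging each deferred operation to the single rebuild that finally realises it, every event contributes only an $O(n)$-sized batch to some rebuild, so the amortized deferred cost per event is $O(n\log^2 n)$. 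Summing the $O(n\log n)$ immediate work and this amortized term over the constantly many sub-steps of each event gives the claimed bound.

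The step I expect to be the main obstacle is the accounting in this last paragraph: one must verify that the set of groups collected for a rebuild genuinely satisfies the hypotheses of Lemma~\ref{lem:mass_merge_in_nlog2n}---in particular that a common endpoint at $e\cap\ell$ forces the nesting of Lemma~\ref{lem:nested_sizes} even when start-at-$e$ and end-at-$e$ groups are mixed in $\T_e$---and that each deferred \afunc{Insert}/\afunc{Delete}/\afunc{Merge} is charged to exactly one $O(n\log^2 n)$ rebuild without double counting. Once the batch is confirmed to have size $O(n)$ and to meet the mass-merge preconditions, the per-event bound follows immediately from the earlier lemmas.
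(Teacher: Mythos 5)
Your proposal is correct and follows essentially the same route as the paper: there the lemma is stated as the direct conclusion of the data-structure section, in which \afunc{Contains}/\afunc{HasSuperSet} cost $O(\log n)$ via LA/LCA queries, \afunc{Filter} costs $O(n)$, and the deferred \afunc{Insert}/\afunc{Delete}/\afunc{Merge} operations are batched and realised by rebuilding $\T_e$ when the sweep reaches the upper endpoint of $e$, using Lemma~\ref{lem:mass_merge_in_nlog2n} at $O(n\log^2 n)$ per batch. The one point you flag as a potential obstacle resolves itself: an edge of \H with positive (negative) slope can only serve as an ending (starting) edge, so start-at-$e$ and end-at-$e$ groups are never mixed in $\T_e$, and every batch consists of groups sharing a common ending time (or, symmetrically, starting time) at $e\cap\ell$, which is exactly the precondition of Lemma~\ref{lem:mass_merge_in_nlog2n}.
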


\subsection{Maximal Groups}

\mypar{Reconstructing the grouping polygons} Given a group $G$, represented
by a pointer to the top edge of $Q_G$ in \Lst, we can construct the complete
group polygon $Q_G$ in $O(|Q_G|)$ time, and list all group members of $G$ in
$O(|G|)$ time.
We have
access to the top edge of $Q_G$. This is an interval $I(G,\hat{\eps})$ in \S,
specifically, the version corresponding to $\hat{\eps}$, where $\hat{\eps}$ is
the value at which $G$ dies as a maximal group. We then follow the pointers to
the previous versions of $I(G,\cdot)$ to construct the left and right chains of
$Q_G$. When we encounter the value $\check{\eps}$ at which $G$ is born, these
chains either meet at the same vertex, or we add the final bottom edge of $Q_G$
connecting them. To report the group members of $G$, we follow the pointer to
$I(G,\hat{\eps})$ in \S. This interval stores a pointer to its starting edge
$e$, and to a subtree in $\T_e$ of which the leaves represent the entities in
$G$.

\mypar{Analysis} The list \Lst contains $O(g) = O(|\A|n^2)$ entries
(Theorem~\ref{thm:num_max_distinct_groups}), each of constant size. The total
size of all \S's is $O(|\H|n)$: at each vertex of \H, there are only a linear
number of changes in the intervals in \S. Each edge $e$ of \H
stores a data structure $\T_e$ of size $O(n)$. It follows that our
representation uses a total of $O(|\H|n) = O(|\A|n^2)$ space. Handling each of
the $O(|\H|)$ nodes requires $O(n\log^2 n)$ time, so the total running time is
$O(|\A|n^2\log^2 n)$.

\begin{theorem}
  \label{thm:compute_combinatorial_max_groups}
  Given a set \X of $n$ entities, in which each entity travels along a
  trajectory of $\tau$ edges, we can compute a representation of all $g =
  O(|\A|n^2)$ combinatorial maximal groups \G such that for each group in \G we
  can report its grouping polygon and its members in time linear in its
  complexity and size, respectively. The representation has size $O(|\A|n^2)$
  and takes $O(|\A|n^2 \log^2 n)$ time to compute, where $|\A| = O(\tau n^2)$ is the
  complexity of the trajectory arrangement.
\end{theorem}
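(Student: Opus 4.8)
The plan is to assemble the pieces developed in this section into a single sweep and then verify the three claims (correctness of the representation, its size, and the running time) in turn. First I would fix $\delta = 0$ and $m = 1$, as justified at the start of Section~\ref{sec:algo}, and compute the arrangement $\H$ explicitly: by Lemma~\ref{lem:complexity_H} it has complexity $O(|\A|n)$ and can be built in $O(|\H|) = O(\tau n^3)$ time, storing with each edge the pair of faces of $\A$ responsible for it. This static scaffold, annotated with the per-edge structures $\T_e$ and augmented by the list $\Lst$ and the status set $\S$, is precisely the representation the theorem promises to produce.

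The heart of the argument is a correctness invariant, established by induction over the vertex events encountered by the horizontal sweep line $\ell(\eps)$: immediately after $\ell$ passes a vertex $v$, the structure $\S$ together with the $\T_e$ for edges crossing $\ell$ exactly represent the set of maximal $\eps$-groups whose group polygon intersects $\ell(\eps)$, with each group's starting and ending edge recorded correctly. To prove the inductive step I would go through the four event types (\textsc{Birth}, \textsc{Extend}, \textsc{Join}, \textsc{Union}) and check that the update rules described above restore the invariant. The \textsc{Extend} case is the delicate one: there Observation~\ref{obs:no_bend} guarantees that groups entering on the steeper edge cannot continue onto the shallower outgoing edge, which is what makes the \afunc{Filter}/\afunc{Merge} split correct, while the dominance tests rely on Lemma~\ref{lem:nested_sizes} to recognise sub/supergroup relations among groups sharing an endpoint. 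I expect this case analysis --- in particular tracking when a group's starting edge ``jumps'' and when a group switches between being maximal and being dominated --- to be the main obstacle, since it is where the correctness of the interval bookkeeping in $\S$ must be argued event by event.

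With correctness in hand, the size bound is routine accounting. By Theorem~\ref{thm:num_max_distinct_groups} there are $g = O(|\A|n^2)$ combinatorial maximal groups, so $\Lst$ holds that many constant-size entries. For $\S$, I would charge each interval change to the vertex of $\H$ that triggers it; by Lemma~\ref{lem:v_incident_to_groups} each vertex is incident to only $O(n)$ group polygons, so the total number of interval versions created over the whole sweep is $O(|\H|n)$. Each edge's $\T_e$ has size $O(n)$, contributing a further $O(|\H|n)$, so the representation uses $O(|\H|n) = O(|\A|n^2)$ space in total. For the running time, the sweep processes $O(|\H|)$ vertices, and by Lemma~\ref{lem:handle_event} each event --- including the lazy rebuild of the relevant $\T_e$ via Lemma~\ref{lem:mass_merge_in_nlog2n} --- costs $O(n\log^2 n)$, giving $O(|\H|n\log^2 n) = O(|\A|n^2\log^2 n)$; the $O(\tau n^3)$ cost of building $\H$ is dominated by this.

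Finally, the reporting claims follow from the pointer structure. The top edge of $Q_G$ is the interval $I(G,\hat\eps) \in \S$ at the value $\hat\eps$ where $G$ dies; following the chain of previous-version pointers down to the birth value $\check\eps$ reconstructs the two $\eps$-monotone chains (and possibly the bottom edge) of $Q_G$ in $O(|Q_G|)$ time, since each step emits one vertex. The starting-edge pointer stored with $I(G,\hat\eps)$ exposes a subtree of some $\T_e$ whose leaves are exactly the members of $G$, which we can enumerate in $O(|G|)$ time. Substituting $|\A| = O(\tau n^2)$ into the space and time bounds then yields the stated $O(\tau n^4)$-type estimates and completes the proof.
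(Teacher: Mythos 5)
Your proposal follows the paper's own proof essentially verbatim: build $\H$, sweep upward maintaining $\S$, the per-edge structures $\T_e$, and $\Lst$, handle the four event types (with Observation~\ref{obs:no_bend} and Lemma~\ref{lem:nested_sizes} doing the work in the \textsc{Extend} case), charge space to vertices via Lemma~\ref{lem:v_incident_to_groups} and time via Lemma~\ref{lem:handle_event}, and reconstruct polygons and members by following the version pointers in $\S$ and the subtree in $\T_e$. The accounting ($O(|\H|n)=O(|\A|n^2)$ space, $O(|\H|)\cdot O(n\log^2 n)=O(|\A|n^2\log^2 n)$ time, with the $O(\tau n^3)$ construction of $\H$ dominated) matches the paper's analysis, so the proposal is correct and takes the same route.
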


\section{Data Structures for Maximal Group Queries}
\label{sec:data_structures}

In this section we present data structures that allow us to efficiently obtain
all groups for a given set of parameter values
(Section~\ref{sub:Quering_the_maximal_groups}), and for the interactive
exploration of the data (Section~\ref{sub:Symmetric_Difference}). Throughout
this section, $n$ denotes the number of entities considered, $\tau$ the number
of vertices in any trajectory, $k$ the output complexity, i.e.~the number of
groups reported, $g$ the number of maximal groups, $g'$ the maximum number of
maximal groups for a given (fixed) value of $\eps$, and $\Pi$ the total
complexity of the regions corresponding to the $g$ combinatorially different
maximal groups. So we have $g'=O(\tau n^3)$ and $g\leq \Pi = O(\tau n^4)$. When
$g'$, $g$, or $\Pi$ appear as the argument of a logarithm, we write $O(\log
\taun)$.

\subsection{Quering the maximal groups}
\label{sub:Quering_the_maximal_groups}

We show that we can store all
groups in a data structure of size $O(\Pi\log \taun \log n)$ that can be built in
$O(\Pi \log^2 \taun\log n )$ time, and allows reporting all $(m,\eps,\delta)$-groups
in $O( \log^2 \taun\log n + k)$ time.
We use the following three-level tree to achieve this.

\begin{wrapfigure}{r}{0pt}
  \raggedleft
  \includegraphics{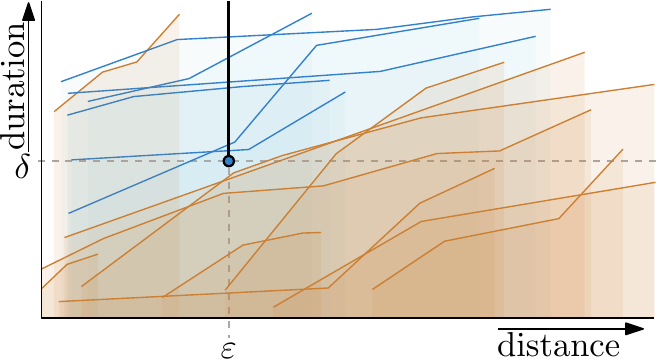}
  \caption{The functions $D_G$ expressing the duration of group $G$ as a
    function of $\eps$. Assuming all groups have size at least $m$, all
    $(m,\eps,\delta)$-groups intersect the upward vertical half-ray starting in
    point $(\eps,\delta)$.}
  \label{fig:x-monotone_polyline_stabbing}
\end{wrapfigure}
On the first level we have a balanced binary tree with in the leaves the group
sizes $1...n$. Each internal node $v$ corresponds to a range $R_v$ of
group sizes and stores all groups whose size lies in the range $R_v$. Let
$\G_v$ denote this set of groups, and for each such group let $D_G$ denote
the duration of group $G$ as a function of $\eps$. The functions
$D_G$ are piecewise-linear, $\delta$-monotone, and may intersect (see
Fig.~\ref{fig:x-monotone_polyline_stabbing}). By
Theorem~\ref{thm:num_max_distinct_groups} the total complexity of
these functions is $O(\Pi)$.
We store all functions $D_G$, with $G \in \G_v$, in
a data structure that can answer the following \emph{polyline stabbing}
queries in $O(\log^2 \taun + k)$ time:
Given a query point $q=(\eps,\delta)$, report all polylines that pass above point $q$,
that is, for
which $D_G(\eps) \geq \delta$. Thus, given parameters $m$, $\eps$, and
$\delta$, finding all $(m,\eps,\delta)$-groups takes $O(\log^2 \taun\log n + k)$ time.

We build a segment tree storing the ($\eps$-extent of the) individual edges
of all polylines stored at $v$. An internal node $u$ of the segment tree
corresponds to an interval $I(u)$, and stores the set of edges
$\mathapply{Ints}{u}$ that completely span $I(u)$. Hence, with respect to $u$,
we can consider these segments as lines. For a query with a point $q$, we
have to be able to report all (possibly intersecting) lines from
$\mathapply{Ints}{u}$ that pass above $q$. We use a duality transform to
map each line $\ell$ to a point $\ell^*$ and query point $q$ to a line
$q^*$. The problem is then to report all points $\ell^*$ in the half-plane
below $q^*$. Such queries can be answered in $O(\log h + k)$ time, using $O(h)$
space and $O(h \log h)$ preprocessing time, where $h$ is the number of points
stored~\cite{chazelle1985duality}. It follows that we can find all $k$
polylines that pass above $q$ in $O(\log^2 \taun + k)$ time, using $O(\Pi \log \taun)$
space, and $O(\Pi \log^2 \taun)$ preprocessing time. We thus obtain the following
result:

\begin{theorem}
  \label{thm:query_groups}
  Given parameters $m$, $\eps$, and $\delta$, we can build a data structure of
  size $O(\Pi\log \taun \log n)$, using $O(\Pi  \log^2 \taun\log n)$ preprocessing time,
  which can report all $(m,\eps,\delta)$-groups in $O( \log^2 \taun\log n + k)$
  time, where $k$ is the output complexity.
\end{theorem}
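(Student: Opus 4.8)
The plan is to assemble the three-level structure already outlined and to verify its correctness and resource bounds. The first step is to fix the \emph{reporting criterion}: a combinatorially distinct maximal group $G$ (computed for $m=1$, $\delta=0$) is a maximal $(m,\eps,\delta)$-group exactly when $|G|\ge m$ and $D_G(\eps)\ge\delta$. The containment ``$\supseteq$'' of the two sets of groups is the monotonicity noted in Section~\ref{sub:The_Number_of_Distinct_Maximal_Groups,_over_all_Parameters}; for the converse I would observe that the two maximality conditions (no $\eps$-connected superset on the same interval, no longer $\eps$-connected interval) do not involve $m$ or $\delta$, so simply thresholding a maximal $(1,\eps,0)$-instance by size and by the duration of its time interval at height $\eps$ produces precisely the maximal $(m,\eps,\delta)$-groups. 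This turns the query into: report every group with $|G|\ge m$ whose duration polyline $D_G$ passes through or above the query point $q=(\eps,\delta)$.

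Next I would handle the size constraint with the first-level balanced binary tree. Decompose $[m,n]$ into its $O(\log n)$ canonical nodes and query the secondary structure of each such node $v$ on $\G_v$. Since a group of size $s\in[m,n]$ lies in the subtree of exactly one canonical node, and groups are stored at all ancestors of their leaf, each qualifying group is touched at exactly one canonical node; hence nothing is double-reported and the size filter costs only a single $\log n$ factor.

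For the secondary structure I would use exactly the polyline-stabbing solution described above: a segment tree on the $\eps$-extents of the edges of the $D_G$, each edge stored at the $O(\log\taun)$ canonical nodes it spans. At a node $u$ the stored edges completely cross $I(u)$, so they act as lines and ``which pass above $q$'' becomes a half-plane reporting query, answered by the dual point/half-plane structure of~\cite{chazelle1985duality} in $O(\log h + k_u)$ time and linear space. The query walks the $O(\log\taun)$-length root-to-leaf path to $\eps$ and queries each node on it; the standard segment-tree property that a point stabs each stored edge at exactly one path node guarantees each $D_G$ is output once. This gives $O(\log^2\taun+k_v)$ per first-level node, with per-node space $O(\cdot\log\taun)$ and preprocessing $O(\cdot\log^2\taun)$ in the stored complexity. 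Because the total complexity of the $D_G$ is $O(\Pi)$ (Theorem~\ref{thm:num_max_distinct_groups}) and each group is replicated over $O(\log n)$ first-level nodes, these complexities sum to $O(\Pi\log n)$; multiplying through yields space $O(\Pi\log\taun\log n)$, build time $O(\Pi\log^2\taun\log n)$, and query time $O(\log n\cdot\log^2\taun)+O(\sum_v k_v)=O(\log^2\taun\log n+k)$, as claimed.

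The one genuinely delicate point — the ``hard part'' — is that the polylines $D_G$ may cross and so cannot be sorted by height; this is exactly why a naive search fails and why the segment-tree decomposition into the lines-only subproblem plus planar duality is required, and why one must lean on the single-stab property of the segment tree (together with the single canonical node of the first level) to certify that every reported group is emitted exactly once.
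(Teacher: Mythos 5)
Your proposal is correct and follows essentially the same construction as the paper: a balanced binary tree on group sizes whose nodes store the duration polylines $D_G$, a segment tree on the $\eps$-extents of their edges, and the dual half-plane reporting structure of~\cite{chazelle1985duality} at each segment-tree node, yielding identical space, preprocessing, and query bounds. The correctness details you make explicit (the thresholding criterion relating maximal $(1,\eps,0)$-instances to maximal $(m,\eps,\delta)$-groups, and exactly-once reporting via the canonical-node and single-stab properties) are left implicit in the paper but are consistent with its argument.
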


\subsection{Symmetric Difference Queries}
\label{sub:Symmetric_Difference}

Here we describe data structures for the interactive exploration of the
data. We often have all $(m,\eps,\delta)$-groups, for some parameters $m$,
$\eps$, and $\delta$, and we want to change (some of the) parameters, say to
$m'$, $\eps'$, and $\delta'$, respectively. Thus, we need a way to efficiently
report all changes in the maximal groups. This requires us to solve
\emph{symmetric difference queries}, in which we want to efficiently report all
maximal $(m,\eps,\delta)$-groups that are no longer maximal for parameters
$m'$, $\eps'$, and $\delta'$, and all maximal $(m',\eps',\delta')$-groups that
were not maximal for parameters $m$, $\eps$, and $\delta$. That is, we wish to
report $\G(m,\eps,\delta) \symd \G(m',
\eps',\delta')$.

\mypar{Changing only $\delta$} Consider the case in which we vary only
$\delta$, and keep $m$ and $\eps$ fixed, that is, $m'=m$ and
$\eps'=\eps$. With fixed $\eps$, it suffices to use the
algorithm from Buchin~\etal~\cite{grouping2015} to compute all maximal
$\eps$-groups with size at least $m$. There are at most $g'$ such groups.
Each group $G$ corresponds to an interval $I_G =
(-\infty,\mathapply{duration}{G}]$ such that $G$ is
maximal for a choice $\delta$ of the duration parameter if and only if
$\delta\in (-\infty,\mathapply{duration}{G}]$.
We now have two values $\delta$ and $\delta'$, and we should report all
intervals in $S(\delta) \symd S(\delta')$, where $S(x)$ denotes the intervals
that contain $x$.

Note that we can assume
without loss of generality that $\delta \leq \delta'$. Then we
observe that we should report group $G$ if and only if
$\mathapply{duration}{G}\in [\delta,\delta']$. Hence, our data structure
is simply a balanced binary search tree on at most $g'$ values
$\mathapply{duration}{G}$ and a symmetric difference query is
a 1-dimensional range query.

\mypar{Changing only $m$} The case in which we vary only $m$ can be solved
analogously to the previous case. A maximal group $G$ has a size $|G|$, and
$G$ should be reported if and only if $|G|\in [m,m')$, assuming that the group size
changes from $m$ to $m'$ or vice versa, with $m<m'$.


\begin{wrapfigure}{r}{0pt}
    \centering
    \includegraphics{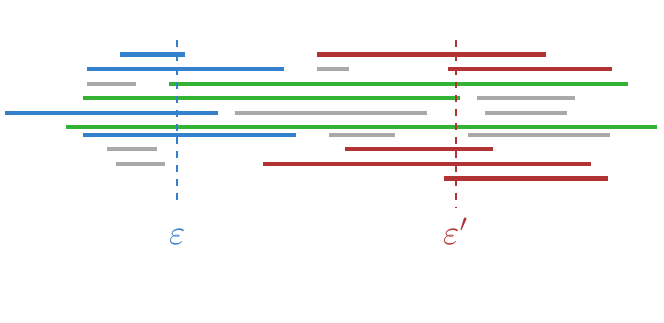}
	\caption{The symmetric difference for parameters $\eps$ and $\eps'$ (red and
    blue intervals) is exactly the set of intervals that contains either $\eps$
    or $\eps'$, but not both. Hence, the green intervals should not be
    reported.}\label{fig:symm_diff_intervals}
\end{wrapfigure}
\mypar{Changing only $\eps$} The minimum duration $\delta$ is fixed, so
consider the $\delta$-truncated grouping polygons (i.e.~the regions $A_G$ where
each local minimum has been replaced by a horizontal line segment of width
$\delta$). Compute all combinatorially distinct maximal groups for this
parameter $\delta$ and remove all groups that have size less than~$m$.  A
group $G$ is now maximal during some interval $I_G =
[\check{\eps}_G,\hat{\eps}_G]$, and we have to report $G$ if (and only if)
$I_G$ occurs in the set $S(\eps) \symd S(\eps')$. We now observe that this is
the case exactly when $I_G$ contains $\eps$ or $\eps'$, but not both (see
Fig.~\ref{fig:symm_diff_intervals}). Using an interval tree we can thus report
the symmetric difference for $\eps$ in $O(\log \taun + k)$ time, using $O(\Pi)$
space and $O(\Pi \log \taun)$ preprocessing time.\footnote{Note that we now
  have $O(\Pi)$ groups (intervals) rather than $O(g')$.}

\mypar{Changing $\delta$ and $m$ simultaneously} Consider the space $\delta
\times m$. A group $G$ is maximal in the quadrant
$(-\infty,\mathapply{duration}{G}] \times (-\infty,|G|]$ with top-right corner
$p_G = (\mathapply{duration}{G},|G|)$.  So, for parameters $\delta$ and $m$,
the set of maximal $(m,\eps,\delta)$-groups corresponds to the set of
corner points that lie to the top-right of $(\delta,m)$.
It now follows that when we change the parameters to
$(\delta',m')$, the maximal groups that we have to report lie in
$Q_{(\delta,m)} \symd Q_{(\delta',m')}$ (see
Fig.~\ref{fig:symmetric_difference_delta_m}). We can report those points
(groups) by two three-sided (orthogonal) range queries. Therefore, we store the
corner points in a priority search tree~\cite{bkos-cgaa-08}, and thus solve
symmetric difference queries in $O(\log \taun + k)$ time, and $O(g')$
space. Building a priority search tree takes $O(g' \log \taun)$ time.

\begin{figure}[tb]
  \centering
  \includegraphics{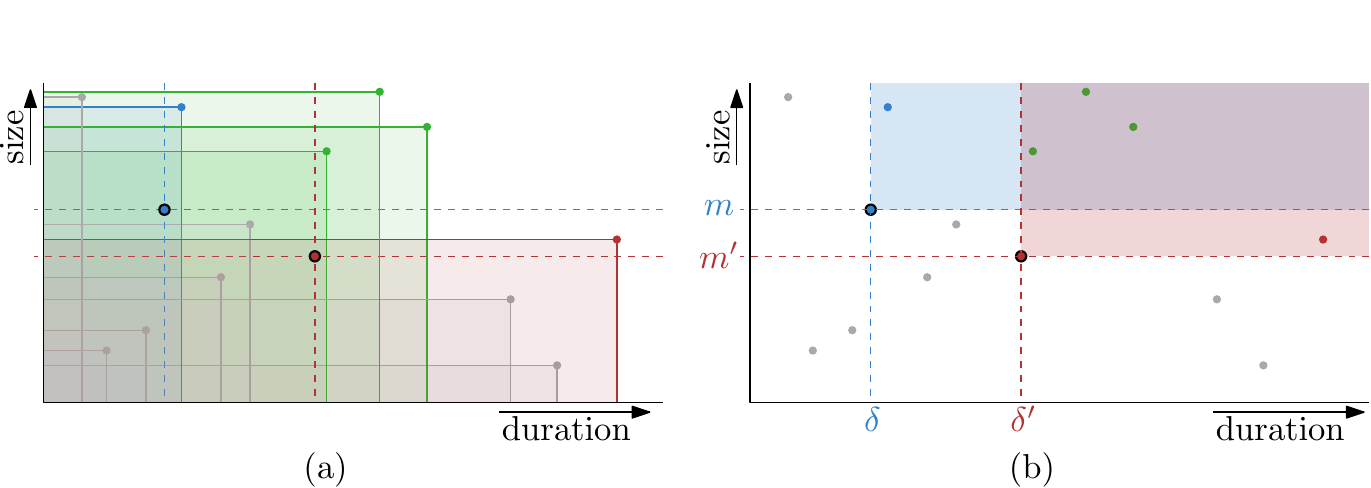}
  \caption{Symmetric difference queries when we allow varying $\delta$ and
    $m$. Each combinatorial maximal group $G$ corresponds to a lower-left
    quadrant in the space $\delta \times m$ (a). For given
    parameters $m$, $\eps$, and $m$, all $(\delta,\eps,m)$-groups lie to the
    top-right of the point $(\delta,m)$. Therefore, the groups that have to be
    reported in a symmetric-difference query (shown in red and blue) can be
    reported by two three-sided range queries.}
  \label{fig:symmetric_difference_delta_m}
\end{figure}

\mypar{Changing $\eps$ and $m$ simultaneously} Consider the space $\eps
\times m$. A group $G$ is now a maximal group in a bottomless rectangle
$R_G=[\check{\eps}_G,\hat{\eps}_G] \times (-\infty,|G|]$. See
Fig.~\ref{fig:symd_eps_m}. Thus, for parameters $\eps$ and $m$ the maximal
groups all contain the point $(\eps,m)$. We find the groups
that we have to report by combining the approaches for varying only $\eps$ and
varying only $m$.
Observe that $G$ should be reported if (and only if) $(\eps, m)$ is in the
rectangle $R_G$ and $(\eps',m')$ is not, or vice versa. Assume we test for the former.
We can solve this query problem
with three very similar two-level data structures.
The first is a binary search tree on all groups $G$ sorted on $\check{\eps}_G$.
An internal node $v$ is associated to a subset $\G_v$ of groups that appear in
the subtree rooted at $v$. We store $\G_v$ by storing the horizontal line
segments $[\check{\eps}_G,\hat{\eps}_G] \times |G|$ in a hive graph~\cite{Chazelle86},
preprocessed for planar point location queries.
If $h=|\G_v|$, then this structure uses $O(h)$ storage and and allows us
to report all line segments of $\G_v$ that lie vertically above a query point
in $O(\log h + k)$ time. We query the main tree with $\eps'$ and select a
subset of $O(\log \taun)$ nodes whose associated subsets contains exactly the
groups $G$ with $\eps'< \check{\eps}_G$. This implies that $(\eps',m')$ is
not inside $R_G$. The second-level structure allows us to find those groups
whose rectangle $R_G$ contains $(\eps,m)$.
The second data structure is different only in its main tree, which is sorted on
$\hat{\eps}_G$, and we will select the nodes whose associated subsets
contains exactly the groups with $\eps'>\hat{\eps}_G$. The third data structure
is again different in the main tree only, and is sorted on $|G|$. Here we select
nodes whose associated subsets have $m'>|G|$. Together, the three main trees
capture that $(\eps',m')$ is not in $R_G$ and the associated structures
capture that $(\eps,m)$ is in $R_G$. We report any group in the symmetric difference
at most twice.
The data structure uses $O(\Pi \log \taun)$ space and takes $O(\Pi \log^2 \taun)$ time
to build. The query time is $O(\log^2 \taun + k)$.

\begin{wrapfigure}[16]{r}{0.54\textwidth}
  \vspace{-.5\baselineskip}
  \centering
  \includegraphics{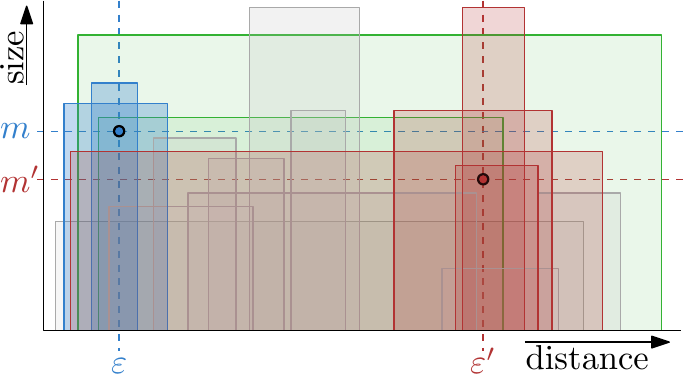}
  \caption{Symmetric difference queries when we allow varying $\eps$ and $m$.
    Each combinatorial maximal group $G$ now corresponds to a bottomless
    rectangle. We find the groups that are maximal for only one pair of
    parameters (i.e. the red and blue groups) by a query in a two-level tree
    for symmetric-difference queries. }
  \label{fig:symd_eps_m}
\end{wrapfigure}

\mypar{Changing $\eps$ and $\delta$, one by one} Consider the space $\eps
\times \delta$. A group $G$ is maximal in the region below the partial,
piecewise-linear, and monotonically increasing function $D_G$ that expresses
the duration of $G$ as a function of $\eps$. Each such partial function is
defined for a single interval of $\eps$-values. See
Fig.~\ref{fig:x-monotone_polyline_stabbing}.
Note that the polylines representing $D_G$ and $D_H$, for two groups $G$ and
$H$, may intersect. The combination of non-orthogonal boundaries and
intersections makes changing $\eps$ and $\delta$ much harder than changing
$\eps$ and $m$.

Consider changing parameter $\delta$ to $\delta'$, while keeping $\eps$
unchanged. For such a query we thus have to report all groups $G$ whose
polyline $D_G$ intersects the vertical query segment
$Q=\overline{(\eps,\delta)(\eps,\delta')}$. We use the following data structure
to answer such queries. We build a segment tree storing the individual edges of
the polylines. Like in Section~\ref{sec:data_structures},
each node $v$ in this tree now corresponds to a set $\mathapply{Ints}{v}$
of polyline edges (one per polyline) that completely cross the interval $I_v$
associated with $v$. We again treat these edges as lines. We store the
$h=|\mathapply{Ints}{v}|$ lines in a data structure by Cheng and
Janardan~\cite{cheng1992intersectionsearching} that has size $O(h \log h)$, can
be built in $O(h\log^2 h)$ time, and allows reporting all (possibly
intersecting) lines that intersect $Q$ in $O(\sqrt{h}2^{\log^*h}\log h + k)$
time. Since for any value $\eps$ there are at most $O(g')$ groups, we also have
that for any node $v$, $|\mathapply{Ints}{v}| = O(g')$. It follows that we can
answer symmetric difference queries in $\delta$ in $O(\sqrt{g'}2^{\log^*
  \taun}\log^2\taun + k)$ time, after $O(\Pi \log^3 \taun)$ preprocessing time,
and using $O(\Pi \log^2 \taun)$ space.

Consider changing parameter $\eps$ to $\eps'$, while keeping $\delta$ unchanged.
For such a query we have to report all groups $G$ whose
polyline $D_G$ is above exactly one end point of the horizontal query segment
$Q=\overline{(\eps,\delta)(\eps',\delta)}$. Since all polylines $D_G$ are
$\eps$ and $\delta$-monotone we could use the same approach as before,
reversing the roles of $\eps$ and $\delta$.
However, a horizontal line may intersect $O(g)$ polylines
rather than $O(g')$, causing $\sqrt{g}$ to appear in the query time rather
than $\sqrt{g'}$. This may be significantly worse.
Instead, observe that there are three ways in which $Q$ can have exactly one
end point below $D_G$. The two cases where one end point of $Q$ is outside the
$\eps$-range of $D_G$ are easily handled with a two-level tree. The first level
is a binary search tree on the $\eps$-range of $D_G$, and allows us to find the
groups for which $D_G$ is either defined completely before, or completely after
$\eps$. All these groups are \emph{not} maximal for parameter $\eps$, so among
them we have to select the ones that are maximal for parameters $\eps'$ and
$\delta'$. Our second level is thus the data structure from
Section~\ref{sec:data_structures}. This leads to a data structure of size $O(\Pi \log^2 \taun)$
and query time $O(\log^3 \taun + k)$. The third case concerns the situation where the
$\eps$-range of $Q$ is contained in the $\eps$-range of $D_G$. In that case we need
to test whether $Q$ intersects $D_G$.
We use a hereditary segment tree~\cite{hereditary} on the $\eps$-ranges of all
segments $S$ of all $D_G$,
and at each node $v$, we use associated structures for the cases where segments of
$S$ are ``long'' and $Q$ is ``short'', and vice versa.
For the segments $S_v$ of $S$ that are long at $v$,
we observe that there are only $O(g')$ of them, because they have a common
$\eps$-value. Furthermore, there can be at most one long segment in $S_v$ for each
group $G$. Hence, we can use the data structure by Cheng and
Janardan~\cite{cheng1992intersectionsearching} to report the ones intersecting $Q$.
For the segments of $S$ that are short at $v$, we know that the query segment is long
and horizontal, so we can just consider the $\delta$-span of each short segment.
However, we must still ensure that the polyline $D_G$ that a short segment is part
of, extends to the right beyond $Q$. Both conditions together lead again to a hive
graph, preprocessed for planar point location.
The data structure has size $O(\Pi\log^2 \taun)$, query time
$O(\sqrt{g'}2^{\log^* \taun}\log^2\taun +k)$, and can be built in $O(\Pi\log^3 \taun)$ time.

\mypar{Changing all three parameters, one by one} To support changing all
three parameters, we combine some of the previous approaches. We build two
separate data structures; one to change $m$, the other to change $\eps$ or
$\delta$. The data structure to change $m$ is simply the data structure from
Section~\ref{sec:data_structures}. The first level of this tree
allows us to find $O(\log n)$ subtrees containing the groups whose size is in
the range $(\min\{m,m'\},\max\{m,m'\}]$. We then use the associated data
structures to report the groups that are long enough (with respect to
parameters $\eps$ and $\delta$). Thus, we can answer such queries in $O(\log n
\log^2 \taun + k)$ time. To support changing $\eps$ or $\delta$ we extend the
solution from the previous case: we simply add an other level to the structure,
that allows us to filter the groups that intersect a query segment $Q$ in the
$(\eps,\delta)$-plane by size. This yields a query time of
$O(\sqrt{g'}2^{\log^* \taun}\log^2\taun \log n + k)$. The size and preprocessing
time remain unaffected, when compared to the previous situation.

\mypar{Changing $\eps$ and $\delta$ simultaneously} We build a data
structure that allows us to report the maximal groups for parameters $\eps$ and
$m$ as a small number of canonical subsets. For each of these canonical subsets
we store a data structure that allows us to efficiently report the groups that
are \emph{not} maximal for parameters $\eps'$ and $\delta'$. Symmetrically, we
find the groups that are not maximal for parameters $\eps$ and $\delta$ and maximal for
$\eps'$ and $\delta'$. So, basically, we need two layers of the data structure
from Section~\ref{sec:data_structures}.\footnote{We described this
  data structure for reporting all maximal groups for $\eps$ and $\delta$. But it
  is easy to see that we can also use it to report all groups that are not
  maximal for $\eps$ and $\delta$: we simply have to report all polylines below,
  rather than above, point $(\eps,\delta)$.}

Recall that the data structure from
Section~\ref{sec:data_structures} is a segment tree with associated
data structures that allow half-plane range reporting. Unfortunately, the data
structure that we use for the half-space range reporting does not report the
result as a small number of canonical subsets. So, for the first layer of our
data structure we replace these half-plane range reporting data structures by a
partition tree~\cite{matousek1992partitiontrees}. For the second layer we can
use the data structure from Section~\ref{sec:data_structures} as
is. It follows that we can now find all groups that are maximal for $\eps$ and
$\delta$ but not maximal for $\eps'$ and $\delta'$ in $O(\sqrt{g'}2^{\log^* \taun}\log^3
\taun + k)$ time. The data structure uses $O(\Pi \log^2 \taun)$ space, and can be built
in $O(\Pi \log^3 \taun)$ time. We can thus solve symmetric difference queries in the
same time (and with the same amount of space).

\mypar{Changing all three parameters simultaneously} We use the same
approach as above, expressing the groups alive for parameters $m$, $\eps$, and
$\delta$ as a small number of canonical subsets, for each of which we store the
data structure from Section~\ref{sec:data_structures}. It follows
that we can report symmetric difference queries in $O(\sqrt{g'}2^{\log^*
  \taun}\log^3 \taun\log^2 n + k)$ time, using $O(\Pi \log^2 \taun\log^2 n)$ space and
  $O(\Pi \log^3 \taun\log^2 n)$ preprocessing time.

The following theorem summarizes our results:

\begin{theorem}
  \label{thm:symmetric_difference_queries}
  Let $(m,\eps,\delta)$ and $(m',\eps',\delta')$ be two configurations of
  parameters. In $O(P(\Pi,g',n))$ time we can build a data structure of size
  $O(S(\Pi,g',n))$ for symmetric difference queries, that is, we can report all
  groups in $\G(\eps,m,\delta) \symd \G(\eps',m',\delta')$, in $O(Q(\Pi,g',n,k))$
  time. In these results $\Pi$ denotes the total complexity of all
  combinatorially different maximal groups (over all values $\eps$), $g'$ the
  number of maximal groups for a fixed value $\eps$, $n$ the number of
  entities, $\tau$ the number of vertices in a trajectory, and $k$ the output
  complexity. The functions $P$, $S$, and $Q$
  depend on which of the parameters are allowed to change (other
  parameters are assumed to be fixed and known at preprocessing time). We have
  \\\\
  \begin{tabularx}{\textwidth}{ccXX}
    \toprule
    Variable Param. & Query time $Q(\Pi,g',n,k)$        & Space $S(\Pi,g',n)$ & Preproc. $P(\Pi,g',n)$ \\
    \midrule
    \multicolumn{4}{l}{Changing one parameter at a time} \\
    \midrule
    $\delta$           & $\log \taun + k$                    & $g' \log \taun$  & $g'\log \taun$\\
    $m$                & $\log \taun + k$                    & $g' \log \taun$  & $g'\log \taun$\\
    $\eps$             & $\log \taun  + k$                    & $\Pi \log \taun$  & $\Pi\log \taun$\\
    $\delta, m$        & $\log \taun + k$                    & $g' \log \taun$  & $g'\log \taun$\\
    $\eps, m$          & $\log^2 \taun + k$                  & $\Pi \log^2  \taun$ & $\Pi \log^2 \taun$\\
    $\eps, \delta$     & $\sqrt{g'}2^{\log^* \taun}\log^2 \taun + k$   & $\Pi \log^2\taun$ & $\Pi \log^3 \taun$ \\
    $\eps, \delta, m$  & $\sqrt{g'}2^{\log^* \taun}\log^2\taun \log n +
    k$ & $\Pi \log^2\taun $    & $\Pi \log^3 \taun$ \\
    \midrule
    \multicolumn{4}{l}{Changing multiple parameters at the same time} \\
    \midrule
    $\delta, m$       & $\log \taun + k$                              & $g' \log  \taun$ & $g'\log \taun$\\
    $\eps, m$         & $\log^2 \taun + k$                            & $\Pi \log^2  \taun$ & $\Pi\log^2 \taun$\\
    $\eps, \delta$    & $\sqrt{g'}2^{\log^* \taun}\log^3 \taun + k$         & $\Pi \log^2 \taun$ & $\Pi \log^3 \taun$ \\
    $\eps, \delta, m$ & $\sqrt{g'}2^{\log^* \taun}\log^3 \taun\log^2 n + k$ & $\Pi \log^2 \taun\log^2 n$ & $\Pi \log^3 \taun\log^2 n$ \\
    \bottomrule
  \end{tabularx}
\end{theorem}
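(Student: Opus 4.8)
The plan is to prove the theorem row by row, since the statement consolidates the constructions developed throughout this section. The unifying idea I would establish up front is geometric: each combinatorially distinct maximal group $G$ corresponds to a region in the relevant parameter subspace --- a lower-left quadrant when only $m$ and $\delta$ vary, a bottomless rectangle when $\eps$ and $m$ vary, or the region below the monotone polyline $D_G$ when $\eps$ and $\delta$ vary --- and $G$ is maximal for a parameter tuple exactly when the corresponding query point lies inside that region. Consequently $\G(m,\eps,\delta) \symd \G(m',\eps',\delta')$ is precisely the set of regions containing exactly one of the two query points. Establishing this characterization reduces every row to a ``stabbed by exactly one of two points'' reporting problem, after which each entry of the table follows by plugging in the matching structure.

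For the rows involving only $m$ and/or $\delta$, I would exploit the orthogonality of the regions: these are half-lines or axis-aligned quadrants, so the ``exactly one of two points'' test reduces to one- or two-dimensional orthogonal range reporting. Changing $\delta$ (or $m$) alone is a $1$-dimensional range query on the $g'$ values $\mathapply{duration}{G}$ (resp.\ $|G|$), giving $O(\log\taun+k)$ query time with $O(g'\log\taun)$ space and preprocessing; changing $\delta$ and $m$ together uses a priority search tree on the corner points, answering the query with two three-sided range queries in the same bounds. These rows follow directly from the constructions already described.

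The real difficulty sits in the rows involving $\eps$, where the region boundaries become the non-orthogonal, possibly mutually intersecting polylines $D_G$. For changing $\eps$ and $m$ I would reuse the three two-level trees built on $\check{\eps}_G$, $\hat{\eps}_G$, and $|G|$ with hive-graph secondary structures, checking that together they test membership of $(\eps,m)$ in $R_G$ while excluding $(\eps',m')$. For $\eps$ and $\delta$ the query segment $Q$ is no longer axis-parallel relative to the boundary, so I would separate the ``one endpoint outside the $\eps$-range'' subcases (handled by a two-level tree feeding into the structure of Section~\ref{sec:data_structures}) from the ``both endpoints inside the $\eps$-range'' subcase, which requires the hereditary segment tree with the Cheng--Janardan intersection-searching structure for long segments and a hive graph for short ones. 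The $\sqrt{g'}2^{\log^*\taun}$ factors in these rows come precisely from the Cheng--Janardan bound, and I would stress that keying on $g'$ rather than $g$ is exactly what keeps these factors small.

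The hardest part, which I would treat most carefully, is the \emph{simultaneous} change of $\eps$ and $\delta$ (and then of all three parameters). Here the symmetric difference cannot be answered by a single stabbing query, since the two query points lie in genuinely different positions relative to the intersecting polylines; instead I would express the groups maximal for one tuple as a small number of canonical subsets and, for each, report those \emph{not} maximal for the other tuple using a second copy of the Section~\ref{sec:data_structures} structure. The obstacle is that the half-plane range-reporting primitive inside that structure does not emit canonical subsets, so for the outer layer I would replace it with a partition tree, at the cost of the extra logarithmic and $\sqrt{g'}$ factors recorded in the table. Finally, to add the $m$ parameter I would layer the size-filtering first level from Section~\ref{sec:data_structures} on top, multiplying the query and space bounds by the stated $\log n$ factors. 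Collecting the bounds from all rows yields the table and completes the proof.
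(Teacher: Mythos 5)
Your proposal is correct and follows essentially the same route as the paper: the same region-stabbing characterization of the symmetric difference, the same orthogonal structures (1D range trees and a priority search tree) for the $m$/$\delta$ rows, the same three two-level hive-graph trees for $\eps,m$, the same segment-tree/Cheng--Janardan and hereditary-segment-tree constructions for $\eps,\delta$ one at a time, and the same canonical-subsets-with-partition-tree layering for the simultaneous cases. The only omissions are cosmetic (e.g., you do not spell out the interval-tree solution for the $\eps$-only row or the initial fixed-$\eps$ computation of the $g'$ groups via Buchin~\etal), and they do not affect the argument.
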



\section{Entities Moving in $\R^d$}
\label{sec:Higher_Dimensions}

We now describe how our results can be
extended to entities moving in $\R^d$, for any constant dimension $d$.

\subsection{Bounding the Complexity}
\label{sub:Complexity}

Recall that $\X(t)$ denotes the locations of the entities at time $t$.
We still consider the $(t,\eps)$-plane in \PP, and the regions $A_G$, for
subsets $G \subseteq \X$, in which set $G$ is alive. Such a region is still
bounded from below by a function $f_G$ that expresses the minimum epsilon
value for which $G$ is $\eps$-connected. We again consider the arrangement \H
of these functions $f_G$, over all sets $G$.

Let $\H''$ be the arrangement of all pairwise distance functions $h_{ab}(t) =
\|a(t)b(t)\|$. For any subset of entities $G$ and any
time $t$, $f_G(t) = \|a(t)b(t)\|$ for some pair of entities $a$ and $b$. Thus,
\H is a sub-arrangement of $\H''$. This immediately gives an $O(\tau n^4)$ bound
on the complexity of \H. We instead show that \H
has complexity at most $O(\tau n^3\beta_4(n))$, where $\beta_s(n) =
\lambda_s(n)/n$, and $\lambda_s(n)$ is the maximum length of a
Davenport-Schinzel sequence of order $s$ on $n$ symbols.
Using exactly the same argument as in
Lemma~\ref{lem:v_incident_to_groups} we then get a bound of $O(\tau n^4\beta_4(n))$ on the number of combinatorially
different groups.

Let $\E(t)$ be the Euclidean minimum spanning tree (EMST) of the points in
$\X(t)$. We then observe:

\begin{observation}
  \label{obs:emst}
  A subset of entities $G \subseteq \X$ is $\eps$-connected at time $t$ if and
  only if for any two entities $p,q \in G$ the longest edge in the Euclidean minimum spanning tree $\E(t)$ on the path between $p$ and $q$ has length at most $\eps$.
\end{observation}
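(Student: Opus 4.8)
The plan is to reduce the statement to the classical \emph{bottleneck} (minimax) property of minimum spanning trees. First I would recast $\eps$-connectivity in graph-theoretic terms: let $G_\eps$ be the \emph{threshold graph} on the vertex set $\X(t)$ that has an edge between $a$ and $b$ exactly when $\|a(t)b(t)\| \le \eps$. By the definition of $\eps$-connectivity, two entities $a,b$ are $\eps$-connected if and only if they lie in the same connected component of $G_\eps$; hence $G$ is $\eps$-connected if and only if all entities of $G$ lie in a single component of $G_\eps$. For any pair $p,q$ and any path between them in the complete graph on $\X(t)$, I call the largest edge length along the path its \emph{bottleneck}.

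The easy direction ($\Leftarrow$) unfolds the definitions. Assume that for every pair $p,q \in G$ the unique $p$-$q$ path in $\E(t)$ has bottleneck at most $\eps$. Then the vertices of that tree path form a chain $p = c_1 \codots c_k = q$ of entities (possibly including entities outside $G$) in which consecutive distances are at most $\eps$; this is precisely an $\eps$-connection between $p$ and $q$. Since this holds for every pair, $G$ is $\eps$-connected.

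The other direction ($\Rightarrow$) rests on the key lemma, the minimax property: for any two vertices $p,q$, the unique path between them in the EMST minimizes the bottleneck over all $p$-$q$ paths. I would prove this by the standard MST exchange argument. Let $e$ be the heaviest edge on the tree path; deleting $e$ splits $\E(t)$ into components $S \ni p$ and $\bar S \ni q$. Any alternative $p$-$q$ path must traverse some edge $f$ crossing the cut $(S,\bar S)$; were that path's bottleneck strictly smaller than $\|e\|$, then $\|f\| < \|e\|$, and $\E(t) - e + f$ would be a spanning tree of strictly smaller total length, contradicting the minimality of the EMST. Granting this lemma, assume $G$ is $\eps$-connected and fix $p,q \in G$. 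They lie in the same component of $G_\eps$, so there is a $p$-$q$ path all of whose edges have length at most $\eps$, i.e.\ a path of bottleneck at most $\eps$. By the minimax property the EMST path has bottleneck no larger, hence at most $\eps$, as required.

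The only real obstacle is the minimax lemma; everything else is a direct unfolding of definitions. I expect not to need the genericity assumption on distances: the exchange argument, and therefore the minimax property, holds for \emph{any} minimum spanning tree, so the claim is insensitive to whether $\E(t)$ is unique at time $t$. The one point worth flagging explicitly is that the intermediate vertices on an EMST path may lie outside $G$, which is permitted, since $\eps$-connectivity is defined allowing chains that pass through entities not in $G$.
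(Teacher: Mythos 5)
Your proof is correct. Note, however, that the paper itself offers no proof of this statement: it is stated as an Observation and used immediately afterwards, implicitly relying on exactly the folklore fact you establish, namely the minimax (bottleneck) property of minimum spanning trees. Your write-up supplies what the paper takes for granted: the recasting of $\eps$-connectivity as connectivity in the threshold graph on all of $\X(t)$, the easy direction in which the EMST path itself serves as the connecting chain, and the cut-exchange argument showing that the EMST path minimizes the bottleneck over all $p$-$q$ paths, which yields the hard direction. Two details you flag are worth keeping explicit: the intermediate vertices of the EMST path may lie outside $G$, which is fine because the paper's definition of $\eps$-connectivity allows chains through entities not in $G$; and the exchange argument applies to \emph{any} minimum spanning tree, so no uniqueness or genericity assumption on $\E(t)$ is needed. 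What your argument buys is rigor where the paper has none; conversely, the paper's decision to leave this as an unproved observation is defensible precisely because the bottleneck property of MSTs is standard.
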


\begin{wrapfigure}{r}{0.37\textwidth}
  \raggedleft
  \includegraphics{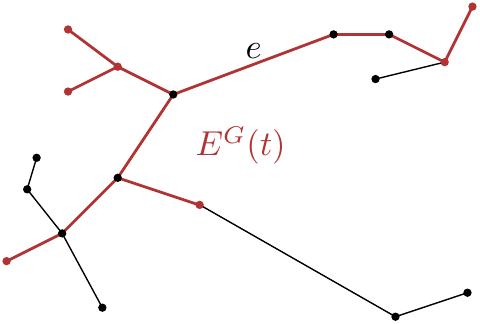}
  \caption{$\E(t)$ and its minimum subtree $E^G(t)$
    (red edges) for a subset of entities $G$ (red vertices).
    Edge $e$ determines the
    minimum $\eps$ for which $G$ is $\eps$-connected at~$t$.}
  \label{fig:emst}
\end{wrapfigure}
\noindent
Specifically, let $\E^G(t)$ be the minimum (connected) subtree of $\E(t)$
containing all points in $G(t)$, and let $\check{\eps}_G(t)$ be the length of
the longest edge $e$ in $\E^G(t)$ (see Fig.~\ref{fig:emst}). We have that $G$ is an
$\eps$-group for all $\eps \geq \check{\eps}_G(t)$, and that $f_G(t) =
\max_{(a,b) \in \E^G(t)} \|a(t)b(t)\|$.

It follows from Observation~\ref{obs:emst} that we are interested in the
distance function $h_{ab}$ on the time intervals during which $(a,b)$ is part
of the EMST. Hence, we need to consider only the arrangement of such partial
functions. It is, however, difficult to bound the complexity of the resulting
arrangement directly. Instead, we consider $h_{ab}$ only during those time
intervals in which $(a,b)$ is an edge in the Yao-graph~\cite{y-cmstk-82}. Let $\H'$ be
the resulting arrangement. Since the EMST is a subgraph of the Yao-graph it
follows that \H is a sub-arrangement of $\H'$~\cite{y-cmstk-82}.
%

\begin{lemma}
  \label{lem:complexity_Hp}
  $\H'$ has complexity $O(\tau n^3\beta_4(n))$.
\end{lemma}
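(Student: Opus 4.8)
The plan is to fix the dimension $d$ and exploit that the Yao-graph uses only $O(1)$ cones per point, so at every instant it contains only $O(n)$ edges. First I would cut the time axis at the $O(n\tau)$ trajectory vertices, obtaining $O(n\tau)$ \emph{elementary} intervals on each of which every trajectory is a single line segment. On such an interval each distance function $h_{ab}(t)=\|a(t)b(t)\|$ is a single smooth convex arc (a branch of $\sqrt{q(t)}$ with $q$ quadratic), and two such arcs meet at most twice, since equating the two squared distances yields a quadratic equation with at most two roots.

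Next, within one elementary interval I would, for each entity $a$ and each of the $O(1)$ cones $C$, consider the arcs $h_{ab}$ restricted to the times at which $b$ lies in the cone $C(a)$. Since $b(t)-a(t)$ is affine and a cone is a convex set, this restriction is a single subinterval, so these are \emph{partial} arcs that pairwise cross at most twice. The Yao-edge of $a$ in $C$ is their pointwise minimum, i.e. their \emph{lower envelope}, whose complexity is $O(\lambda_4(n)) = O(n\beta_4(n))$ by the Davenport--Schinzel bound for lower envelopes of partial functions (order $2+2=4$: two for the crossings, two for the partial domains). Summing over the $n$ choices of $a$, the $O(1)$ cones, and the $O(n\tau)$ elementary intervals bounds the total number of Yao-edge arcs, and hence the number of arc endpoints (the vertices of $\H'$ that are vertices of the individual functions), by $O(n\tau)\cdot n\cdot O(n\beta_4(n)) = O(\tau n^3\beta_4(n))$. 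Since the EMST is a subgraph of the Yao-graph (Observation~\ref{obs:emst} together with~\cite{y-cmstk-82}), \H is a sub-arrangement of $\H'$, so this bound will transfer to \H as well.

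It remains to bound the vertices of $\H'$ that are intersections of two Yao-edge arcs, i.e. the instants at which two simultaneously present Yao-edges have equal length; this is the step I expect to be the main obstacle. Here I would use two structural facts: arcs belonging to the same source--cone pair are pairwise disjoint in time, since at each instant a cone has a unique nearest neighbour, so every intersection lies between arcs of two different source--cone envelopes; and at every vertical line only $O(n)$ arcs are present. I would then sweep a vertical line across an elementary interval, maintaining the $O(n)$ active arcs sorted by length, and charge each intersection to an adjacent transposition in this order. The delicate point is the amortised analysis: a naive pairwise count bounds the crossings of two envelopes with $p_i$ and $p_j$ pieces by $O(p_i+p_j)$, which after summation over all $O(n)$ source--cone pairs loses a factor $n$ and yields only $O(\tau n^4\beta_4(n))$. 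To recover the sharper bound I would argue globally rather than pairwise, charging the transpositions to the $O(n^2\beta_4(n))$ envelope breakpoints present per elementary interval, using the $O(n)$-width property to keep the amortised charge per breakpoint constant, so that the number of intersection vertices is again $O(\tau n^3\beta_4(n))$. Combining the two vertex counts, together with Euler's formula to account for edges and faces, then gives $|\H'| = O(\tau n^3\beta_4(n))$.
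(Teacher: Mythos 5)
Your setup is sound, and the first half of your argument is essentially the paper's: both proofs replace the Yao-edge functions by the cone-wise nearest-neighbour envelopes (one per entity--cone pair), invoke the order-$4$ Davenport--Schinzel bound for their complexity, and use the fact that only $O(n)$ such arcs exist at any fixed time. The gap is in your treatment of the intersection vertices, exactly where you anticipated trouble. Your proposed repair --- charging crossings to envelope breakpoints at \emph{constant} amortised cost per breakpoint --- cannot work: a breakpoint spawns one new smooth arc, and that single arc may cross each of the $\Theta(n)$ other envelopes, so one breakpoint can be responsible for $\Theta(n)$ crossings (e.g.\ one envelope with $B$ breakpoints crossing $n-1$ single-arc envelopes produces $\Theta(nB)$ crossings but only $B$ breakpoints); nothing in the geometry of distance functions of unrelated source entities forbids this. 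The best achievable amortised charge is $\Theta(n)$ per breakpoint. But with that charge your bound collapses, because your breakpoint count is itself too loose: by applying the DS bound separately inside each of the $O(n\tau)$ elementary intervals you allow each entity--cone envelope $O(\lambda_4(n))$ breakpoints \emph{per interval}, i.e.\ a total of $O(\tau n^2\lambda_4(n)) = O(\tau n^3\beta_4(n))$ breakpoints, and multiplying by the unavoidable $\Theta(n)$ charge gives $O(\tau n^4\beta_4(n))$ --- precisely the factor-$n$ loss you were trying to avoid.

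The paper closes this hole with two moves you are missing. First, it never counts pieces per elementary interval: it uses the global bound (from Kostitsyna~\etal~\cite{geogrouping2015}) that each envelope $\eta^i_a$ has $O(\tau\lambda_4(n))$ pieces over the \emph{entire} time horizon, so the total number of arcs is $O(\tau n\lambda_4(n)) = O(\tau n^2\beta_4(n))$, a factor $n$ below your count. Second, it re-buckets time by these arcs rather than by trajectory vertices: partition time into $O(\tau\lambda_4(n))$ intervals containing $O(n)$ arcs each (splitting arcs at bucket boundaries keeps the total linear); inside a bucket the $O(n)$ smooth arcs cross pairwise $O(1)$ times, giving $O(n^2)$ crossings per bucket and $O(\tau\lambda_4(n)\cdot n^2) = O(\tau n^3\beta_4(n))$ crossings in total. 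Note this is exactly the $\Theta(n)$-per-breakpoint charge, which is affordable only because the global piece count is a factor $n$ smaller than yours. If you replace your per-elementary-interval envelope analysis by the global one and your transposition-charging by this bucketing, your proof goes through.
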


\begin{proof}
  Fix an entity $a$, and consider the movement of the other entities with
  respect to $a$. This gives us a set of (piecewise linear) trajectories.
  Entity $a$ is fixed at the origin. Partition this
  space into $k = O(1)$ equal size polyhedral cones $C_1\codots C_k$ that have
  their common apex at the origin\footnote{Note that $k$ is
    exponential in the dimension $d$.}. For each such cone $C_i$, let
  $\eta^i_a(t)$ denote the distance from $a$ to the nearest entity in the
  cone. It is easy to show that $\eta^i_a$ is piecewise hyperbolic, and
  consists of $O(\tau \lambda_4(n))$ pieces~\cite{geogrouping2015}.

  Let $\H^*$ be the arrangement of all functions $\eta^i_a$, over all entities
  $a \in \X$ and all cones $C_i$. The total number of pieces
  (hyperbolic arcs), over all entities and all cones, is $O(\tau
  n\lambda_4(n))$. Partition time into $O(\tau \lambda_4(n))$ time intervals,
  with $O(n)$ pieces each. This may require splitting some of the pieces, but
  the total number of pieces remains $O(\tau n\lambda_4(n))$. In each time
  interval we now have $O(n)$ hyperbolic arc pieces, that intersect at most
  $O(n^2)$ times in total. It follows that $\H^*$ has complexity $O(\tau
  \lambda_4(n)n^2) = O(\tau n^3\beta_4(n))$.

  Fix a time $t$, and consider the graph $Y(t)$ that has an edge $(a,b)$ if and
  only if $b$ is the nearest neighbor of $a$ in one of the cones $C_i$ at time
  $t$, that is, $\|a(t)b(t)\| = \eta^i_a(t)$. Indeed, $Y(t)$ is the Yao-graph
  of the entities at time $t$~\cite{y-cmstk-82}. It follows that $\H^* = \H'$.
\end{proof}
Since \H is a sub arrangement of $\H'$, it follows that \H also has complexity
$O(\tau n^3\beta_4(n))$. Using exactly the same argument as in
Lemma~\ref{lem:v_incident_to_groups} we then get that the number of
combinatorially different maximal groups is $O(\tau n^4\beta_4(n))$. We
conclude:

\begin{theorem}
  \label{thm:max_group_higher_dimensions}
  Let \X be a set of $n$ entities, in which each entity travels along a
  piecewise-linear trajectory of $\tau$ edges in $\R^d$, for any constant
  $d$. The number of maximal combinatorial groups as well as the total
  complexity of all their group polygons is at most $O(\tau n^4\beta_4(n))$.
\end{theorem}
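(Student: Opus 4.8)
The plan is to assemble Theorem~\ref{thm:max_group_higher_dimensions} from the pieces already established, mirroring the $\R^1$ development but substituting the higher-dimensional complexity bound for \H. First I would invoke Lemma~\ref{lem:complexity_Hp}, which shows that the auxiliary arrangement $\H'$ of partial distance functions $h_{ab}$ (restricted to the time intervals where $(a,b)$ is a Yao-graph edge) has complexity $O(\tau n^3\beta_4(n))$. Since the EMST is a subgraph of the Yao-graph (by~\cite{y-cmstk-82}) and, by Observation~\ref{obs:emst}, $f_G(t) = \max_{(a,b)\in\E^G(t)}\|a(t)b(t)\|$ depends only on EMST edges, \H is a sub-arrangement of $\H'$. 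Hence \H itself has complexity $O(\tau n^3\beta_4(n))$.

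Next I would re-run the per-vertex charging argument to convert the complexity of \H into a bound on the number of combinatorially different maximal groups. The key observation, exactly as in the one-dimensional case, is that every vertex of a region $P \in \P_G$ lies on the polyline $f_G$, and each vertex of \H is charged by at most $O(n)$ regions. The latter is precisely Lemma~\ref{lem:v_incident_to_groups}, whose proof is purely combinatorial: it relies only on Lemma~\ref{lem:per_vertex} (the Reeb-graph bound), which is already stated for $\R^d$ with constant $d$. Therefore the same argument applies verbatim, and multiplying the $O(\tau n^3\beta_4(n))$ complexity of \H by the $O(n)$ incidence bound gives $O(\tau n^4\beta_4(n))$ on both the number of maximal groups and the total complexity of their group polygons.

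The remaining concern is extending the monotonicity reduction of Section~\ref{sub:The_Number_of_Distinct_Maximal_Groups,_over_all_Parameters} to cover the $m$ and $\delta$ parameters, so that the bound holds over \emph{all} parameter values rather than just $m=1,\delta=0$. Here I would reuse the same reasoning: maximal groups remain monotonic in $m$ and $\delta$ (this property, from Buchin~\etal~\cite{grouping2015}, is dimension-independent), so a maximal $(m,\eps,\delta)$-group is also maximal for smaller $m$ and $\delta$; and the argument that blunting local minima of $f_G$ by a horizontal segment of width $\delta$ only decreases the vertex count applies regardless of dimension. Thus the $O(\tau n^4\beta_4(n))$ bound established for $\eps$-groups propagates to the full four-dimensional parameter space.

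I expect the genuine obstacle to be entirely contained in Lemma~\ref{lem:complexity_Hp}, which is already granted; the present theorem is then essentially bookkeeping that stitches together the sub-arrangement relation, the per-vertex incidence bound, and the parameter-monotonicity reduction. The only subtlety worth flagging is confirming that the incidence and monotonicity arguments truly do not use any one-dimensional structure — in particular that Lemma~\ref{lem:per_vertex} is invoked with the correct $\R^d$ Reeb graph — but since those lemmas are explicitly stated for constant $d$, no new geometric work is required.
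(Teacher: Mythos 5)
Your proposal is correct and follows essentially the same route as the paper: bound the complexity of \H via Lemma~\ref{lem:complexity_Hp} and the sub-arrangement relation through the Yao graph, then reuse the per-vertex incidence bound of Lemma~\ref{lem:v_incident_to_groups} (which rests on the $\R^d$ Reeb-graph bound of Lemma~\ref{lem:per_vertex}) to multiply by $O(n)$. Your explicit treatment of the $m$/$\delta$ monotonicity reduction is a point the paper leaves implicit, but it is the same argument as in Section~\ref{sub:The_Number_of_Distinct_Maximal_Groups,_over_all_Parameters} and introduces no new content.
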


\subsection{Algorithm}
\label{sub:Algorithm}

We can almost directly apply our algorithm from Section~\ref{sec:algo} in higher dimensions as well.
Instead of the arrangement \H, we now use $\H'$. The only
differences involve discovering the set entities involved in a
\textsc{Birth}-event, and splitting the set of entities in case of an
\textsc{Extend}-event. Let $v$ denote the vertex of $\H'$ we are
processing.
We use that at time $v_t$, $\H'$
encodes the Yao-graph $Y$.
Using a breadth first search in $Y$ we can find the entities connected to $v$.
If an edge has length larger than $\eps$ we stop the exploration along it.
Since $Y$ is planar, and has $O(n)$ vertices this takes $O(n)$
time. This does not affect the running time, hence we get the same result as in
Theorem~\ref{thm:compute_combinatorial_max_groups} for entities moving in
$\R^d$, for any constant $d$.

\subsection{Data Structures}
\label{sub:Data_Structures}

\mypar{Finding all maximal $(m,\eps,\delta)$-groups} We use the same
approach as in Section~\ref{sec:data_structures}. However, the
functions $\mathfunc{duration}_G$ are no longer (piecewise) linear functions in
$\eps$.
Let $\mathfunc{start}_G(\eps) = f^{-1}(\eps)$ and
$\mathfunc{end}_G(\eps) = h^{-1}(\eps)$ be some hyperbolic functions $f$ and
$h$ corresponding to curves in \H.
We have that $\mathfunc{duration}_G(\eps) = \mathfunc{end}_G(\eps) -
\mathfunc{start}_G(\eps)$. The function $\mathfunc{duration}_G$
corresponds to a piecewise curve with pieces defined by polynomials of
degree at most four. Hence, we have to solve the following sub-problem: given a
set of $g'$ algebraic curves of degree at most four, and query point $q$,
report all curves that pass above $q$.

We can solve such queries as follows. We transform the curves into hyperplanes
in $\R^\ell$, where $\ell$ is the linearization dimension. We then apply a
duality transform, after which the problem can be solved using a half-space
range query. Since we have curves of degree at most four in $\R^2$, the
linearization dimension is seven: the set of points above a curve can be
described using a seven-variate polynomial (the five coefficients of the degree
four curve, and the two coordinates of the
point)~\cite{agarwal1994semialgebraic_rangesearching}. It follows that we can
find all curves above query point $q$ in $O(g'^{1-1/\lfloor 7/2\rfloor}\polylog
\taun + k) = O({g'}^{2/3}\polylog \taun + k)$ time, using linear
space~\cite{matousek1992partitiontrees}. Reporting all maximal
$(m,\eps,\delta)$-groups thus takes $O({g'}^{2/3}\polylog \taun + k)$ time,
using $O(\Pi \log \taun\log n)$ space and $O(\Pi \log^2 \taun\log n)$
preprocessing time.

Alternatively, we can maintain the upper envelope of the curves in a dynamic
data structure. To solve a query, we repeatedly delete the curve realizing the
upper envelope at $q_\eps$. This allows us to find all $(m,\eps,\delta)$-groups
in $O(k\beta_4(g')^2\polylog \taun)$ time~\cite{chan2010dynamicconvexhull}.

\mypar{Symmetric Difference Queries} Only the versions of the problem that
involve changing both $\eps$ and $\delta$ are affected. Instead of piecewise
linear functions $\mathfunc{duration}_G$ we again have piecewise curves of
degree at most four. We use a similar approach as above to find the curves that
intersect a vertical or horizontal query segment in $O({g'}^{2/3}\polylog \taun
+ k)$ time. Thus, we essentially replace the $\sqrt{g'}$ terms in
Theorem~\ref{thm:symmetric_difference_queries} by a $g'^{2/3}$ term. 



\small
\section*{Acknowledgments}

F.S. is supported by the Danish National Research Foundation under grant
nr.~DNRF84. F.S. is supported by the Danish National Research Foundation under
grant nr.~DNRF84. A.v.G. and B.S. are supported by the Netherlands Organisation
for Scientific Research (NWO) under project nr. 612.001.102 and 639.023.208,
respectively.

\bibliography{cake}

\clearpage

\appendix

\section{The Number of Equal Height Faces in an Arrangement of Lines}
\label{app:Lowerbound_Face_Heights}

Recall that \A was an arrangement of $n$ line segments, and that $S(\A)$
denotes the set of all triples $(F,F',x)$ such that (\textit{i}) the faces $F
\in \A$ and $F' \in \A$ have equal height $h$ at $x$-coordinate $x$, and
(\textit{ii}) all faces in between $F$ and $F'$ at $x$-coordinate $x$ have
height less than $h$. We now show that $S(\A)$ may contain $\Omega(n^3)$
triples, even if our segments are lines.

\begin{lemma}
  \label{lem:lowerbound_equal_cell_heights}
  The number of triples in $S(\A)$ for a line arrangement \A with $n$ lines may
  be $\Omega(n^3)$.
\end{lemma}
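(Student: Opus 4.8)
The goal is to build one line arrangement \A of $n$ lines in which $|S(\A)| = \Omega(n^3)$. The first thing I would record is a ceiling that shapes the whole construction: at any fixed vertical line through $x$, if $(F,F')$ is counted in $S(\A)$ then $F'$ is forced to be the first cell above $F$ whose height is at least that of $F$, so each cell can be the lower partner of at most one triple and there are at most $n-1$ triples per $x$. Matching the $O(n^3)$ bound of Lemma~\ref{lem:upperbound_triplets} therefore means engineering $\Omega(n^2)$ distinct $x$-coordinates that each carry $\Omega(n)$ triples simultaneously.

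The mechanism I would use to produce $\Omega(n)$ triples at a single $x$ is a \emph{momentary arithmetic progression}. Take a family of $\Theta(n)$ lines of slope $+1$ and a family of slope $-1$ whose $y$-intercepts form interleaving arithmetic progressions; at the instant the two families' heights merge into a single arithmetic progression all $\Theta(n)$ consecutive gaps become equal, and since the gaps separate linearly on either side of that instant the coincidence is transversal (hence each equality is an isolated, genuinely counted crossing rather than a degenerate equal-on-an-interval coincidence). As stated these equal cells have no shorter cell between them, so I would \emph{split} every line of one family into two parallel lines a tiny distance $h$ apart; this inserts a thin cell of height $h$ between every pair of former gaps, turning each adjacent pair of big cells into a legitimate member of $S(\A)$ while leaving simultaneity and transversality intact. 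This already yields $\Theta(n)$ triples at the single coincidence coordinate.

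To reach $\Omega(n^3)$ I then have to replicate this so that $\Omega(n^2)$ distinct coordinates each host such a long momentary progression. The natural route is to let the intercepts depend on two indices, so that the coincidence coordinate is controlled by a two-parameter quantity (a difference of the two progressions) while still being hit by $\Omega(n)$ index pairs at once; a Sidon-type / additive-energy design of the intercepts governs exactly how the available alignments distribute over coordinates. Finally I would verify general position (all required crossings are simple and isolated, and every ``between'' cell is \emph{strictly} shorter at the coincidence coordinate) and sum $\Omega(n^2)\cdot\Omega(n)=\Omega(n^3)$, noting that the same construction uses lines rather than segments.

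The step I expect to be the real obstacle is the last one: there is a genuine tension between simultaneity and multiplicity, since two families of size $\Theta(n)$ offer only $\Theta(n^2)$ pairwise alignments in total, so forcing $\Omega(n)$ simultaneous equalities at each of $\Omega(n^2)$ coordinates cannot come from single line pairs. I would therefore account for triples at the level of cells rather than line pairs, using that an arrangement can have $\Theta(n)$ cells of complexity $\Theta(n)$ whose height functions pairwise cross $\Theta(n)$ times; the delicate part is showing that $\Omega(n^3)$ of these crossings can be made to satisfy the strict betweenness condition at once, which is precisely what makes the valley-restricted count as large as the unrestricted count of Lemma~\ref{lem:upperbound_triplets}.
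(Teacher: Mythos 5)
There is a genuine gap: your proposal never actually produces an arrangement with $\Omega(n^3)$ triples. The one concrete gadget you describe (the momentary arithmetic progression, with lines split into close parallel pairs) yields $\Theta(n)$ triples at a single coincidence coordinate, and the passage from there to $\Omega(n^3)$ --- which you yourself flag as ``the real obstacle'' --- is left as a gesture toward Sidon-type intercept designs and cells whose height functions cross $\Theta(n)$ times, with no construction and no verification of the betweenness condition. That missing step \emph{is} the lemma. Moreover, the difficulty is largely self-inflicted: your opening deduction that one must engineer $\Omega(n^2)$ coordinates \emph{each carrying $\Omega(n)$ simultaneous triples} is a false necessity. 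The per-coordinate ceiling of $O(n)$ triples only forces at least $\Omega(n^2)$ distinct coordinates in total; nothing forces simultaneity, and insisting on it is exactly what creates the simultaneity-versus-multiplicity tension you then cannot resolve. (A second, minor, misreading: if the two equal-height faces are adjacent in the vertical stack, condition (ii) is vacuously satisfied, so your line-splitting step is not needed at all.)

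The paper's construction avoids simultaneity entirely: it produces $\Omega(n^3)$ triples from $\Omega(n^3)$ \emph{distinct face pairs}, each attaining equal height exactly once, at its own $x$-coordinate. Concretely, red and blue lines form a grid rotated by $45+\delta$ degrees, giving $\Theta(k^2)$ grid cells; as $x$ varies, each grid cell's height sweeps linearly, on its own small interval, through a fixed narrow range $[h, h+k\delta']$, and these intervals can be made disjoint. Far below the grid, $k$ nearly horizontal green lines with slowly increasing spacing create $k$ faces $F_1 \codots F_k$ whose $x$-extents span the whole grid and whose heights are (essentially) the constants $h+\delta' \codots h+k\delta'$, filling that narrow range. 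Hence every grid cell's height crosses each of the $k$ reference heights exactly once, for $\Theta(k^2)\cdot k = \Theta(k^3)$ equal-height events. Betweenness is then not delicate but structural: within a column the grid-cell heights decrease from top to bottom, the $F_i$ heights increase from top to bottom, and $O(k)$ additional grey filler lines flatten the faces separating the grid from the strips, so every event is a legitimate triple of $S(\A)$. The lesson is that the budget of ``alignments'' is not pairs of lines but pairs of \emph{faces}: a small family of $k$ wide reference faces can be probed once by each of $\Theta(k^2)$ cells, which is how the cubic count arises with only linear work per cell.
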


\begin{figure}[b]
    \centering
    \includegraphics[page=1]{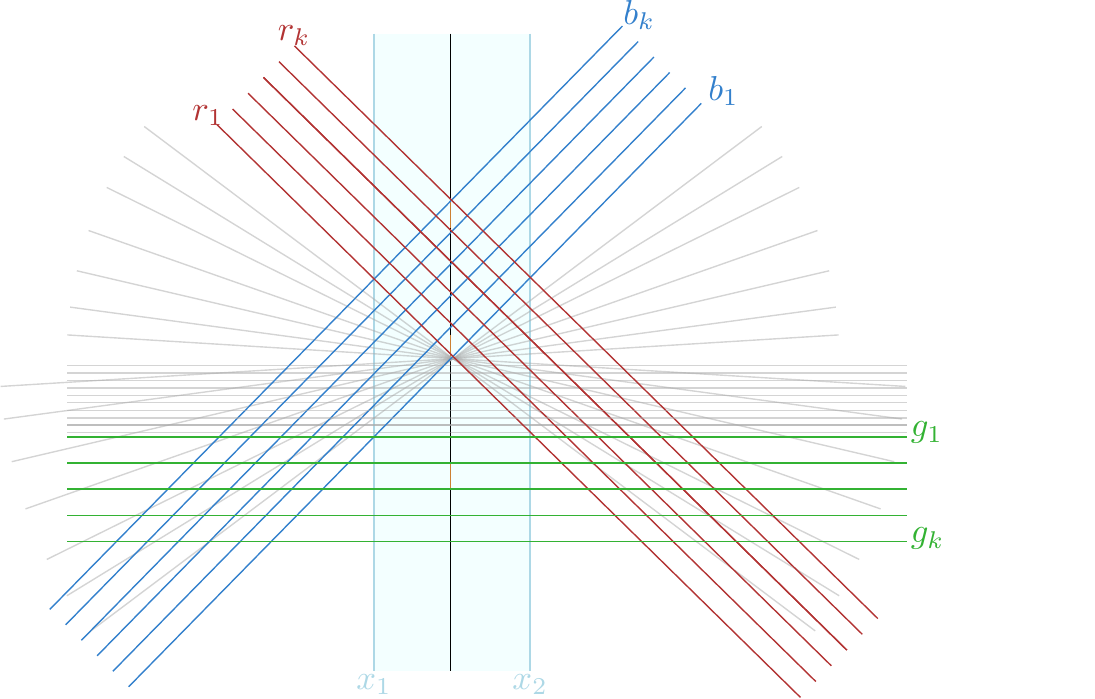}
    \caption{A construction of $n$ lines, such that their arrangement \A has
      $|S(\A)| = \Omega(n^3)$.}
    \label{fig:lowerbound_equal_cellheights}
\end{figure}

\begin{proof}
  We construct a set of $n$ lines $L = R \cup G \cup B \cup E $
   whose
  arrangement \A has $|S(\A)| = \Omega(n^3)$. The (sub)sets $R$, $G$, and $B$
  have size $k$ each. We use them to build the subset of faces such that there
  are $\Omega(k^3)$ triples $(F,F',x)$ such that $F$ and $F'$ have equal height
  at $x$. The remaining $O(k)$ lines are used only to make sure that the faces
  in between any such pair $(F,F')$ have smaller height. It follows that we can
  choose $k = \Theta(n)$ and get $|S(\A)| = \Omega(n^3)$ as desired.

  Our construction is shown in Fig.~\ref{fig:lowerbound_equal_cellheights}. The
  set of red lines $R$ together with the set of blue lines $B$ form a unit grid
  that has been rotated by $45+\delta$ degrees, for some arbitrarily small
  $\delta > 0$. The lines in $R$ ($B$) are all parallel to each
  other\footnote{Note that we can perturb all lines slightly to avoid parallel
    lines if desired.}. Let $C_{i,j}$ denote the face (grid cell) in which the
  intersection point of $r_i$ and $b_j$ is the point with the minimum
  $y$-coordinate, and let $\C_j =
  \{C_{1,j},C_{2,j+1},C_{3,j+2}..,C_{1+k_j,j+k_j}\}$ be the $j^\text{th}$
  ``column'' of such faces.

  The green lines in the set $G=\{g_1 \codots g_k\}$, ordered from top to bottom,
  are (almost) horizontally, and such that the distance between any consecutive
  lines $g_i$ and $g_{i+1}$ increases slightly (i.e.~the distance between $g_i$
  and $g_{i+1}$ is $h + i\delta'$, for some $h \in (1/c,1)$, some constant $c$,
  and some arbitrarily small $\delta'$). 
  We place these green lines sufficiently
  far below the grid formed by the red and blue lines such that each face
  $F_i$,
   bounded by $r_1, b_1$, and the green lines $g_i$ and $g_{i+1}$, is
  wide enough such that its $x$-extent contains $[x_1,x_2]$, the $x$-extent of
  the grid.

  \begin{figure}[t]
    \centering
    \includegraphics[page=2]{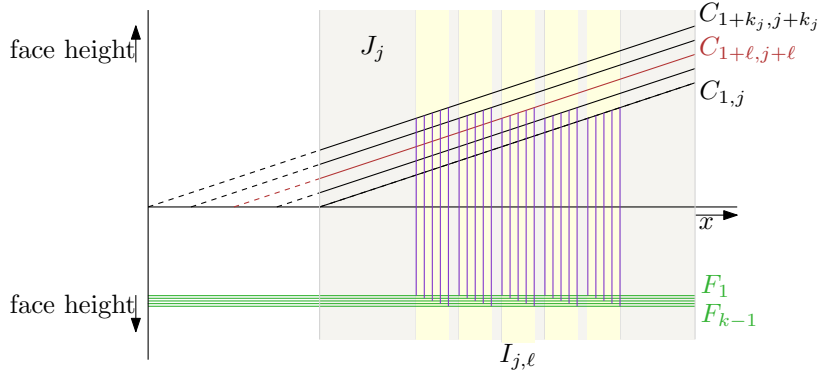}
    \caption{The heights of the faces in a column $\C_j =
      \{C_{1,j},C_{2,j+1}..,C_{1+k_j,j+k_j}\}$ in interval $J_j$.}
    \label{fig:lowerbound_cell_heights}
  \end{figure}

  Consider the (maximal) interval $J_j$ such that all heights of the faces
  (grid cells) in column $j$ are simple increasing linear functions. See
  Fig.~\ref{fig:lowerbound_cell_heights}. It now follows that for each face
  $C_{1+\ell,j+\ell}$ in column $j$ there is a small interval $I_{j,\ell}
  \subset J_j$ in which the height of the face varies between $h$ and
  $h+k\delta'$. Hence, in this interval, the height of face $C_{1+\ell,j+\ell}$
  subsequently becomes equal to the height of the faces $F_1\codots F_k$. We can
  choose $\delta'$ small enough such that the intervals $I_{1+\ell,j+\ell}$ for
  all faces in column $j$ are disjoint (and so that $I_{1+\ell,j+\ell}$ lies to
  the left of $I_{2+\ell,j+\ell+1}$). So, since we have $\Omega(k)$ columns,
  each of $\Omega(k)$ faces, it follows that the number of $x$-coordinates at
  which two faces have equal height is $\Omega(k^3)$.

  Finally, observe that any column $j$, and any $x$-coordinate in $J_j$, the
  faces in column $j$, ordered from top to bottom, have decreasing
  height. Similarly, the faces $F_1 \codots F_k$, ordered from top to bottom, have
  increasing height. It follows that when faces $C_{1+\ell,j+\ell}$ and $F_i$
  have equal height $h'$, all other faces from $\C_\ell$ and $F_1 \codots F_{i-1}$
  have height smaller than $h'$. To make sure the remaining faces (such as the
  triangular face bounded by $r_1$, $b_1$, and $g_1$, have height at most $h$,
  we add a set of grey lines $E$. Since the slope of the blue lines is close to
  one, the distance between $b_1$ and $g_1$, at $x_2$ is at most $O(k)$. The
  same holds for the distance between $r_1$ and $g_1$ at $x_1$. It follows that
  we have to add at most $O(k)$ in between the grid and $g_1$ to make sure all
  intermediate faces have height at most $h$. Hence, all equal-height events
  involving faces $C_{1+\ell,j+\ell}$ and $F_i$ produce a triplet in $S(\A)$.
\end{proof}

\section{A Lower Bound on the Number of Maximal groups}
\label{sec:lower_bound}

In this section we show that, even for fixed parameters $\eps$, $m$, and
$\delta$, the number of maximal $(m,\eps,\delta)$-groups may be as large as
$\Omega(\tau n^3)$ even in $\R$. This strengthens the result of
Buchin~\etal~\cite{grouping2015}, who establish this bound for entities moving
in $\R^2$.

\setcounter{tmpthm}{\value{theorem}}
\setcounterref{theorem}{lem:lowerbound_fixed_eps}
\addtocounter{theorem}{-1}

\begin{lemma}
  For a set \X of $n$ entities, in which each entity travels along a
  piecewise-linear trajectory of $\tau$ edges in $\R^1$, there can be
  $\Omega(\tau n^3)$ maximal $\eps$-groups.
\end{lemma}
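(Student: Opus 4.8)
The plan is to construct an explicit family of $n$ entities moving in $\R^1$ with piecewise-linear trajectories of $\tau$ edges, so that $\Omega(\tau n^3)$ combinatorially distinct maximal $\eps$-groups arise. Since the bound is a product of three factors, I would aim for a construction that is the ``product'' of three independent mechanisms: roughly $\Omega(\tau)$ opportunities in time, and $\Omega(n^2)$ distinct group-identities (or $\Omega(n)$ choices of a ``left'' boundary times $\Omega(n)$ choices of a ``right'' boundary) realized at each such time opportunity. The key is that \emph{maximality} in the $\eps$-group sense is combinatorial: two maximal groups are distinct if they differ as a set of entities \emph{or} as a time interval, so repeatedly re-forming a group with a slightly different membership, or over a fresh time interval, each counts separately.

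The concrete mechanism I would use relies on the span characterization from Observation~\ref{obs:connected}: a set $G$ is $\eps$-connected at time $t$ exactly when every gap between consecutive entities in its span $S_G(t)$ is at most $\eps$. So I would arrange entities into a few ``blocks'' on the line. First I would create a \emph{cluster of $\Theta(n)$ tightly packed entities} that always stay within $\eps$ of each other; any contiguous sub-range of this cluster is $\eps$-connected, giving $\Omega(n^2)$ distinct connected sets (choices of lowest and highest member). Next I would add a second group of $\Theta(n)$ entities whose trajectories repeatedly \emph{oscillate}, so that over the course of $\tau$ linear pieces they alternately come within $\eps$ of the cluster (merging) and move away (splitting). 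Each oscillation period forces the previously-alive maximal groups to die (as new instances) and new ones to be born, multiplying the group count by $\Omega(\tau)$. The careful part is ensuring that at each oscillation the \emph{maximal} group changes combinatorially — that a genuinely new member is added or removed so that the new $(G,I)$ is not merely the continuation of an old interval but a fresh combinatorial instance, and that the $\Omega(n^2)$ sub-ranges each individually achieve maximality rather than being dominated by a single all-encompassing supergroup.

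To guarantee that distinct sub-ranges are simultaneously maximal, I would stagger the entities so that at the critical time the spacing is monotone: place the cluster members at positions with \emph{strictly increasing gaps} toward both ends, so that as $\eps$ (fixed here, but realized through geometry) is just large enough to connect a given sub-range but not its neighbours, each sub-range is a distinct maximal group bounded on each side by a gap exceeding $\eps$. Combined over the $\tau$ time-opportunities created by the oscillating block, this yields $\Omega(\tau)\cdot\Omega(n^2) = \Omega(\tau n^3)$ maximal groups (after setting the block sizes to $\Theta(n)$). I would verify the trajectory-vertex budget: each entity uses $O(\tau)$ linear pieces, matching the hypothesis, and the oscillation period is chosen so that exactly $\Theta(\tau)$ distinct merge/split events occur.

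\emph{The main obstacle} I anticipate is the bookkeeping of maximality: I must certify that the $\Omega(n^2)$ candidate sub-ranges at a fixed time are genuinely maximal $\eps$-groups (each neither a size-maximal subset of a larger connected set, per Observation~\ref{obs:maximal_combinatorial_group}, nor dominated), rather than all being absorbed into one big group. This forces a delicate geometric layout — the gaps must be tuned so that no single larger connected set contains a given sub-range over the same interval, which means inserting controlled ``barrier'' gaps (slightly larger than $\eps$) between consecutive blocks and between the outermost cluster members and everything else, while still letting the oscillating block create exactly the intended transient connections. Getting these inequalities mutually consistent across all $\Omega(\tau)$ phases, and confirming that the resulting intervals are pairwise combinatorially distinct, is the technical heart of the argument.
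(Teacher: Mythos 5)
Your high-level plan (a product structure with $\Omega(\tau)$ repetitions and an $n\times n$ choice of ``left'' and ``right'' boundaries) points in the right direction, but the construction as described has two genuine gaps, and the first one is fatal to the arithmetic. You multiply $\Omega(\tau)$ oscillation phases by $\Omega(n^2)$ sub-ranges and write that this yields $\Omega(\tau)\cdot\Omega(n^2)=\Omega(\tau n^3)$; that product is $\Omega(\tau n^2)$, a full factor of $n$ short. The shortfall is not repairable within your setup: your oscillating block generates only $O(\tau n)$ merge/split events in total (each of the $\Theta(n)$ oscillators meets the cluster $O(\tau)$ times), and by Lemma~\ref{lem:per_vertex} each such event can start or end only $O(n)$ maximal groups, so any construction of this shape is capped at $O(\tau n^2)$ maximal groups. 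To reach $\Omega(\tau n^3)$ you need $\Omega(n^2)$ connect/disconnect events per repetition, each spawning $\Omega(n)$ groups. This is exactly what the paper's construction achieves: within a single arrangement of lines it builds a grid of two $\Theta(n)$-size families (rotated by $45+\delta$ degrees) whose $\Omega(n^2)$ crossings are organised into $\Omega(n)$ ``columns'', and each column of $\Omega(n)$ time-staggered crossings produces $\Omega(n^2)$ maximal groups.

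The second gap is your mechanism for certifying maximality of the $\Omega(n^2)$ sub-ranges, which cannot work with a fixed $\eps$ as described. If the cluster's consecutive gaps are all at most $\eps$, every proper sub-range is dominated by the full component over the same interval, hence not maximal. Your proposed fix --- gaps strictly increasing toward the ends so that each sub-range is ``bounded on each side by a gap exceeding $\eps$'' --- cannot hold for two nested sub-ranges simultaneously: if $[j..z]\subset[j'..z']$ and $[j'..z']$ is internally connected, then the boundary gap of $[j..z]$ is at most $\eps$, contradicting that it exceeds $\eps$. At any fixed time the maximal $\eps$-connected sets are disjoint components, so there are at most $n$ of them, and with monotone spatial gaps a dissolution produces only a nested chain of $O(n)$ maximal groups. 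The missing idea is temporal, not spatial: one must stagger both the formation times and the break times of the $n$ consecutive ``links'', as the paper does --- pair $(r_j,b_{ij})$ is connected exactly during $I_{ij}=[t_{ij}-\Delta,\,t_{ij}+\Delta]$ with $t_{i1}<t_{i2}<\cdots$ --- so that each consecutive range of links $[j..z]$ is $\eps$-connected exactly on its own interval $[t_{iz}-\Delta,\,t_{ij}+\Delta]$. Smaller ranges then live on strictly longer intervals, which is precisely what makes every one of the $\Omega(n^2)$ ranges maximal in both size and duration (no supergroup survives its whole interval, and the interval cannot be extended), rather than being absorbed into the largest group as in your layout.
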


\setcounter{theorem}{\value{tmpthm}}

\begin{proof}
  We build a construction in which all entities move along lines that yields
  $\Omega(n^3)$ groups. Repeating this construction $\Omega(\tau)$ times
  produces the claimed bound.

  Partition \X into three sets $R$, $B$, and $H$, with $|R|=|B|=k = \Omega(n)$
  and $|H| = k/2$. The lines (forming the trajectories of the entities) in $R$
  and $B$ form a grid, that we rotate by $45+\delta$ degrees, for some small
  $\delta > 0$. Let $R = \{r_1 \codots r_k\}$, $B = \{b_1 \codots b_k\}$,
  denote the lines in increasing order. See
  Fig.~\ref{fig:lowerbound_fixed_eps}. We partition vertices of the rotated
  grid (i.e. the intersection points of $R$ and $B$) into ``columns'' $C_1
  \codots C_{\Omega(n)}$, where $C_i$ is of the form $\{ (r_j,b_k),
  (r_{j+1},b_{k-1}) \codots \}$, for some $j$ and $k$. For a given column
  $C_i$, and a given line $r_j \in R$, let $b_{ij} \in B$ be the line that
  intersects $r_j$ in $C_i$, and let $t_{ij}$ be the time of the intersection.

  Finally, for each entity $r_j$, $j$ odd, we place a line $h \in H$ through
  the points $r_j \cap b_1$ and $r_{j+1} \cap b_2$. So, in every column $C_i$,
  $h$ intersects some $r_\ell$ in $r_\ell \cap b_{i\ell}$\footnote{Note that we
    can easily perturb the lines to avoid parallel lines and points in which
    three lines intersect.}. We scale the entire construction such that the
  distance between two consecutive lines in $h$ is larger than $\eps$, and
  the distance between an intersection point $r_i \cap b_{ij}$ in column $i$
  and the lines $h_{ij}^-$ and $h_{ij}^+$ in $H$ directly above and below it is
  at most $\eps$.

  \begin{figure}[tb]
    \centering
    \includegraphics{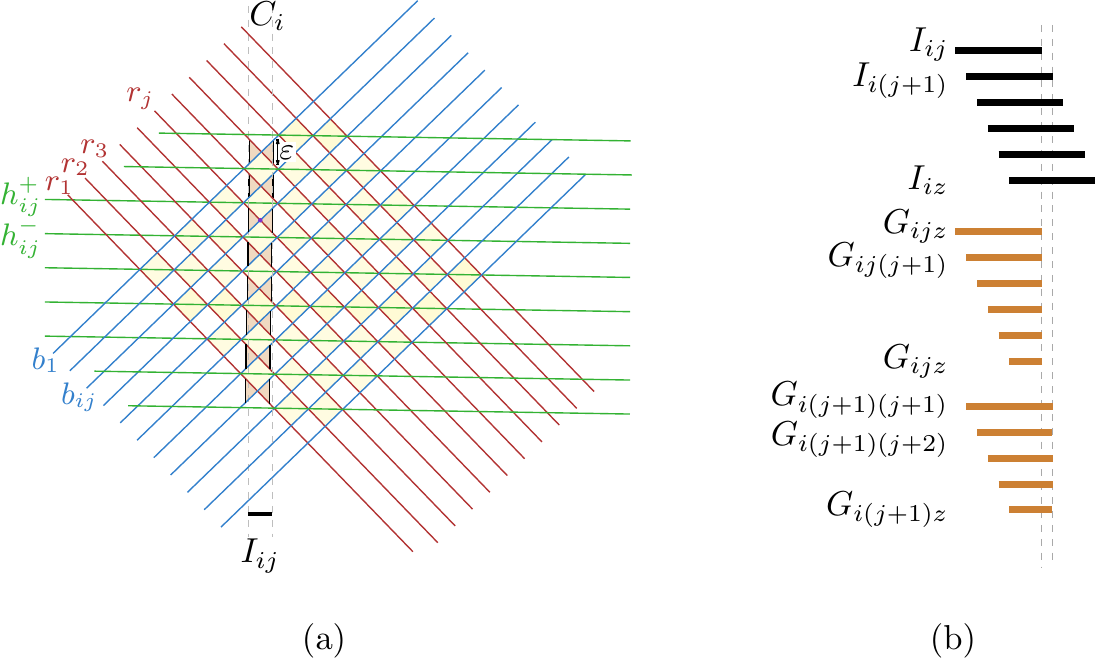}
    \caption{(a) The construction in which the entities move along lines that
      yields $\Omega(n^3)$ maximal $\eps$-groups (for a fixed $\eps$). The
      background color indicates that entities are connected. (b) The intervals
      $I_{ij}$ during which $r_j$ is directly connected to $b_{ij}$ (black), and the
      intervals during which the sets $G_{ijz}$ are maximal (orange).}
    \label{fig:lowerbound_fixed_eps}
  \end{figure}

  Consider a column $C_i$, with $i$ even. Entity $r_j$ is directly connected to
  $b_{ij}$ during some time interval $I_{ij} = [t_{ij} - \Delta, t_{ij} +
  \Delta]$, for some $\Delta$. Since we rotated the grid by $45+\delta$
  degrees, we have $t_{ij} < t_{i(j+1)}$ for every $j$. Hence, every interval
  $I_{i(j+1)}$ starts and ends slightly later than $I_{ij}$. Note that at any
  time during $I_{ij}$, either $r_j$ or $b_{ij}$ is directly connected to
  $h_{ij}^-$. The same holds for $h_{ij}^+$. This means that for any
  consecutive range $[j..z]$, the set of entities $X_{ijz} = \bigcup_{\ell \in
    [j..z]} \{r_\ell,b_{i\ell},h_{i\ell}^-,h_{i\ell}^+\}$ is $\eps$-connected
  during $I_{ijz} = I_{ij} \cap I_{iz} = [t_{iz}-\Delta,t_{ij}+\Delta]$. It is
  easy to see that $I_{ijz}$ is maximal in duration.

  It now follows that a column $C_i$ with $\Omega(n)$ intersection points
  ``generates'' $\Omega(n^2)$ maximal $\eps$-groups. Let $r_j \codots r_z$ be
  entities from $R$ involved in column $C_i$. At time $t_{i\ell} - \Delta$, for
  $\ell \in [j..z]$, the group $G_{ij\ell}$ starts as a new maximal group. When
  $r_j$ and $b_{ij}$ disconnect, all these groups end, and the groups
  $G_{i(j+1)(j+1)} \codots G_{i(j+1)z}$ are discovered as new maximal
  $\eps$-groups. Since we have $\Omega(n)$ columns (that have $\Omega(n)$
  intersection points) we get $\Omega(n^3)$ maximal $\eps$-groups as
  desired.
\end{proof}

Note that we can choose the speed of the entities such that this construction
holds for any choice of $\delta$. Furthermore, the construction holds even for
minimal group sizes of $\Omega(n)$.

\end{document}